\documentclass[acmsmall,screen]{acmart}

\usepackage{wrapfig}
\usepackage{tikz}
\usepackage{subcaption}
\usepackage{thmtools} 
\usepackage{thm-restate}
\usepackage{enumerate}
\usepackage[inline]{enumitem}

\newcommand{\figlabel}[1]{\label{fig:#1}}
\newcommand{\figref}[1]{Figure~\ref{fig:#1}}

\newcommand{\seclabel}[1]{\label{sec:#1}}
\newcommand{\secref}[1]{Section~\ref{sec:#1}}

\newcommand{\deflabel}[1]{\label{def:#1}}
\newcommand{\defref}[1]{Definition~\ref{def:#1}}

\newcommand{\thmlabel}[1]{\label{thm:#1}}
\newcommand{\thmref}[1]{Theorem~\ref{thm:#1}}
\newcommand{\proplabel}[1]{\label{prop:#1}}
\newcommand{\propref}[1]{Proposition~\ref{prop:#1}}
\newcommand{\lemlabel}[1]{\label{lem:#1}}
\newcommand{\lemref}[1]{Lemma~\ref{lem:#1}}
\newcommand{\corlabel}[1]{\label{cor:#1}}
\newcommand{\corref}[1]{Corollary~\ref{cor:#1}}
\newcommand{\claimlabel}[1]{\label{claim:#1}}
\newcommand{\claimref}[1]{Claim~\ref{claim:#1}}

\newcommand{\applabel}[1]{\label{app:#1}}

\theoremstyle{definition}

\newtheorem{theorem}{Theorem}[section]
\newtheorem{lemma}[theorem]{Lemma}
\newtheorem{proposition}[theorem]{Proposition}
\newtheorem{claim}[theorem]{Claim}
\newtheorem{corollary}[theorem]{Corollary}

\newtheorem{definition}{Definition}[section]

\newenvironment{myparagraph}[1]{\vspace{0.1in}\noindent{\bf #1.}}{}

\DeclareMathAlphabet{\mathpzc}{OT1}{pzc}{m}{it}

\newcommand{\nats}{\mathbb{N}}
\newcommand{\natsp}{\mathbb{N}_{>0}}
\newcommand{\set}[1]{\{#1\}}
\newcommand{\setpred}[2]{\{#1 \,|\, #2\}}
\newcommand{\proj}[2]{#1|_{#2}}
\renewcommand{\emptyset}{\varnothing}

\newcommand{\partialto}{\hookrightarrow}
\newcommand{\vect}[1]{\overline{#1}}

\newcommand{\powset}[1]{\mathcal{P}(#1)}

\newcommand{\ev}[1]{\langle #1\rangle}
\newcommand{\mems}{\mathcal{X}}

\newcommand{\threads}{\mathcal{T}}

\newcommand{\ThreadOf}[1]{\mathsf{thr}(#1)}
\newcommand{\OpOf}[1]{\mathsf{op}(#1)}
\newcommand{\MemOf}[1]{\mathsf{mem}(#1)}

\newcommand{\events}[1]{\mathsf{Events}_{#1}}
\newcommand{\tr}{\sigma}
\newcommand{\rf}[1]{\mathsf{rf}_{#1}}
\newcommand{\po}[1]{\mathsf{po}_{#1}}
\newcommand{\reads}[1]{\mathsf{Reads}_{#1}}
\newcommand{\writes}[1]{\mathsf{Writes}_{#1}}

\newcommand{\maz}{\mathcal{M}}
\newcommand{\mazeq}{\equiv_{\maz}}

\newcommand{\indrel}{\mathbb{I}}

\newcommand{\labs}{\Sigma}

\newcommand{\rfeq}{\equiv_{\rf{}}}

\newcommand{\slcsymb}{{\rightsquigarrow_{s}}}
\newcommand{\slice}{\protect\slcsymb}
\newcommand{\sliceto}[2]{{#1} \slice {#2}}
\newcommand{\kslice}[1]{\overset{\scalebox{0.6}{(${#1}$)}}{\rightsquigarrow}_{s}}
\newcommand{\ksliceto}[3]{{#2} \kslice{#1} {#3}}
\newcommand{\mslice}{\kslice{\infty}}

\newcommand{\sliceq}{\rightsquigarrow^*_{s}}
\newcommand{\sliceqto}[2]{{#1} \sliceq {#2}}

\newcommand{\notkslice}[1]{\overset{\scalebox{0.6}{(${#1}$)}}{\not\rightsquigarrow}_{s}}
\newcommand{\notksliceto}[3]{{#2} \notkslice{#1} {#3}}

\newcommand{\sheight}[2]{\sf{h}_{s}(#1, #2)}

\newcommand{\kmaz}[1]{\overset{\scalebox{0.6}{(${#1}$)}}{\equiv}_{\maz}}
\newcommand{\Inv}{\textsf{inversions}}

\newcommand{\pre}[2]{\textsf{Pre}_{#2}(#1)}
\newcommand{\post}[2]{\textsf{Post}_{#2}(#1)}

\newcommand{\aut}{\mathcal{A}}
\newcommand{\states}{Q}
\newcommand{\trns}{\delta}
\newcommand{\accpt}{F}
\newcommand{\init}{q_0}

\newcommand{\opcl}[2]{{#1}^{\kslice{#2}}}
\newcommand{\autcl}[1]{\opcl{\aut}{#1}}

\newcommand{\cnst}{\sf cnst}
\newcommand{\opcnst}[1]{{#1}^{\cnst}}
\newcommand{\autcnst}{\opcnst{\aut}}
\newcommand{\statescnst}{\opcnst{\states}}

\newcommand{\memb}{\sf memb}
\newcommand{\opmemb}[1]{{#1}^{\memb}}
\newcommand{\autmemb}{\opmemb{\aut}}
\newcommand{\statesmemb}{\opmemb{\states}}
\newcommand{\trnsmemb}{\opmemb{\trns}}
\newcommand{\accptmemb}{\opmemb{\accpt}}
\newcommand{\initmemb}{\opmemb{\init}}

\newcommand{\subsq}[1]{\mathbf{i}}

\newcommand{\grains}{\mathcal{G}} 
\newcommand{\graineq}{\equiv_{\grains}}

\newcommand{\opfont}[1]{\texttt{#1}}
\newcommand{\acq}{\opfont{acq}}
\newcommand{\rel}{\opfont{rel}}
\newcommand{\rd}{\opfont{r}}
\newcommand{\wt}{\opfont{w}}

\newcommand{\btx}{\opfont{bgn}} 
\newcommand{\etx}{\opfont{end}} 

\newcommand{\lk}{\ell}

\newcommand{\ord}[2]{\leq^{#1}_{\mathsf{#2}}}

\newcommand{\trord}[1]{\ord{#1}{}}

\newcommand*\squaresmall[1]{\tikz[baseline=(char.base)]{
  \node[shape=rectangle,draw=black,fill=green!80!black!20!white,inner sep=1.5pt, solid] (char) {\textcolor{black}{\tt\footnotesize{#1}}};}}

\def\marrow{{\marginpar[\hfill\(\longrightarrow\)]{\(\longleftarrow\)}}}
\def\umang#1{({\sc Umang says: }{{\color{brown}\marrow\sf #1})}}

\newcommand{\executionfull}[9]{
\scalebox{#3}{
  \begin{tikzpicture}
    \foreach \x in {1,...,#1}
    \node[right] at (#4*\x+#6, #8) {$T_{\x}$}; 
    \draw (#7,0) -- (#1*#4+#7,0); 
    \pgfmathsetmacro{\y}{1};
    #2 
    \draw (#7,0) -- (#7,-#5*\y); 
    \draw (#1*#4+#7,0) -- (#1*#4+#7,-#5*\y); 
    \draw (#7,-#5*\y) -- (#1*#4+#7,-#5*\y); 
    \ifthenelse{#9 = 1}{
      \foreach \x in {2,...,#1}
      \draw[dashed] (#4*\x+#7-#4,0) -- (#4*\x+#7-#4,-#5*\y); 
    }{}
  \end{tikzpicture}
}
}

\newcommand{\figevfull}[9]{
\ifthenelse{#7 = 1}{
  \ifthenelse{#8 = -1}{
    \node [left] at (#5,(-#4*\y))  {\pgfmathprintnumber{\y}};%
  }{
    \node [left] at (#5,(-#4*\y))  {#9};%
  }
}{}
\node at (#1*#3 + #6,(-#4*\y)) {$ #2 $};%
\pgfmathsetmacro{\y}{\y+1};
}

\begin{document}

\title{Parametrizing Reads-From Equivalence for Predictive Monitoring}

\author{Azadeh Farzan}
\orcid{0000-0001-9005-2653}
\affiliation{%
  \institution{University of Toronto}
  \city{Toronto}
  \country{Canada}
}
\email{azadeh@cs.toronto.edu}

\author{Umang Mathur}
\orcid{0000-0002-7610-0660}
\affiliation{%
  \institution{National University of Singapore}
  \city{Singapore}
  \country{Singapore}
}
\email{umathur@nus.edu.sg}


\begin{abstract}
Predictive runtime monitoring asks whether 
a given execution $\tr$ of a concurrent program can be used
to \emph{soundly predict} the existence of another execution $\rho$
(obtained by reordering $\tr$ without re-executing the program)
that satisfies a property $\varphi$.
Such techniques enhance the coverage of traditional runtime monitoring
and mitigate the effects of scheduling non-determinism.

The effectiveness and efficiency of predictive monitoring are governed by 
two, often conflicting, factors:
(a) the complexity of the specification $\varphi$, and 
(b) the expressive power of the space of reorderings that must be explored.
When one considers the largest space of reorderings, 
namely those induced by \emph{reads-from equivalence},
the predictive monitoring problem becomes intractable,
even for very simple specifications such as data races. 
At the other extreme, restricting reasoning to commutativity-based reorderings 
in the style of Mazurkiewicz's trace equivalence
yields fast and space-efficient
algorithms for simple properties.
However, under trace equivalence, predictive monitoring 
remains intractable for the full class of regular language specifications,
despite the significantly reduced predictive power arising from the smaller
space of reorderings.

In this work, we address this fundamental tradeoff through an orthogonal approach
based on \emph{parametrization}.
We introduce a notion of \emph{sliced reorderings}, along with its parametric
generalization, \emph{$k$-sliced reorderings}.
Informally, an execution $\rho$ is a $k$-sliced reordering of an execution $\tr$
if $\tr$ can be partitioned into $k+1$ ordered subsequences such that
concatenating these subsequences yields $\rho$,
while preserving program order and reads-from constraints.

Our main results are twofold.
First, we show that $k$-sliced reorderings form a strictly 
increasing hierarchy of expressive power that converges to reads-from equivalence as $k$ increases,
establishing completeness of our parametrization in the limit.
Second, for any fixed $k$, the predictive monitoring
problem modulo $k$-sliced reorderings against any regular specification 
can be solved using a constant space
streaming algorithm.
Together, these results position $k$-sliced reorderings as an effective alternative
to existing equivalence relations on concurrent executions, 
yielding a uniform parametrized approach to predictive monitoring,
whose expressive power can be
systematically traded off against computational resources.
\end{abstract}


\maketitle


\section{Introduction}
\seclabel{intro}

Runtime verification has emerged as a promising and
practical class of techniques for ensuring reliability
of software.
The core algorithmic question one asks here is the standard {\em monitoring problem} 
that asks whether a given program run 
belongs to a specification language (often representing erroneous program runs) 
presented as a {\em monitor}. 
If the specification is {\em regular} then the {\em monitor} 
can be expressed as a finite state machine, 
and the problem can be solved in {\em constant space}, 
that is, if we assume the size of the alphabet 
to be constant and only consider the run as an input to this 
algorithm (and not the {\em monitor} itself). 

This work is motivated by the {\em predictive monitoring problem} that arises
in the context of concurrent programs. 
Here, even when the run (of a concurrent program) under consideration is not 
erroneous according to the specification, 
we ask if one can {\em predict} the existence of another erroneous program run.  
At a high level, thus, predictive monitoring offers the promise of
enhancing the coverage of vanilla (non-predictive) monitoring techniques.
The predictive monitoring problem is defined in terms of three components: 
(1) a concurrent program run $\sigma$, 
(2) a specification language $S$, which defines a set of erroneous runs of a concurrent program, 
and (3) a sound {\em predictor}, which using $\sigma$,
{\em soundly} reasons about a set of other runs that 
are guaranteed to be valid executions of the same program. 
A predictor is typically defined based on a 
{\em sound equivalence} relation $\equiv$.  
{\em Soundness} of the relation $\equiv$ guarantees the following: 
$\sigma \equiv \rho$ implies that, 
$\sigma$  is a feasible run of the program iff $\rho$ is a feasible run of the program.
%
Together, the {\em predictive monitoring problem},
for a specification $S$ and a sound equivalence relation $\equiv$,
can be formally stated as:
\begin{quote}
Given a program run $\sigma$ as input, check if there exists a run
$\rho$ such that $\rho \equiv \sigma$ {\bf and} $\rho \in S$.
\end{quote}

Ideally, one would like to have a similar algorithmic setup 
for the predictive monitoring problem as the original vanilla monitoring problem: 
a {\em constant space streaming} algorithm, that is,
a streaming algorithm whose memory usage
does not depend on the length of the input run, 
but only depends on how sophisticated the specification is. 
It must be noted that the task at hand depends on both 
the choice of (1) the equivalence relation as well as on (2) the specification. 
Indeed, each of these factors can orthogonally 
impact the complexity of predictive monitoring.

The {\em largest sound} equivalence relation \cite{serbanuta2013maximal,Abdulla2019} 
that can be used as a predictor is the {\em reads-from equivalence} 
(denoted $\rfeq$). 
It declares two concurrent program executions equivalent 
if (1) they have the same set of events, 
(2) order events of a given thread in the same way and ensure that 
(3) reads observe the same writes (see~\secref{prelim} for a formal definition).
Recently, it was shown ~\cite{FarzanMathur2024} that 
if one fixes the specification at the level of a {\em very simple} 
regular specification called {\em causal concurrency} 
(which asks if two events can be reordered in an equivalent run)
then the predictive monitoring problem modulo $\rfeq$ 
cannot be solved in a constant space streaming fashion. 

Another prominent notion of equivalence is
that of Mazurkiewicz's commutativity-based trace equivalence~\cite{Mazurkiewicz87}
which deems two executions equivalent if 
they can be transformed into each other through repeated swaps 
of neighbouring non-conflicting events (see~\secref{prelim} for a formal definition).
Trace equivalence has remained a popular choice for addressing 
predictive monitoring problems against a specific
class of specifications such as causal concurrency,  data races, 
and conflict serializability~\cite{Farzan2006,ang2023predictive,FM08CAV,Tunc2023,AngMathurPrefixes2024,Elmas07}.
This popularity stems largely from the fact that, against these select few specifications,
predictive monitoring (modulo trace equivalence) can be performed efficiently,
using a constant space streaming algorithm.
Nevertheless, it suffers from two key limitations.
First, aside from a limited class of regular specifications (including those above),
one cannot design constant space predictive monitoring against
arbitrary regular specifications~\cite{Ochmanski85}. 
Second, trace equivalence is known to be a strict refinement of $\rfeq$ and thus has
limited expressive power.
The challenge of going beyond the expressive power of trace equivalence
using bespoke algorithms for data races and deadlocks
has gained a lot of traction owing to its
practical implications~\cite{Huang14,Smaragdakis12,Kini17,Mathur21,OSR2024,Pavlogiannis2020,Mathur2020b,Roemer18,Kalhauge2018,Tunc2023}.
In line with these works, here, we ask the following high-level question:
\begin{quote}
{\em Is there a sound and sufficiently expressive predictor 
for which predictive monitoring
against arbitrary regular specifications can be solved
efficiently (i.e., using a constant space streaming algorithm)?}
\end{quote}

Recently, Farzan and Mathur proposed two new notions of equivalence
based on {\em grain commutativity}~\cite{FarzanMathur2024}, 
which are strictly more expressive than trace equivalence 
and strictly less expressive than $\rfeq$. 
They showed that, modulo these new equivalences,
the {\em causal concurrency} specification can be predictively monitored 
using a constant space streaming algorithm, despite the larger expressive power than trace equivalence. 
Nevertheless, this proposal also suffers from the same problems: 
these equivalences continue to be strictly less expressive than $\rfeq$ 
{\em and} they do not yield tractable predictive monitoring algorithms
against {\em arbitrary} regular specifications. 
In fact, as we show in \thmref{beyond-trace},
any predictor that subsumes the expressive power of 
trace equivalence also suffers from the same tractability problem as trace equivalence. 
Hence, as such, the solution to this problem does not lie in a magic point 
in the space of predictors that are in between trace equivalence and $\rfeq$. 

Ideally, one wants an equivalence as expressive as 
$\rfeq$ that can be used to predictively monitor any 
regular specification in constant space. Since, this is theoretically impossible, 
this article puts forward a proposal for the next best feasible approach. 
We propose a novel {\em constant space} solution 
to the predictive monitoring problem that 
\squaresmall{G1} works with  {\em any regular} specification, and 
\squaresmall{G2} presents a {\em pay-as-you-go} strategy of an 
increasing measure of expressiveness for the predictor, where 
\squaresmall{G3} this expressiveness can {\em provably} reach the {\em ideal limit} (i.e $\rfeq$).

As such, achieving the combination of goals \squaresmall{G1} and 
\squaresmall{G2} alone is not that hard. 
As an example, consider the 
{\em pay-as-you-go} model for trace equivalence,
where one bounds the number of swaps (say by some number $k \in \nats$).
The resulting parametric version of trace equivalence (where the parameter is this bound $k$),
is a sound predictor that predicts an execution that is at most $k$ swaps away from
an initial execution.
Such a parametrization of trace equivalence directly achieves \squaresmall{G2}
because you can enhance its expressive power by successively increasing
the value of this parameter $k$.
It also helps meet goal \squaresmall{G1} in that, for a fixed value of this parameter,
predictive monitoring can be performed modulo this predictor using a constant space streaming
algorithm against arbitrary regular specifications\footnote{This is easy enough to see for experts in the area, but it does not appear anywhere in the literature, hence we include the formal results in~\secref{kmaz}}.
However, the expressive power of this parametrized predictor, even in the limit, 
does not go beyond trace equivalence 
(which is strictly less expressive than $\rfeq$) and hence
it cannot meet goal \squaresmall{G3}.

Let us now turn our attention to our new parametric predictor
that seamlessly achieves all three goals \squaresmall{G1}, \squaresmall{G2} and \squaresmall{G3}.
It breaks away from the traditional style of commutativity-based reasoning
(for trace and grain equivalences) to an entirely new scheme based on the concept of {\em slices},
which we systematically investigate in this work.

\begin{wrapfigure}[12]{r}{0.4\textwidth}\vspace{-20pt}
\includegraphics[scale=0.4]{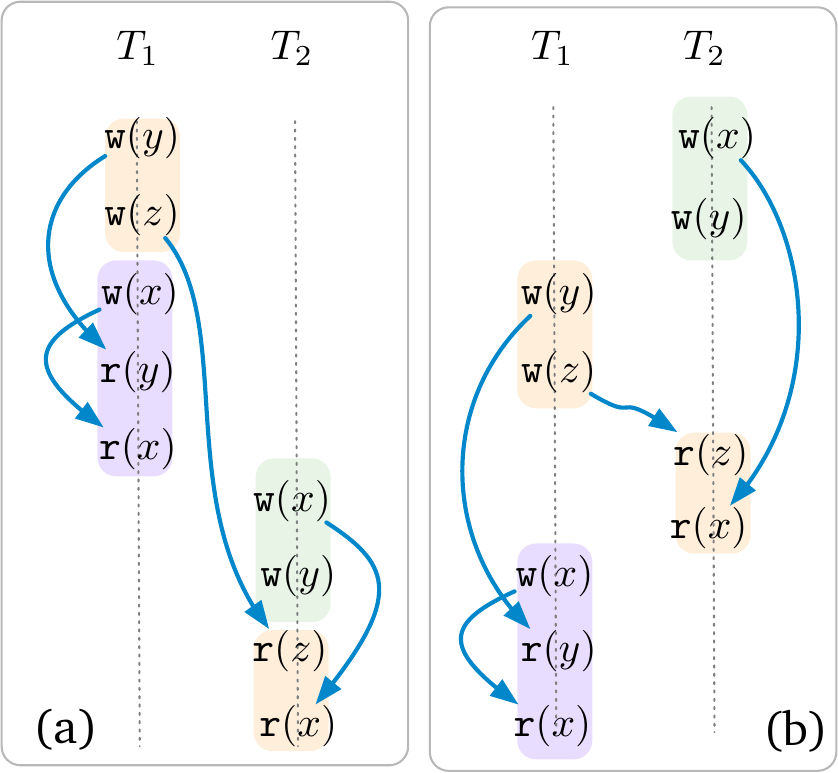}\vspace{-10pt}
\end{wrapfigure}
Let us illustrate how slices work.
Consider the two runs on the right. 
The reads-from relation is illustrated using (blue) arrows. 
It is easy to quickly verify that they are $\rfeq$-equivalent, 
but are not equivalent up to weaker equivalences such as trace or grain equivalence.
Our new predictor nevertheless deems them equivalent, 
and structures its reasoning as follows.
First, it identifies that the run (b) can be divided into
three successive contiguous sub-sequences
(we highlight these using a different color: green, followed by orange, followed by purple).
It then checks that the subsequences thus identified
satisfy two properties:
(i) each subsequence (appearing contiguously in (b))
appears in (a) also as a (possibly dispersed) subsequence, i.e.,
the order of events within the same subsequence 
is preserved across (a) and (b), and
(ii) the rearrangement from (b) to (a) 
does not break the program order or the reads-from relation.

One can rephrase the above reasoning to describe
the transformation from execution (a) to (b).
This transformation is obtained by identifying 
a collection of three subsequences (green, orange and purple) 
that are disjoint and cover all events of the execution (a),
together with an ordering amongst them: green < orange < purple,
and placing them contiguously one after the other to obtain (b), 
in the identified order (green < orange < purple), while ensuring that
condition (ii) holds. 
We call each subsequence a {\em slice} and 
call this transformation 
(from (a) to (b)) as a $(3-1)=2$-slice reordering,
and denote it as $\ksliceto{2}{(a)}{(b)}$.
In general, the number of slices we permit becomes the {\em parameter} 
in the definition of such a move. 
The formal definition of such a parametric move, with parameter $k \in \nats$,
is thus the following:
\begin{quote}
{\bf $k$-slice reordering}. Execution $\rho$ is a $k$-slice reordering of
execution $\sigma$ (denoted $\ksliceto{k}{\sigma}{\rho}$) if $\sigma \rfeq \rho$ and further,
$\sigma$ can be partitioned into an ordered sequence of $k+1$ 
subsequences $\sigma_1, \sigma_2, \ldots, \sigma_{k+1}$
such that $\rho = \sigma_1 \cdot \sigma_2 \cdots \sigma_{k+1}$
is obtained by concatenating these $k+1$ (contiguous) subsequences in order.
\end{quote}

The ask for condition (ii) to hold may appear too permissive,
given that this is no different from asking for reads-from equivalence.
The difference (from the full-fledged $\rfeq$) crucially lies in
bounding the number of subsequences we allow in such a move,
because doing so also restricts the number of reads-from and program-order that could
have been violated in an arbitrary ordered concatenation of the subsequences identified.
In this sense, this parameter controls the complexity of the move. 
In fact, the parameter $k$ also governs expressiveness:
the more slices we allow, 
larger the set of reorderings one can obtain from $\sigma$.
In \propref{k-slice-gradation}, we show a stronger version of 
this claim: $k$-slice reorderings are strictly more permissive
than $m$-slice reorderings when $k > m$. 
In other words, the increase in the parameter strictly increases 
the expressiveness of this parametric predictor, i.e., 
a predictor induced by our notion of reorderings meets goal \squaresmall{G2}.

The extreme values of the parameter are also noteworthy.
Indeed, for the smallest value $k=0$ for this parameter,
the only run that one can obtain from $\sigma$ is $\sigma$ itself,
and thus this class of reorderings coincides with the identity equivalence relation.
The more interesting extreme is the limit where we do not bound the
number of slices; we denote the reordering relation thus induced as $\mslice$.
We show that, in fact, $\mslice$ coincides with $\rfeq$, 
i.e., the expressive power of the predictor based on sliced reorderings
reaches that of $\rfeq$ in the limit, and thus it meets goal \squaresmall{G3}.


The key result in this work is that the 
predictive monitoring modulo $k$-slice reorderings can be performed 
in constant space in a streaming fashion, against arbitrary 
regular specifications (\thmref{regularity-of-forward-closure});
thus meeting goal \squaresmall{G1}. 
We establish this by showing that there is an automaton that accepts
the set of $k$-sliced reorderings of executions in the regular specification.

While $k$-slice reorderings offer a suitable way to parametrize
reads-from equivalence, in this work we also ask 
whether {\em stacking} $k$ slices 
in the way we define this class of reorderings, as a one-shot move,
is necessary for achieving the desired properties.
In particular, we investigate this question by considering the 
simplest $k$-slice reordering move, that is, 
when we set $k = 1$, as an atomic move. 
We call this simply a {\em slice reordering} and denote it as $\slice{}$.
Now, similar to how {\em swap} moves (commuting consecutive independent events)
can be combined in a sequence (a la trace equivalence),
slice reorderings can also be combined in sequence, 
giving us an alternative predictor defined as the 
transitive closure of these individual moves, denoted $\sliceq{}$. 
We give due consideration to this alternative predictor $\sliceq{}$ and
study its properties.
We show that $\sliceq{}$ is strictly more expressive than trace equivalence 
(\thmref{maz-subsumption}) and previous proposals in the literature 
for predictors strictly better than trace equivalence such as 
grains and scattered grains (\thmref{grain-subsumption}),
but comes with two key downsides.
First, $\sliceq{}$
is strictly less expressive than $\rfeq$-equivalence (\thmref{exp-slicestar}).
Second, predictive monitoring up to $\sliceq{}$ 
of slice reorderings admits a linear space lower bound 
for the simple specification of causal concurrency (\thmref{linspace-hardness-slice-star}). 
In other words, this alternative would neither meet goal 
\squaresmall{G2} nor goal \squaresmall{G3}.
This demonstrates why {\em stacking} slices in the manner of $k$-slice 
reorderings proposes is a carefully chosen viable solution in this space.

\noindent
{\bf Organization.}
After setting up notation and recalling prior results in this space in \secref{prelim},
we set out to first study the simpler version \emph{sliced reorderings}
and its transitive and symmetric closure in \secref{slice},
and compare these proposals with trace equivalence and its variants in~\secref{comparison}.
In \secref{stacking-slices}, we then undertake a thorough investigation of the
full proposal of $k$-slice reorderings, examining its expressiveness and computational questions around it.
In \secref{problem} and \secref{algo} we study the predictive monitoring
problem in the context of these reorderings, establishing monitorability
as well as tight lower bounds.
We discuss closely related work in \secref{related} and conclude in \secref{conclusion}.
To keep the presentation concise, some proofs
have been delegated to the appendix.


\section{Preliminaries}
\seclabel{prelim}

\subsection{Modeling executions of programs}

\myparagraph{Concurrent program runs and events}
In our work, we follow the tradition of modeling concurrent program executions
or \emph{runs} as sequences of events performed by different threads.
Each event is a tuple of the form\footnote{As such, each event also has a 
unique identifier $id$ and is more accurately represented as 
$e = [id, lab]$ with label $lab = \ev{t, op}$. 
We will omit this identifier in favor of conciseness of presentation.} 
$e = \ev{t, op(x)}$,
where $t = \ThreadOf{e} \in \threads$ is the identifier of the thread that performs
$e$, $op = \OpOf{e} \in \set{\wt, \rd}$ 
describes the (write or read) operation that was performed
at this event and
the subject of the operation is the memory location
 $x = \MemOf{e} \in \mems$.
For a run $\tr = e_1 \cdot e_2 \cdot \ldots\cdot e_n$, we use
$\events{\tr} = \set{e_1, \ldots, e_n}$ to represent the set of events of $\tr$.
Given the language theoretic treatment of dynamic analysis problems in our work,
it will be convenient to clearly demarcate the \emph{alphabet}
of runs as the set $\labs = \setpred{\ev{t, op(x)}}{t \in \threads, op \in \set{\wt, \rd}, x \in \mems}$.
As with prior works on dynamic analyses of concurrent programs~\cite{Smaragdakis12,Kini17,Mathur18,Mathur21,Pavlogiannis2020,OSR2024,FarzanMathur2024},
unless otherwise stated, 
we will assume that the size of this alphabet is constant.
Thus the number 
of events in executions will be used to determine the size of the input
for the complexity-theoretic and algorithmic treatment of dynamic analysis problems
we undertake in this work.

\myparagraph{Program order and reads-from mapping}
For our presentation, it will be helpful to denote some
semantic relations and functions.
First, for a run $\tr = e_1 \cdot e_2 \cdots e_n$,
we will use $\trord{\tr}$ to denote the
unique total order on $\events{\tr}$ such that for every $i<n$, $e_i \trord{\tr} e_{i+1}$.
Next, we use $\po{\tr}$ to denote the \emph{program order} of $\tr$,
which is defined to be the smallest partial order on $\events{\tr}$ such that
whenever we have $e \trord{\tr} e'$ and $\ThreadOf{e} = \ThreadOf{e'}$,
then we require that $(e, e') \in \po{\tr}$.
Finally, the \emph{reads-from} mapping $\rf{\tr}$ 
maps read events of $\tr$ to their corresponding write events in $\tr$.
Formally, let $\reads{\tr}, \writes{\tr} \subseteq \events{\tr}$
be the set of read
and write events of $\tr$.
Then, the reads-from of $\tr$ is a partial
mapping $\rf{\tr}: \reads{\tr} \partialto \writes{\tr}$ such that
for each $e_r \in \reads{\tr}$, the write $e_w = \rf{\tr}(e_r)$, if one exists, satisfies:
(1) $\MemOf{e_r} = \MemOf{e_w}$ (say $x$),
(2) $e_w \trord{\tr} e_r$, and 
(3) there is no event $e'_w \neq e_w \in \events{\tr}$ such that $\OpOf{e'_w} = \wt$, $\MemOf{e'_w} = x$ and $e_w \trord{\tr} e'_w \trord{\tr} e_r$.
Further, if $\rf{\tr}(e_r)$ is not defined, then we require that 
there is no write event $e_w$ such that $\MemOf{e_r} = \MemOf{e_w}$ and $e_w \trord{\tr} e_r$.

\subsection{Reorderings and equivalences on executions}
\seclabel{prelim-equivalence}

Program analysis techniques such as those that rely on
enumerating executions as in 
partial order reduction based model checking~\cite{Flanagan2005,Abdulla2019,SourceSetsDPORAbdulla2017,Agarwal2021,Kokologiannakis2022}, fuzz testing~\cite{RFFWolff2024}, 
randomized testing~\cite{sen2007effective,pos,tapct} 
as well as those that only infer the presence of bugs
from single executions, as in predictive analyses~\cite{Said11,huang2015gpredict6,Huang14,Tunc2023,Mathur21}
crucially leverage equivalences, and more generally, reorderings
on concurrent program executions to effectively reduce the search space
of program interleavings.
In such applications, one typically works with a
reordering relation $R \subseteq \labs^* \times \labs^*$
such that observations made on a run $\tr$ also generalize
to all other runs $\rho$ for which $(\tr, \rho) \in R$.
Such a generalization is possible if $R$ is \emph{sound},
as we discuss next.

\myparagraph{Soundness of reordering relations}
A reordering $R$ is said to be sound if, intuitively,
whenever $(\tr, \rho) \in R$, then $\rho$ can be generated by
every concurrent program that can generate $\tr$~\cite{serbanuta2013maximal}.
This can be ensured by, in turn, ensuring that 
$\rho$ preserves the control and data flow of the underlying program that 
$\tr$ was obtained from (no matter what the program is).
A direct consequence of using a sound equivalence relation $R$ 
in the context of partial-order reduction style model checking or other forms of exploration based
testing is that it suffices to explore
only one execution per equivalence class.
Dually, a predictive testing technique that
observes a single execution $\tr$ but reasons about 
the entire set $\setpred{\rho}{(\tr, \rho) \in R}$ of 
$R$-reorderings of $\tr$ is, by design, sound (i.e., does not report false positives)
when $R$ is sound.
In the predictive analysis literature, the notion of \emph{correct reorderings},
proposed by Smaragdakis et al~\cite{Smaragdakis12} has been widely adopted, and is known to be the
\emph{largest} sound relation on runs.
In the model checking literature, its analogue \emph{reads-from} equivalence,
denoted $\rfeq$, has emerged as a popular choice of equivalence, and is known to be the
\emph{largest sound} equivalence on runs~\cite{FarzanMathur2024}.
In general, reorderings can relate runs of different lengths,
as is the case with correct reorderings~\cite{Smaragdakis12}, allowing for
enhanced coverage through prefix reasoning in the context of predictive 
analysis~\cite{AngMathurPrefixes2024}, we will restrict our
focus on \emph{length-preserving} 
reorderings --- a reordering relation $R$ is length-preserving
if $(\tr, \rho) \in R \implies |\tr| = |\rho|$ ---
for a cleaner presentation and for wider applicability
such as in model checking applications that rely on this restriction.
All the results of this paper can be straightforwardly generalized
to the setting of non-length-preserving reordering relations.
With the restriction of length-preserving reordering relations, for the purpose of
our work, soundness can be defined purely
in terms of reads-from equivalence $\rfeq$ (which is length preserving), 
whose precise definition we present soon.

\begin{definition}[Soundness of a reordering relation]
\deflabel{sound-reordering}
A length-preserving reordering relation $R \subseteq \labs^* \times \labs^*$ 
is sound if $R \subseteq \rfeq$.
\end{definition}

In the following, we survey
different notions of reorderings that have emerged in the literature.

\myparagraph{Reads-from equivalence}
The reads-from equivalence $\rfeq$ is the smallest equivalence on $\labs^*$
such that for two runs $\tr$ and $\rho$, if $\po{\tr} = \po{\rho}$
and $\rf{\tr} = \rf{\rho}$, then $\tr \rfeq{} \rho$.
In our presentation, reads-from equivalence is sound by definition, and is in fact
the largest sound reordering relation.


\myparagraph{Trace equivalence}
Trace theory~\cite{Mazurkiewicz87} provides a classic example of a commutativity
based equivalence for systematically defining equivalences over strings.
Here, one fixes an irreflexive and symmetric independence relation $\indrel \subseteq \labs \times \labs$
to demarcate when two events must be considered independent or \emph{commuting},
and then deems two runs (i.e., strings over $\labs$)
equivalent if one can be obtained from another through repeated \emph{swaps} of neighbouring
independent events.
Formally, given a choice of independence relation $\indrel$,
the trace equivalence $\mazeq$ induced by $\indrel$ is
defined to be the smallest equivalence for which, whenever
$(e,f) \in \indrel$, then the runs $\tr = \tr_1 \cdot e \cdot f \cdot \tr_2$ and $\rho = \tr_1 \cdot f \cdot e  \cdot\tr_2$ are equivalent, i,e, $\tr \mazeq \rho$.
In our context, we will consider the usual independence relation 
of non-conflicting events, i.e.,
$\indrel = \setpred{(\ev{t_1, op_1(x_1)}, \ev{t_2, op_2(x_2)})}{t_1 \neq t_2 \land (x_1 = x_2 \implies op_1 = op_2 = \rd)}$, since it
induces the largest sound trace equivalence~\cite{FarzanMathur2024}.


\myparagraph{Grain and scattered grain commutativity}
Recently, Farzan and Mathur~\cite{FarzanMathur2024} proposed \emph{grain and scattered grain}
based reasoning to soundly approximate
reasoning based on reads-from equivalence, while offering higher predictive power
than trace equivalence, without compromising on the algorithmic benefits of trace equivalence based reasoning.
Formally, an execution $\rho$ is a grain-reordering of execution $\tr$
if there is a partition of $\tr$ into contiguous subsequences (called \emph{grains}) $\tr = g_1 \cdot g_2 \cdots g_k$ such that $\rho$ can be obtained from $\tr$ by repeated swaps of the grains $G = \set{g_1, \ldots, g_k}$
under the grain independence relation $\indrel_{G}$ which marks two grains $g, g'$ independent 
if (a) they do not share a thread and (b) are complete with respect to any common memory location $x$,
i.e., for any two accesses $(e, e') \in \rf{\tr}$ on $x$, either $\set{e, e'} \subseteq g$
(resp. $\set{e, e'} \subseteq g'$) or $\set{e, e'} \cap g = \emptyset$ (resp. $\set{e, e'} \cap g' = \emptyset$).
In summary, $\indrel_G$ treats a pair of grains as independent exactly when they can be swapped
in any surrounding context.
We use the notation $\rho \graineq \tr$ to denote that $\rho$ is a grain-reordering of $\tr$;
$\graineq$ is a symmetric reflexive relation.
The notion of scattered grain reorderings 
is a generalization of grain equivalence, where
grains are no longer required to be contiguous.
In the original work of Farzan and Mathur~\cite{FarzanMathur2024}, scattered grains were introduced
in the context of answering a causal concurrency question (whether two given events can be flipped);
a formal definition of the induced reordering relation was not provided, and we skip it here
since all our observations about grain reorderings carry over to scattered grains as well.


\section{Sliced Reordering}
\seclabel{slice}


In this section, we introduce a sound reordering relation 
that relates two executions when the second can be obtained from the first 
through a \emph{slicing operation} and study its properties.

\begin{definition}[Sliced Reordering]
\deflabel{sliced-reordering}
For a pair of concurrent program runs $\tr$ and $\rho$, we say that $\rho$ is a sliced reordering of $\tr$,
denoted $\sliceto{\tr}{\rho}$, if
$\tr \rfeq \rho$, and further,
there are disjoint subsequences $\tr_1$ and $\tr_2$
of $\tr$  
such that $\rho = \tr_1 \cdot \tr_2$.
\end{definition}



\noindent
Note that soundness is directly baked into the definition:

\begin{restatable}{proposition}{SliceSoundness}[Soundness of sliced reorderings]
\proplabel{sound-slices}
$\slice \subseteq \rfeq$
\end{restatable}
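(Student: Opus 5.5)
This inclusion follows directly from \defref{sliced-reordering}, so the plan is a short unfolding of definitions rather than a structural argument. I will fix an arbitrary pair $(\tr, \rho) \in \slice$, that is, $\sliceto{\tr}{\rho}$, and show that $(\tr, \rho) \in \rfeq$. By \defref{sliced-reordering}, the relation $\sliceto{\tr}{\rho}$ holds only if two conditions are met: first, $\tr \rfeq \rho$; second, $\rho = \tr_1 \cdot \tr_2$ for some disjoint subsequences $\tr_1, \tr_2$ of $\tr$. The first conjunct is exactly the required membership $(\tr,\rho) \in \rfeq$. Since the pair was arbitrary, $\slice \subseteq \rfeq$.

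For completeness I will also note why $\slice$ is sound in the sense of \defref{sound-reordering}. That definition needs length-preservation in addition to inclusion in $\rfeq$. Length-preservation holds because $\rfeq$ relates only runs with the same event set. One can also see it from the shape of the relation: $\tr_1$ and $\tr_2$ are disjoint subsequences, and since $\rho \rfeq \tr$, together they must cover all of $\events{\tr}$.

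There is no real obstacle here. The only point needing care is to use the definition exactly as written, where $\tr \rfeq \rho$ is an explicit conjunct. One should not try to derive reads-from equivalence from the concatenation structure alone, because an arbitrary split $\tr_1 \cdot \tr_2$ can violate program order or reads-from.
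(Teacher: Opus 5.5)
Your proof is correct and matches the paper's route. The paper gives no separate argument: it simply notes that soundness is built into \defref{sliced-reordering}, because $\tr \rfeq \rho$ is an explicit conjunct of that definition.
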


\begin{figure}[t]
\begin{center}
\includegraphics[scale=0.8]{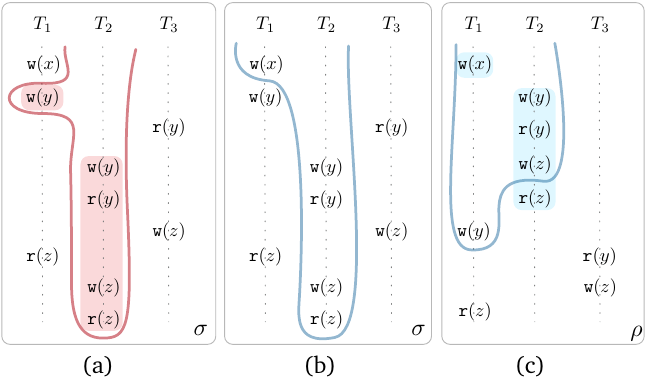}\vspace{-10pt}
\caption{The wrong (a) and the right (b) slice choice for $\sliceto{\tr}{\rho}$}
\figlabel{single-slice}
\end{center}\vspace{-10pt}
\end{figure}

Consider the example in \figref{single-slice}. $\rho$ is a sliced reordering of $\sigma$. 
Interestingly, the partitions of $\sigma$ can be pictorially depicted using
a single curve denoting how to \emph{slice} $\tr$, as illustrated by the (blue) curve. 
Also observe that, $\rho$ and $\tr$ have the same program order,
and every read event in $\rho$ observes the same last write event
as in $\tr$. 
Consider now the curve in \figref{single-slice}(a) marking a slicing of this execution, demarcating the reordering obtained by linearizing and then concatenating the two partitions 
(shaded followed by unshaded). Such a reordering would 
yield a different program order (because it flips
the relative order of events in $T_1$),
and thus would not
not be reads-from equivalent to $\tr$. 
Consequently, it is straightforward to observe that any curve marking a correct sliced reordering should intersect each thread at most once. 

We have been careful in not calling the relation $\slice$ an equivalence, because it indeed is not one!

\begin{restatable}{proposition}{SliceReflexiveTransitiveSymmetric}
\proplabel{slice-refl-symm-trans}
$\slice$ is reflexive but neither symmetric nor transitive.
\end{restatable}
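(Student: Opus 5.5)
The plan is to settle each of the three parts directly from \defref{sliced-reordering}. Reflexivity is immediate. The two negative claims need small concrete counterexamples, and I will build them on runs where every reordering is automatically reads-from equivalent, so that only the combinatorics of slicing matters.

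\emph{Reflexivity.} Given any run $\tr$, take $\tr_1 = \tr$ and $\tr_2 = \varepsilon$. These are disjoint subsequences of $\tr$ with $\tr = \tr_1 \cdot \tr_2$. Since $\rfeq$ is reflexive, $\sliceto{\tr}{\tr}$ follows.

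\emph{A convenient reformulation.} By definition, $\sliceto{\tr}{\rho}$ holds iff $\tr \rfeq \rho$ and $\rho$ splits as a prefix $\rho_1$ and a suffix $\rho_2$ such that $\tr$ is an interleaving (shuffle) of $\rho_1$ and $\rho_2$. Now fix four events $a = \ev{t_1, \wt(x_1)}$, $b = \ev{t_2, \wt(x_2)}$, $c = \ev{t_3, \wt(x_3)}$, $d = \ev{t_4, \wt(x_4)}$, with distinct threads and distinct memory locations. Any two permutations of $\set{a,b,c,d}$ have the same (trivial) program order and the same (empty) reads-from mapping. Hence all such permutations are $\rfeq$-equivalent, and for these runs the reformulated shuffle condition is the whole content of $\slice$. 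This lets me check each claimed (non-)relation by enumerating the $5$ prefix/suffix cuts of the target run.

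\emph{Non-symmetry.} Let $\rho = abcd$ and $\tr = cadb$.
\begin{itemize}
\item $\sliceto{\rho}{\tr}$ holds: with the cut $ab \mid cd$ of $\rho$, the run $cadb$ is a shuffle of $ab$ and $cd$.
\item $\sliceto{\tr}{\rho}$ fails: we check that $abcd$ is not a shuffle of any cut of $cadb$.
  \begin{itemize}
  \item The cuts $\varepsilon \mid cadb$ and $cadb \mid \varepsilon$ would force $abcd = cadb$.
  \item The cut $c \mid adb$ fails because $adb$ would force $d$ before $b$.
  \item The cuts $ca \mid db$ and $cad \mid b$ fail because they force $c$ before $a$.
  \end{itemize}
\end{itemize}

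\emph{Non-transitivity.} With the same events, take the runs $abcd$, $cdab$ and $cadb$.
\begin{itemize}
\item $\sliceto{abcd}{cdab}$ holds: with the cut $cd \mid ab$ of $cdab$, the run $abcd$ is a shuffle of $cd$ and $ab$.
\item $\sliceto{cdab}{cadb}$ holds: with the cut $ca \mid db$ of $cadb$, the run $cdab$ is a shuffle of $ca$ and $db$.
\item $\sliceto{abcd}{cadb}$ fails: this is precisely the case excluded by the enumeration in the non-symmetry argument.
\end{itemize}

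The only step requiring care is this finite enumeration of cuts. It is routine, but it is where a wrong choice of example would go unnoticed, so I would spell out all five cuts in the written proof. I would also state explicitly why distinct threads and distinct variables make $\rfeq$ trivial on these runs.
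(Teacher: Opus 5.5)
\textbf{Direction error in the non-symmetry part.} Your strategy works, but both bullets of the non-symmetry paragraph have the direction of $\slice$ reversed. With $\rho = abcd$ and $\tr = cadb$, you assert $\sliceto{\rho}{\tr}$, i.e.\ $\sliceto{abcd}{cadb}$. Your justification, however, cuts the \emph{source} $abcd$ into $ab \mid cd$ and checks that $cadb$ is a shuffle of the pieces. By your own reformulation, that is the witness for $\sliceto{cadb}{abcd}$: the target $abcd = ab \cdot cd$, and $ab$ and $cd$ are disjoint subsequences of $cadb$.

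The same swap affects the second bullet. Your enumeration cuts $cadb$ and shows $abcd$ is never a shuffle of the two pieces. That shows $\sliceto{abcd}{cadb}$ fails, not that $\sliceto{cadb}{abcd}$ fails, and in fact $\sliceto{cadb}{abcd}$ holds. As written, the paragraph therefore asserts two false statements. It also contradicts your non-transitivity paragraph, which correctly says $\sliceto{abcd}{cadb}$ fails and cites this very enumeration.

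The computations themselves are correct, so the fix is only to swap the roles. State that $\sliceto{cadb}{abcd}$ holds, via the cut $ab \mid cd$ of the target. State that $\sliceto{abcd}{cadb}$ fails, by the five-cut enumeration. The non-transitivity chain is correct as stated: $\sliceto{abcd}{cdab}$ via the cut $cd \mid ab$, $\sliceto{cdab}{cadb}$ via the cut $ca \mid db$, and $\sliceto{abcd}{cadb}$ fails.

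\textbf{Comparison with the paper.} Reflexivity is argued exactly as in the paper. For the negative parts, the paper uses concrete three-thread runs (the pair from \figref{single-slice} plus a third run $\gamma$) that carry genuine reads-from constraints. It rules out a slicing by noting that any pair whose order flips between source and target forces the earlier-in-target event into the first piece, and then derives a contradiction.

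You instead choose runs on which every permutation is $\rfeq$-equivalent. Then $\slice$ reduces to the purely combinatorial condition ``the target is a concatenation of two disjoint subsequences of the source'', which you settle by exhausting the cuts. Your route is self-contained, independent of the figures, and mechanically checkable. It also shows that the failure of symmetry and transitivity is intrinsic to slicing rather than caused by data-flow constraints. The paper's examples show, in exchange, that the phenomena persist in runs with nontrivial reads-from structure.

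Two small refinements. The flipped-pair observation would shorten your enumeration. Also, all three of your runs keep $a$ before $b$ and $c$ before $d$, so two threads suffice ($a,b$ in one, $c,d$ in the other). This gives sequential-versus-interleaved two-thread runs of the same shape the paper later uses in the proofs of \propref{lower-limit-width} and \propref{k-slice-refl-symm-trans}.
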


\begin{proof}
First, $\slice$ is trivially reflexive --- $\tr$ is a sliced reordering of itself,
and this can be witnessed by the subsequences $\tr_1 = \tr$ (itself) 
and $\tr_2 = \epsilon$ (empty subsequence).
Now, let us understand why $\slice$ is not symmetric.
Consider again the runs $\tr$ and $\rho$ in
\figref{single-slice}.
We previously observed that $\sliceto{\tr}{\rho}$.
We will now argue that, in turn, $\rho$ is not a sliced reordering of $\tr$.
Assume on the contrary that indeed there are subsequences $\rho_1$ and $\rho_2$ of $\rho$
such that $\rho_1 \cap \rho_2 = \emptyset$ and $\tr = \rho_1 \cdot \rho_2$.
Consider the second and the sixth events of $\rho$:
$e_6 = \ev{T_1, \wt(y)}$ and $e_2 = \ev{T_2, \wt(y)}$.
Since their relative order gets flipped across $\rho$ and $\tr$
(i.e., $e_2 \trord{\rho} e_6$, but $e_6 \trord{\tr} e_2$),
we must have $e_6 \in \rho_1$ and $e_2 \in \rho_2$.
For the same reason, $e_7 = \ev{T_3, \rd(y)} \in \rho_1$, $e_8 = \ev{T_3, \wt(z)} \in \rho_1$
and $e_9 = \ev{T_1, \wt(z)} \in \rho_1$.
Also of course, $e_1 = \ev{T_1, \wt(x)} \in \rho_1$, 
and all other events must be in $\rho_2$.
But then, the events $e_1, e_6, e_7, e_8, e_9$ must appear contiguously in $\tr$
which is a contradiction.

Let us now argue why $\slice$ is not transitive.
Consider the pair of executions below, where the first execution (left) is $\rho$ from \figref{single-slice}.
\begin{center}
\includegraphics[scale=0.8]{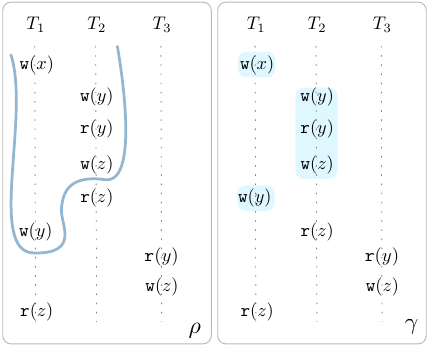}
\end{center}
Observe that the second execution $\gamma$ is a sliced reordering of $\rho$ (i.e., $\sliceto{\rho}{\gamma}$) as witnessed by the blue curve on $\rho$. Hence, we have $\sliceto{\tr}{\rho}$ and $\sliceto{\rho}{\gamma}$.
We argue that $\gamma$ is not a sliced reordering of $\tr$.
Assume on the contrary that there are subsequences $\tr'_1$ and $\tr'_2$ of $\tr$
such that $\tr'_1 \cap \tr'_2 = \emptyset$, $\tr'_1 \cdot \tr'_2 = \gamma$.
The second and eighth events of $\tr$, 
namely $f_2 = \ev{T_1, \wt(y)}$ and
$f_8 = \ev{T_2, \wt(z)}$, must belong to different subsequences since their order gets flipped,
and thus, we must have $f_2 \in \tr'_2$ and $f_8 \in \tr'_1$.
Now, since $f_3 = \ev{T_3, \rd(y)}$, $f_6 = \ev{T_3, \wt(z)}$
and $f_9 = \ev{T_2, \rd(z)}$ appear later than $f_2$ in $\gamma$, they must also all belong to $\tr'_2$.
But then, $f_2, f_3, f_8, f_9$ must appear in the exact same order in $\gamma$, which is a contradiction.

\end{proof}

\subsection{Sequencing Sliced Reorderings}

\noindent Even though $\slice$ is not transitive, we can consider
its reflexive transitive closure $\sliceq:$

\begin{definition}[Repeated sliced reordering]
\deflabel{repeated-sliced-reordering}
For concurrent program runs $\tr$ and $\rho$, we say that $\rho$ is a repeated sliced reordering of $\tr$,
denoted $\sliceqto{\tr}{\rho}$, if there exist $\gamma_1, \gamma_2, \ldots, \gamma_k$
such that 
\[\tr = \gamma_1 \sliceto{}{} \gamma_2  \sliceto{}{} \dots \sliceto{}{} \gamma_k = \rho\]
\end{definition}

$\sliceq$ is still not an equivalence relation, 
because it remains non-symmetric. 
We end this section by making two key observations about $\sliceq$,
in the broader context of evaluating whether $\sliceq$ is fit
to be a a predictor and if it can be used as an alternative to existing
predictors such as reads-from equivalence or other commutativity based equivalences.
First, it is strictly contained in $\rfeq$, 
even if we consider the equivalence $(\slice + \slice^{-1})^*$
obtained by closing it under symmetry. 
This is a vital point justifying the central contribution of this paper, 
which is the parametric definition presented in~\secref{stacking-slices}.

\begin{restatable}{theorem}{SliceStarExpressivity}[Expressivity of $\sliceq$]
\thmlabel{exp-slicestar}
$\sliceq$, closed under symmetry is strictly smaller than $\rfeq$. 
\end{restatable}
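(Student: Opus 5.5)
By \propref{sound-slices} we have $\slice \subseteq \rfeq$. Since $\rfeq$ is an equivalence, it is closed under inverse and reflexive-transitive closure, so $(\slice + \slice^{-1})^* \subseteq \rfeq$. The substance of the theorem is therefore strictness. The plan is to exhibit a run $\tr$ whose $\rfeq$-class contains some $\rho \neq \tr$, but such that $\tr$ is \emph{isolated} under single moves: every $\gamma$ with $\sliceto{\tr}{\gamma}$ or $\sliceto{\gamma}{\tr}$ equals $\tr$. Then the class of $\tr$ under $(\slice + \slice^{-1})^*$ is $\set{\tr}$, and since $\tr \rfeq \rho$ with $\rho \neq \tr$, the inclusion is strict. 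Working with a single isolated run avoids reasoning about arbitrary chains.

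The combinatorial core is a characterization of what one move can do. If $\gamma = \tr_1 \cdot \tr_2$, then the set of pairs inverted between $\tr$ and $\gamma$ is exactly $\setpred{(e,f)}{e \in \tr_2, f \in \tr_1, e \trord{\tr} f}$. This set is ``bipartite-closed'': if $e$ is moved behind $f$, then every event of $\tr_2$ is placed after every event of $\tr_1$. The paper's proof of \propref{slice-refl-symm-trans} already uses this: one flipped pair forces the membership of many other events. In particular, a pattern $\tr = a_1 b_1 a_2 b_2$ cannot be turned into $b_1 a_1 b_2 a_2$ by one move. Indeed, $b_1 \in \tr_1$ and $a_1 \in \tr_2$, and $b_2$ follows $a_1$ in the target, so $b_2 \in \tr_2$. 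Then $\tr_2$ preserves the order $a_1 a_2 b_2$, which contradicts $b_2$ preceding $a_2$.

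The gadget should consist of two ``independent-looking'' flips that are nonetheless coupled by reads-from, so that neither can be performed alone. Take $a_1, a_2$ in one thread and $b_1, b_2$ in another. Add writes and reads on shared variables, each carrying its own rf-edge, so that flipping only the first pair or only the second pair breaks some $\rf{}$ edge or $\po{}$, while flipping both preserves all of them. An example of such a constraint is a read in $a_2$'s block that must observe a write that the first flip moves. Padding with extra reads and writes should make the $\rfeq$-class of $\tr$ exactly $\set{\tr, \rho}$ with $\rho = b_1 a_1 b_2 a_2$ (at the block level). Since this class is finite and small, we can verify it by direct enumeration. Forward isolation then follows from the bipartite-closure argument above. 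Backward isolation, i.e.\ the absence of any $\gamma \neq \tr$ with $\sliceto{\gamma}{\tr}$, follows because $\gamma$ must lie in the class of $\tr$, hence $\gamma = \rho$, and $\tr$ is not a single slice of $\rho$ by the symmetric argument.

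I expect the main obstacle to be designing the gadget so that its $\rfeq$-class is genuinely just $\set{\tr,\rho}$ (or at least contains no intermediate run reachable by one move). The danger is that the rf-constraints coupling the two flips also admit some unintended third reordering from which $\rho$ is a single slice. I would first try small thread counts (two or three threads, each variable written once) and check the class exhaustively. If a third equivalent run appears, I would add a fresh read–write pair pinning the offending event, and repeat until isolation holds.
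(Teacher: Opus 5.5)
Your overall frame matches the paper's. Soundness gives $(\slice+\slice^{-1})^*\subseteq\rfeq$, and strictness comes from a run that admits no nontrivial single slice move yet has a nontrivial $\rfeq$-class. Your reduction of backward moves to the $\rfeq$-class also tightens the paper's argument, which only checks that ``no move is enabled in (a) or (b)''. But the proof depends entirely on the witness, and the gadget you plan cannot exist: two flips separated by a common cut can never be coupled by reads-from.

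The reason is a splicing property. Suppose $\tr = X\cdot Y \rfeq \rho = X'\cdot Y'$ with $\events{X}=\events{X'}$. Then $X\cdot Y' \rfeq \tr$. Program order is preserved, and reads in $X$ are untouched. Take a read $r$ in $Y'$. If $\rf{\rho}(r)$ lies in $Y'$, or $r$ is orphan in $\rho$, the same holds in $X\cdot Y'$. Otherwise $\rf{\rho}(r)=w\in\events{X}$. Since also $\rf{\tr}(r)=w$ with $r\in\events{Y}$, $w$ must be the last write to $\MemOf{r}$ in $X$, and $\rho$ guarantees no such write occurs in $Y'$ before $r$.

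Apply this with $X=a_1b_1$, $X'=b_1a_1$, $Y=a_2b_2$, $Y'=b_2a_2$. The run $a_1b_1b_2a_2$, which flips only the second pair, is then $\rfeq$-equivalent to $\tr$, whatever padding you add. Moreover $\sliceto{\tr}{a_1b_1b_2a_2}$ (take $\tr_1=a_1b_1b_2$, $\tr_2=a_2$) and $\sliceto{a_1b_1b_2a_2}{\rho}$ (take $\tr_1=b_1$). Hence $\sliceqto{\tr}{\rho}$ in two steps. So your requirement that ``flipping only one pair breaks $\rf{}$'' is unsatisfiable, the class is never $\set{\tr,\rho}$, and $\tr$ is not isolated.

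The missing idea is that the two inversions must \emph{cross}: no cut with the same event set in $\tr$ and $\rho$ may separate them. An example is $a_1a_2b_1b_2 \mapsto b_1a_1b_2a_2$, the same kind of pattern as $\tr^{\sf seq}\mapsto\tr^{\sf int}$ in \propref{k-slice-gradation}. One 2-slice move achieves it, but no single slice does. The paper's witness in \figref{lim} is of this crossing kind: (a) becomes (b) by a single 2-slice move (the green and orange curves). The $\rf{}$ edges on $y$ inside $T_1$, on $x$ inside $T_2$, and the cross-thread edges on $z$ and $t$ are placed so that every single slice at (a) or (b) breaks one of them.

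Even with a crossing shape, you still have to verify isolation in both directions. Forward moves put a program-order-downward-closed set first. Backward moves into $\tr$ are exactly the shuffles of a prefix $\tr[1..m]$ with the complementary suffix.
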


The visual proof is in \figref{lim}. The runs in (a) and (b) are {\sf rf}-equivalent, but there is not a single slice reordering move that is enabled in either (a) or (b), and hence the two runs sit in an equivalence class of size one if considered under the symmetry closure of $\sliceq$. It is tedious to enumerate all possibilities and the reason why they are invalid, but the high-level observation is that any slice reordering move would either break the reads-from relation between $\wt(y)$ and $\rd(y)$ in thread $T_1$, or $\wt(x)$ and $\rd(x)$ in thread $T_2$, or the cross thread pairs $\wt(z)$/$\rd(z)$ or $\wt(t)$/$\rd(t)$. 

\begin{figure}[t]
\begin{center}
\includegraphics[scale=0.4]{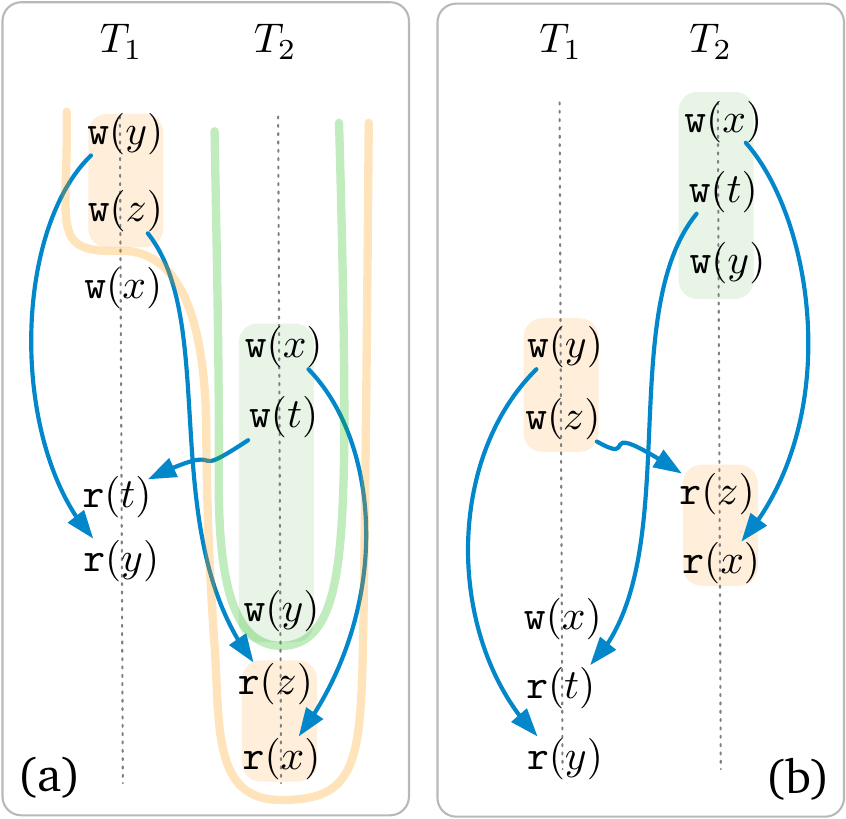}\vspace{-7pt}
\caption{$\sliceq$ is strictly weaker than $\rfeq$.\figlabel{lim}}\vspace{-10pt}
\end{center}
\end{figure}

Second, despite having limited expressivity, $\sliceq$ shares the disadvantage of {\sf rf}-equivalence in having a {\em linear space lower bound} when computing the closure of a regular language up to it. 
In particular, an analogue of the hardness result in \cite{FarzanMathur2024} for $\rfeq$ holds:

\begin{restatable}{theorem}{LinSpaceHardnessSliceStar}[Linear Space Hardness of Closure up to $\sliceq$]
\thmlabel{linspace-hardness-slice-star}
Given a concurrent program $\sigma$ and a pair of events $e$ and $f$ in it, any one-pass algorithm that checks if the order of $e$ and $f$ can be flipped in a run $\rho$ such that $\sliceqto{\sigma}{\rho}$ requires linear space.
Further, the time $T(n)$ and space $S(n)$ usage of any multi-pass algorithm for solving this problem must satisfy $S(n) \cdot T(n) \in \Omega(n^2)$, where $|\sigma| = n$.
\end{restatable}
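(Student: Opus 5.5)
The plan is a reduction from a communication complexity problem whose one-way (and multi-round) complexity is linear, mirroring the hardness argument of \cite{FarzanMathur2024} for $\rfeq$. I would use set disjointness (or equivalently the INDEX/equality-style problem used there): Alice holds $A \subseteq \set{1,\ldots,n}$, Bob holds $B \subseteq \set{1,\ldots,n}$, and they must decide whether $A \cap B = \emptyset$. I would build a run $\tr = \tr_A \cdot \tr_B$, where the prefix $\tr_A$ depends only on $A$ and the suffix $\tr_B$ depends only on $B$. Two distinguished events, $e$ at the start of $\tr_A$ and $f$ at the end of $\tr_B$, are fixed so that their order can be flipped in some $\rho$ with $\sliceqto{\tr}{\rho}$ iff $A \cap B \neq \emptyset$ (or iff it is empty; either polarity works). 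A one-pass algorithm with $S(n)$ space gives a one-way protocol with $S(n)$ bits: Alice streams $\tr_A$ and sends the memory state. A $p$-pass algorithm gives a $(2p-1)$-round protocol with $O(p\, S(n))$ bits. The $\Omega(n)$ communication lower bound for disjointness then yields $S(n) \in \Omega(n)$ and, since $T(n) \geq p \cdot n$, also $S(n)\cdot T(n) \in \Omega(n^2)$.

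For the gadget, I would encode each index $i$ with a constant number of threads and memory locations. That is acceptable only if the number of threads and locations stays constant, so instead I would encode $i$ positionally, reusing a fixed alphabet. The construction would follow the $\rfeq$ gadget: a chain of reads-from dependencies that forces $e$ before $f$. The chain has an ``escape'' at block $i$ exactly when $i \in A$ and $i \in B$, meaning a write in $\tr_A$ and a read in $\tr_B$ on a shared location can be separated. The escape lets a suffix segment containing $f$ move ahead of $e$. For the $\sliceq$ direction, I would exhibit the flip explicitly with a constant number of slice moves. In fact a single slice $\tr_1 \cdot \tr_2$ ideally suffices: put into $\tr_1$ the events causally needed for $f$ via the escape block, and put the rest into $\tr_2$. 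I would check that this preserves $\po{}$ and $\rf{}$ using \propref{sound-slices}.

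For the converse, if no block escapes then every $\rho$ with $\tr \rfeq \rho$ keeps $e$ before $f$. Since $\sliceq \subseteq \rfeq$ by \propref{sound-slices} and transitivity of $\rfeq$, it suffices to show that $e$ precedes $f$ in every run reads-from equivalent to $\tr$. I would argue this by exhibiting a chain of forced orderings: program order, reads-from edges, and the ``no intervening write'' constraints that force writes to precede or follow reads. The chain runs from $e$ to $f$, and this is a standard argument.

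The main obstacle is designing the gadget with a constant alphabet so that both directions hold simultaneously. Because $\sliceq$ is strictly weaker than $\rfeq$ (\thmref{exp-slicestar}), the positive direction must realize the flip by actual slice moves, not merely by $\rfeq$-equivalence. I would first try to adapt the \cite{FarzanMathur2024} gadget directly and verify that its witness reordering is already a single slice. If it is not, I would modify the gadget so that the witness becomes one.
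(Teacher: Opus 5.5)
The gap is in the positive (flip) direction, and it is more than a missing detail: the target is unreachable. You aim to witness the flip with a single slice, or a constant number of slice moves, and otherwise to modify the gadget until this holds. No gadget can do that.

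Let $L_{\sf flip}$ be the regular set of runs (over the alphabet with $e$ and $f$ marked) in which $f$ precedes $e$. By \thmref{regularity-of-forward-closure}, $\pre{L_{\sf flip}}{\slice}$ is regular. Write $(\tr,\rho)\in R_1\circ R_2$ iff $(\tr,\gamma)\in R_1$ and $(\gamma,\rho)\in R_2$ for some $\gamma$. Since $\pre{L}{R_1\circ R_2}=\pre{\pre{L}{R_2}}{R_1}$, the pre-image under any fixed number $c$ of consecutive $\slice$-moves is also regular, as is the pre-image under $\kslice{k}$ for fixed $k$.

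Suppose your instances satisfied ``YES implies a witness using at most $c$ moves'' and ``NO implies no $\rfeq$-witness at all''. Then on every instance $\tr_A\cdot\tr_B$ the answer would coincide with membership in this regular language. Alice could run its DFA on $\tr_A$ and send the $O(1)$-bit state, giving a constant-communication protocol for disjointness in either polarity. That contradicts the very lower bound you invoke. In particular, no adaptation of the \cite{FarzanMathur2024} gadget can make its witness a single slice. This is exactly why $\sliceq$ is hard while each $\kslice{k}$ is not: the hardness must come from chains of slice moves whose length grows with $n$.

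What the proof actually needs is a chain of $\Theta(n)$ single slice moves, each checked to preserve $\rf{}$. \propref{sound-slices} does not help here: $\rfeq$ is part of the definition of $\slice$, so it is precisely what must be verified for every move, e.g.\ via \lemref{consistency}. The paper does this with $2n$ moves on $\tr=\pi_1\cdots\pi_n\cdot\kappa\cdot\eta_1\cdots\eta_n\cdot\delta$:
\begin{itemize}
\item for $k\neq i$, the $t_2$-block $\eta_k$ is bubbled up next to $\pi_k$ by two slice moves;
\item at the first index $i$ with $a_i\neq b_i$, one larger slice places $\eta_i$ between $\ev{t_1,\wt(y_{a_i})}$ and $\ev{t_1,\rd(y_{a_i})}$, which is harmless because $y_{a_i}\neq y_{b_i}$;
\item a final slice moves $\delta$ ahead of $\kappa$.
\end{itemize}
The rest of your plan is sound and matches the paper. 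This includes the NO-direction via $\sliceq\subseteq\rfeq$ (soundness of each step plus transitivity of $\rfeq$) and the pass-counting communication argument for $S(n)\cdot T(n)\in\Omega(n^2)$. Reducing from disjointness instead of the equality language $L_n$ makes no difference.
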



\section{Comparison with Existing Sound Predictors}
\seclabel{comparison}

In this section, we compare the expressive power of $\slice$ (and consequently $\sliceq$) against commutativity-based equivalences $\mazeq$, and the notion of grains and scattered grains introduced in \cite{FarzanMathur2024}.

\subsection{Trace Equivalence}\label{sec:maz-comp}

Let us first observe that a single swap of two adjacent commuting events can be simulated using a single step of sliced reordering. We have a run $\rho$ such that 
$\rho = \sigma e f \sigma' \equiv_\maz \sigma f e \sigma'$
\begin{wrapfigure}[8]{r}{0.3\textwidth}
\vspace{-10pt}
\includegraphics[scale=1.0]{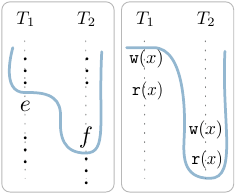}
\vspace{-20pt}
\end{wrapfigure}
If we let $\sigma_1 = \sigma f$ and $\sigma_2 = e \sigma'$ in Definition \ref{def:sliced-reordering}, 
then this swap is simulated through a sliced reordering (as illustrated on the right, but nominally only for two threads). 
The converse is, however, not true. First, a single slice reordering can simultaneously swap 
many pairs of events. Therefore, a slice reordering may require many swaps to simulate. Second, it can reorder pairs of events that can never be reordered under trace equivalence, for instance, swapping the order of the sequence $\wt(x)\rd(x)$ in thread $T_1$ against the same sequence in thread $T_2$ as illustrated on the right. 
Thus, we have:
\begin{restatable}{theorem}{SliceStarVsMazExpressivity}
\thmlabel{maz-subsumption}
For all $\sigma$ and $\rho$ such that $\sigma \mazeq \rho$, 
we have $\sliceqto{\sigma}{\rho}$. On the other hand, there exist $\sigma', \rho'$ such that $\sliceqto{\sigma'}{\rho'}$, but $\sigma' {\not \equiv}_{\maz} \rho'$. Hence, $\mazeq \subsetneq \sliceq$.  
\end{restatable}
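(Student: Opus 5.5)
The proof has two parts: the inclusion $\mazeq \subseteq \sliceq$, and a separating example showing it is strict.

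\emph{Inclusion.} By definition, $\mazeq$ is the smallest equivalence containing the single-swap relation. So it suffices to do two things:
\begin{enumerate*}[label=(\roman*)]
\item show that every single swap $\tr_1 \cdot e \cdot f \cdot \tr_2 \to \tr_1 \cdot f \cdot e \cdot \tr_2$ with $(e,f) \in \indrel$ is a single $\slice$ step;
\item observe that $\sliceq$ is reflexive and transitive by construction.
\end{enumerate*}
There is one subtlety here: $\sliceq$ is not symmetric, whereas $\mazeq$ is. This causes no trouble, because the swap relation is itself symmetric: if $(e,f) \in \indrel$ then $(f,e) \in \indrel$, since $\indrel$ is symmetric. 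Hence every $\mazeq$-chain $\tr = \gamma_1, \ldots, \gamma_m = \rho$ of single swaps, read in either direction, is a forward chain of swaps, and therefore a forward chain of $\slice$ steps.

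For step (i), take $\tr = \tr_1 e f \tr_2$, and choose the subsequences $\tr_1' = \tr_1 \cdot f$ and $\tr_2' = e \cdot \tr_2$ of $\tr$. These are disjoint, and their concatenation is $\tr_1 f e \tr_2$. It then remains to check $\tr \rfeq \rho$. I would show $\po{\tr} = \po{\rho}$ and $\rf{\tr} = \rf{\rho}$ directly, using $(e,f) \in \indrel$:
\begin{itemize}
\item Program order is preserved because $\ThreadOf{e} \neq \ThreadOf{f}$, and the only pair whose relative order changes is $(e,f)$.
\item Reads-from is preserved because $e$ and $f$ do not conflict. If both access the same location, both are reads, so no read changes its last preceding write. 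If they access different locations, no read/write pair on a common variable has its order changed.
\end{itemize}
This is essentially the remark before the theorem; the only care needed is the case analysis for $\rf{}$ (a read among $e,f$ whose last write might be the other event), which non-conflict rules out. This is also the soundness of $\mazeq$ known from \cite{FarzanMathur2024}, which could be cited instead.

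\emph{Strictness.} I would use the example sketched before the theorem. Let
\[
\tr' = \ev{T_1,\wt(x)}\ev{T_1,\rd(x)}\ev{T_2,\wt(x)}\ev{T_2,\rd(x)}, \qquad
\rho' = \ev{T_2,\wt(x)}\ev{T_2,\rd(x)}\ev{T_1,\wt(x)}\ev{T_1,\rd(x)}.
\]
The slices $\tr_1 = $ the $T_2$ events and $\tr_2 = $ the $T_1$ events give $\rho' = \tr_1 \cdot \tr_2$. For $\tr' \rfeq \rho'$: program order is clearly the same, each read reads from its own thread's preceding write in both runs, and the set of events is the same. Hence $\sliceto{\tr'}{\rho'}$, and so $\sliceqto{\tr'}{\rho'}$.

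For $\tr' \not\equiv_\maz \rho'$, I would use the standard invariant that trace equivalence preserves the relative order of dependent (non-independent) event pairs. Every swap exchanges only an independent pair, so the order of every dependent pair is unchanged along any chain of swaps. The events $\ev{T_1,\wt(x)}$ and $\ev{T_2,\wt(x)}$ conflict, so they are dependent, yet their order is flipped between $\tr'$ and $\rho'$. Hence $\tr' \not\equiv_\maz \rho'$.

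Together the two parts give $\mazeq \subsetneq \sliceq$. The main (mild) obstacle is the rf-preservation check in step (i); everything else is bookkeeping.
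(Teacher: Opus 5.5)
Your proposal is correct and follows the paper's own argument. The paper simulates a single swap $\sigma e f \sigma' \to \sigma f e \sigma'$ by the slice $\sigma_1 = \sigma f$, $\sigma_2 = e\sigma'$, and separates the two relations by slicing the block $\wt(x)\rd(x)$ of $T_1$ against the same block of $T_2$. You additionally spell out three points the paper leaves informal: why the symmetry of $\mazeq$ causes no trouble, the $\rf{}$-preservation case analysis, and the dependent-pair invariant showing $\tr' \not\equiv_\maz \rho'$.
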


In effect, $\sliceq$ and $\mazeq$, both have the flavour of converting one run to the other through a sequence of atomic steps: swaps in the case of $\mazeq$ and sliced reorderings in the case of $\sliceq$. The next natural question is, for a pair of runs $\sigma$ and $\rho$ that are related by both relations, is there a difference in the worst-case number of atomic steps it takes to go from $\sigma$ to $\rho$?

Here, an analogy with two classic sorting methods holds the key to the answer. One can think of swaps as the unit operation in {\em bubble sort} and the slices as being able to simulate an {\em insertion} from {\em insertion sort}. It is well-understood that sorting a reverse-sorted sequence requires {\em quadratically many} swaps in bubble sort. Hence, the same intuition is true for $\mazeq$. In contrast, one needs at most linearly many {\em insertions} to sort any list. Note, however, that in the context of concurrent program runs, once we have more events than the number of threads, we cannot have the entire set of events to be reverse-ordered to create this extreme adversarial situation. Therefore, making a precise argument on the lower bound of the number of required swaps/insertions requires a bit more care, captured by the following theorem:

\begin{restatable}{theorem}{SliceStarVsMazExpressivityMetric}
\thmlabel{maz-comp}
If $\sigma \mazeq \rho$, then the number of steps of sliced reordering that are required to go from $\sigma$ to $\rho$ is always less than or equal to the number of swaps of adjacent commutative actions. Moreover, there exists $\sigma$ and $\rho$ such that $\sigma \equiv_{\mathcal{M}} \rho$ and it takes $O(|\sigma|)$ number of slice reorderings to convert $\sigma$ to $\rho$, and $O(|\sigma|^2)$ number of swaps. 
\end{restatable}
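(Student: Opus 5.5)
The plan has two parts, matching the two claims of \thmref{maz-comp}.

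For the first claim, I reuse the simulation from the discussion preceding \thmref{maz-subsumption}. Suppose $\sigma \mazeq \rho$ is witnessed by a sequence of $m$ swaps $\sigma = \gamma_0, \gamma_1, \ldots, \gamma_m = \rho$, where each $\gamma_{i+1}$ comes from $\gamma_i$ by swapping adjacent independent events. For a single swap $\gamma_i = \alpha\, e f\, \beta$ and $\gamma_{i+1} = \alpha\, f e\, \beta$, take the subsequences $\alpha f$ and $e \beta$ of $\gamma_i$. They are disjoint, and their concatenation is $\gamma_{i+1}$. Soundness holds because $\mazeq \subseteq \rfeq$: independent events are in distinct threads, so $\po{}$ is preserved, and they do not conflict, so $\rf{}$ is preserved. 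Hence $\sliceto{\gamma_i}{\gamma_{i+1}}$. This gives a chain of $m$ sliced reorderings from $\sigma$ to $\rho$. Taking $m$ to be the minimum number of swaps shows that the minimum number of slicing steps is at most the minimum number of swaps.

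For the second claim, I use two threads and an insertion-sort intuition. Let $n$ be even. Let $\sigma = a_1 a_2 \cdots a_{n/2}\, b_1 \cdots b_{n/2}$, where the $a_j$ are events of $T_1$ and the $b_j$ are events of $T_2$, all on pairwise distinct variables (say all writes). Then every cross-thread pair is independent. Let $\rho$ be the perfect interleaving $b_1 a_1 b_2 a_2 \cdots b_{n/2} a_{n/2}$, or any interleaving with $\Theta(n^2)$ inverted cross pairs. Clearly $\sigma \mazeq \rho$.

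The number of adjacent swaps needed equals the number of inversions between the two orders. An adjacent swap changes the inversion count by exactly one, and here it is the standard bubble-sort lower bound. For this choice of $\rho$ the count is $\sum_j (n/2 - j + 1) = \Theta(n^2)$.

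For the upper bound on slicing steps, insert each $b_j$ into its target position one at a time. Moving a single event $b_j$ leftwards past a block of $a$'s is one sliced reordering. Its two subsequences are the prefix up to the insertion point followed by $b_j$, and then the rest with $b_j$ removed. Program order is kept because $b_j$ overtakes no event of $T_2$, and reads-from is trivially kept. This uses $n/2 = O(|\sigma|)$ steps.

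The main obstacle is the wording "it takes $O(|\sigma|^2)$ swaps". Read literally as an upper bound, it is trivial. I would state and prove the intended $\Omega(|\sigma|^2)$ lower bound via the inversion count, which requires checking that no shorter swap sequence exists. It is immediate from the fact that each swap fixes at most one inversion. Optionally, I would also note that a matching $\Omega(|\sigma|)$ lower bound on slicing steps is not needed for the theorem.
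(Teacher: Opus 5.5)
Your proposal is correct and follows essentially the paper's route. For the first claim, you simulate each swap by one slice using the same subsequences $\alpha f$ and $e\beta$ as the paper. For the second claim, your two-thread sequential-versus-alternating example is the $k=2$ case of the paper's $k$-thread example, with linearly many insertion-style slices against quadratically many swaps; you additionally make the inversion-count lower bound on swaps explicit, where the paper only tallies the swaps. (Choosing pairwise distinct variables is unnecessary and clashes with the paper's constant-alphabet convention; it suffices that $T_1$ always writes $x$ and $T_2$ always writes $y$, or to use reads of a single variable as the paper does.)
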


\begin{proof}
Consider the execution in \figref{swap-lower-bound} (left) which is
the sequential composition of $k$ threads, each of which executes $n$ $\rd(x)$ events. 
We have $nk$ events in total. Since all $\rd(x)$'s commute under Mazurkiewicz commutativity, it is straightforward to see that the runs in the figure below, parts (a) and (b), are equivalent, up to both $\mazeq$ and (consequently by Theorem \ref{thm:maz-subsumption}) $\sliceq$. 
Now, imagine that we want to use a sequence of atomic slices to reorganize the run 
in (a) into the one in (b). 
We must start by taking the slice marked as $s_1$ to place the first event of $T_k$ in place. Then, we continue with $s_2$ to place the first event of $T_{k-1}$ in place. After $k-1$ slices $s_1, \dots, s_{k-1}$, we have the first round of the round-robin schedule of run (b) in place. We must continue with this process for another $n - 1$ rows, using $k-1$ slices in each row, until we are done. Hence, we need $n(k-1)$ slices to reorganize the $nk$ events.

\begin{figure}[t]
\begin{center}
\includegraphics[scale=0.8]{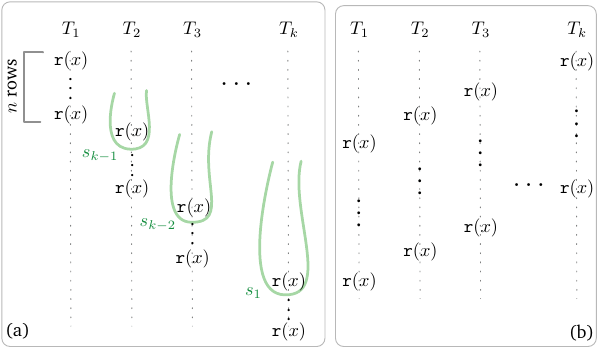}
\end{center}
\caption{Illustrative example demonstrating the number of swaps required to witness trace equivalence. 
}
\figlabel{swap-lower-bound}
\end{figure}


Note that one must use more swaps to go from (a) to (b). The first $\rd(x)$ of thread $T_k$ has to be swapped against $n(k-1)$ events that come before it in (a). The first $\rd(x)$ of thread $T_{k-1}$ has to be swapped against $n(k-2)$ events. Following the same line of reasoning, we have
\begin{align*}
\text{number of required swaps} &= n(k - 1) + n(k - 2) + \dots + n \\
								& + (n - 1) (k - 1) + (n - 1) (k - 2) + \dots + n - 1 \\
                                & + \dots \\
                                & + (k - 1) + (k - 2) + \dots + 1 \\
                                & = n (n+1) k (k - 1)/4
\end{align*}
which yields $(n+1)k/4$ times more swaps than slices needed to convert the above run (a) to run (b). Recall that a single swap can always be simulated by a single slice. So, one should never need more slices than swaps to go to a run equivalent up to $\mazeq$.

\end{proof}


\subsection{Grains and Scattered Grains Commutativity}\label{sec:grains-comp}

\begin{wrapfigure}[8]{r}{0.13\textwidth}
\vspace{-10pt}
\includegraphics[scale=0.8]{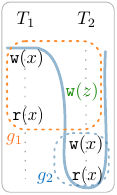}\vspace{-5pt}
\end{wrapfigure}Sliced reordering can simulate the swap of two consecutive commuting grains precisely the same way that it simulates the swap of two adjacent commuting events. Simply replace $e$ and $f$ with two grains, and the argument is the same. To observe that a grain swap cannot simulate a sliced reordering, consider the figure on the right. The $\wt(z)$ must be included in $g_1$ to satisfy the grain {\em contiguity} requirement, but then, as a result, $g_1$ and $g_2$ no longer commute since they share a thread. The marked slice, however, reorders the content of thread $T_2$ against that of thread $T_1$.

{\em Scattered grains} were proposed in~\cite{FarzanMathur2024} as a workaround for the above outlined contiguity problem, and can argue for the validity of the reordering in the above example. Unlike slice reordering, grains, and trace equivalence, scattered grains do not propose a step-by-step transformation of one run to another equivalent run. Nevertheless, we can argue that a single sliced reordering can transform a run in a way that scattered grains cannot.


\begin{figure}[t]
\begin{center}
\includegraphics[scale=0.75]{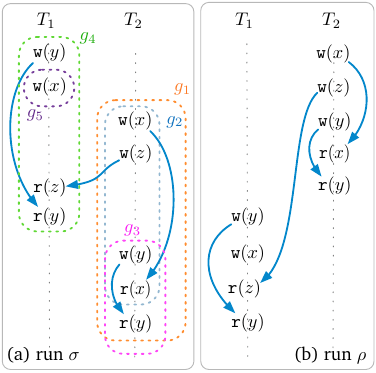}
\vspace{-10pt}
\caption{Slice Reordering vs Scattered Grains. 
\figlabel{glimit}}
\vspace{-10pt}
\end{center}
\end{figure}
Consider~\figref{glimit}. We have $\sigma \equiv_{\sf rf} \rho$: the arrows, connecting the matching write to each read, are preserved, and the program order has not changed. Observe further that this equivalence can be observed using a single step of sliced reordering: the slice that includes everything in $T_2$ in the first partition and everything in $T_1$ in the second partition does the job. Now, let us argue that $\sigma$ cannot be transformed to $\rho$ through any choices of scattered grains. All choices of scattered grains, that potentially have more commutativity than the set of events inside them, are marked in $\sigma$. $g_2$ and $g_5$ are {\em complete} wrt $x$, i.e. they include all the read events that read from the included $\wt(x)$. $g_4$ is complete wrt $y$. $g_1$ and  $g_4$ are complete wrt both $x$ and $y$. $g_3$ and $g_5$ are contiguous, and the rest of the grains are scattered. However, no pairs of (scattered) grains commute.


\begin{restatable}{remark}{SliceStarVsScatteredGrains}
$\rho$ is not equivalent to $\sigma$ (in fact, to anything other than itself!) using scattered grains.
\end{restatable}

\begin{proof}
No (scattered) pairs of grains commute:
\begin{itemize}
\item $(g_1,g_4)$ {\em overlap}. They do not commute due to the edge from $\wt(z)$ to $\rd(z)$. For them not to be {\em entangled}, the only remaining possibility is to use only Mazurkiewicz commutativity to argue that $\rho$ is equivalent to $\sigma$, but due to each having a $\wt(x)$, this cannot be done. 
\item $(g_2,g_4)$ are {\em entangled} for the exact same reason. 
\item $(g_3,g_4)$ do not commute because $g_3$ contains $\rd(x)$ without including its matching $\wt(x)$. 
\item $(g_3, g_5)$ do not commute for the same exact reason.
\item $(g_1,g_5)$ technically commute, but this fact can never be used because if $g_5$ forms a grain, the sequence of actions $\rd(z) \rd(y)$ at the end of thread $T_1$ is entangled in grain $g_1$ which forces $g_1$ to be strictly ordered after $g_5$.
\item $(g_2, g_5)$ follow the same exact pattern as above.
\end{itemize}
\end{proof}

This example highlights the key difference between sliced reorderings and grains. The reason behind {\em commutativity} of grains must be simple and syntactic, which is precisely why we cannot argue $\sigma$ is equivalent to $g_1 g_4 = \rho$.   However, sliced reorderings can argue for this using a semantic style of reasoning: all the reads remain matched with the same writes if we execute $g_1$ first. With scattered grains, the argument for the preservation of this matching is through the simple syntactic means that all the matched pairs remain inside the same grain and move as one unit together. Therefore, scattered grains are fundamentally incapable of arguing for the preservation of the $(\wt(z),\rd(z))$ matching, because they cannot be put inside a single (scattered) grain together without blocking everything else from moving.


\begin{wrapfigure}[8]{r}{0.33\textwidth}
\vspace{-15pt}
\includegraphics[scale=0.85]{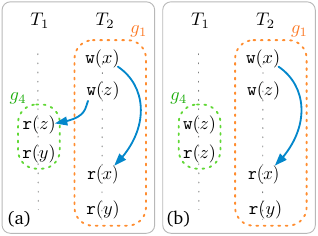}
\vspace{-15pt}
\end{wrapfigure}
To dig deeper into this, consider the example runs on the right. In (a), we can argue that the grain graph only contains an edge from $g_1$ to $g_4$ and therefore this run is equivalent, up to scattered grains, to $g_1 g_4$. The reasoning needed here is restricted to commutativity of read actions only and falls under trace equivalence-based reasoning. This means that, limited to these two grains only, our power of reasoning is as good as what trace equivalence offers. Now consider a slight variation in (b). Here, we can make the same argument that the run is equivalent to $g_4g_1$, but this time, we rely on the full commutativity of the two grains. This reasoning cannot be done by trace equivalence. Thus, we have:

\begin{restatable}{theorem}{SliceStarVsGrainsExpressivity}
\thmlabel{grain-subsumption}
For all $\sigma$ and $\rho$, $\sigma \graineq \rho \implies \sliceqto{\sigma}{\rho}$. On the other hand, there exist $\sigma', \rho'$ such that $\sliceto{\sigma'}{\rho'}$, but $\sigma' \not \graineq \rho'$. Hence, $\graineq \subsetneq \sliceq$. The same result holds for scattered grains \footnote{To be very precise, in~\cite{FarzanMathur2024}, an equivalence relation based on scattered grains is not defined. Hence, our result here is not making use of any such formal definition. Our argument uses a construction in which no choice of scattered grains can be used to argue for reordering any part of the program run, and hence we can make the claim without committing to any speculative definition.}. 
\end{restatable}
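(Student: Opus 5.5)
The plan has two parts: an inclusion argument by simulation, and a separating example.

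For the inclusion $\graineq \subseteq \sliceq$, I would induct on the number of grain swaps witnessing $\sigma \graineq \rho$. By definition, $\rho$ is obtained from $\sigma = g_1 \cdots g_k$ by a finite sequence of swaps of adjacent independent grains. Since $\sliceq$ is the reflexive transitive closure of $\slice$, it suffices to show that a single swap $\pi_1 \cdot g \cdot g' \cdot \pi_2 \to \pi_1 \cdot g' \cdot g \cdot \pi_2$ with $(g,g') \in \indrel_G$ is a sliced reordering. As in \secref{maz-comp}, I would take the subsequences $\tr_1 = \pi_1 \cdot g'$ and $\tr_2 = g \cdot \pi_2$; they are disjoint and their concatenation is the swapped run. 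It remains to check $\rfeq$.
\begin{itemize}
\item Program order is preserved because $g$ and $g'$ share no thread.
\item Reads-from is preserved because of the completeness condition. Consider a read in $g \cup g'$ on a location $x$ that both grains touch. Its writer lies in the same grain, and no write to $x$ from the other grain lies between them, since the grains are contiguous. So it still reads the same write.
\item Reads on locations touched by only one of the two grains are unaffected, since the relative order of accesses to that location is unchanged.
\item Reads in $\pi_2$ see the same last write to each location. For a location touched by both grains, the last access to it among $g g'$ must be a write whose readers stay inside its grain, or there is no such write. I expect this to be the fiddliest point: one has to argue that completeness forces the last writer of $x$ in $g g'$ to be the same before and after the swap. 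I would handle it by noting that reads in $\pi_2$ on $x$ reading from a write in $g$ or $g'$ would violate completeness. Hence no read after the pair reads from either grain on a shared location, and a read-less last write is irrelevant to $\rf{}$.
\end{itemize}
Chaining the single-step simulations gives $\sliceqto{\sigma}{\rho}$.

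For strictness, I would use the small example in the wrapped figure preceding \figref{glimit}. Thread $T_1$ executes $\wt(x)\rd(x)$ and $\wt(z)$, and the slice moves $T_2$'s block ahead. A single slice witnesses $\sliceto{\sigma'}{\rho'}$: put all of $T_2$ in the first subsequence and all of $T_1$ in the second, then check po and rf directly. For $\sigma' \not\graineq \rho'$, I would argue that any grain partition permitting the needed reordering must put $\wt(z)$ and the surrounding $T_1$ events in a grain that, by contiguity, also contains $T_2$ events. The two grains would then share a thread and fail to be independent, so no swap of grains produces $\rho'$.

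For scattered grains, I would reuse \figref{glimit} and the preceding Remark. There, $\sigma \slice \rho$ by a single slice, but no pair of candidate scattered grains commutes, so $\rho$ is related to nothing other than itself. The inclusion direction for scattered grains follows the same single-swap simulation, with the subsequences now non-contiguous. The non-contiguity changes nothing in the rf argument, since completeness is still per grain.
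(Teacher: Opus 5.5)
Your treatment of $\graineq$ follows the paper's proof. The paper likewise simulates each swap of adjacent independent grains by one slice, exactly as for an event swap ($\tr_1 = \pi_1 \cdot g'$, $\tr_2 = g \cdot \pi_2$). For strictness it uses the wrapped example, where contiguity forces $\wt(z)$ into $g_1$ so that $g_1$ and $g_2$ share a thread, and for scattered grains it uses \figref{glimit} with the accompanying Remark. Your case analysis of $\rf{}$-preservation spells out what the paper dismisses as ``the argument is the same.''

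One hypothesis there should be made explicit: ``its writer lies in the same grain'' assumes the read is not an orphan. Completeness, as phrased via pairs in $\rf{\tr}$, says nothing about orphan reads. Under that literal wording, $g = \ev{T_1, \rd(x)}$ (an orphan read) and $g' = \ev{T_2, \wt(x)}$ would count as independent, yet swapping them changes $\rf{}$. You need the intended reading of $\indrel_G$ (``swappable in any surrounding context''), which also rules out an orphan read in $g$ of a location that $g'$ writes.

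The step that actually fails is your last one: the claim that the inclusion carries over to scattered grains ``by the same single-swap simulation, with the subsequences now non-contiguous.'' The witness $\tr_1 = \pi_1 \cdot g'$, $\tr_2 = g \cdot \pi_2$ presupposes that $g$ and $g'$ are adjacent contiguous blocks. Your own $\rf{}$ argument also used contiguity explicitly: nothing foreign lies between a read and its in-grain writer.

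With scattered grains, events belonging to neither grain can sit inside their spans. For example, take a third thread's $\wt(x)$ between a $\wt(x)$ and a $\rd(x)$ of $g$, where $g'$ never touches $x$. Completeness then imposes nothing, and whether the read keeps its writer depends on where such interleaved events end up. That placement is exactly what no definition fixes.

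This is why the paper (see the footnote) makes no inclusion claim for scattered grains. Its scattered-grain statement is only the separation, witnessed by a run in which no pair of scattered grains commutes at all. You should drop that sentence, or restrict the scattered-grain part of your proof to the separation.
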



%
%
%
%
%


\section{Stacking Slices}
\seclabel{stacking-slices}

In Section \ref{sec:slice}, we introduced the transitive closure of sliced reorderings, as a way of composing multiple sliced reordering transformations, and formally argued that this notion is not expressive enough to simulate $\rfeq$. Here, we recall~\cite{FarzanMathur2024} that states a similar negative result for grains and scattered grains, and use the example to motivate a different way of composing individual sliced reordering moves that overcomes this limitation of expressivity shared by all three notions. 

Let us revisit the example in \figref{lim}. As argued after \thmref{exp-slicestar}, the runs illustrated in (a) and (b) are equivalent up to $\rfeq$ but not so under the symmetric closure of $\sliceq$. In particular, for example, the two slices marked by green and orange curves would transform the run in (a) to the one in (b), but they both correspond to invalid slice reorderings; the green one, for instance, would break the reads-from relation between $\wt(x)$ and $\rd(x)$ in thread $T_2$, marked by a blue arrow. This could be remedied if one could simultaneously reorder the events in the suffix to fix this problem, that is, apply the reordering marked by the orange curve at the same time. 

Unlike the classical way of combining atomic (commutativity) moves, where a program run is transformed step-by-step through a sequence of atomic moves and a sequence of intermediate equivalent runs, our proposal for many-slice reorderings {\em stacks} the moves together in one {\em compound} move that can do more than a sequence of such moves. 
In Figure \ref{fig:lim}, a compound move consisting of both the slicing suggested by the green curve {\em and} the one suggested by the orange curve {\em simultaneously},  one can transform run (a) to run(b). 


\subsection{$k$-slice Reorderings}
\seclabel{k-slices}

Recall the two aspects of a slice reordering --- to demonstrate
if a run $\tr$ can be reordered to another run $\rho$, one first identifies a
subsequence $\tr_1$ (with the residual subsequence being $\tr_2$), and then
arranges the two subsequences one after the other to get $\rho = \tr_1 \cdot \tr_2$
without changing the relative order of events within each of the subsequences.
Intuitively, $k$-slice reorderings can be obtained by generalizing this construction
to more than one slice.

\begin{definition}[k-sliced  reordering]
\deflabel{k-slice-reordering}
Let $\tr$ and $\rho$ be concurrent program runs, and let $k \in \natsp$.
We say that $\rho$ is a $k$-slice reordering of $\tr$,
denoted $\ksliceto{k}{\tr}{\rho}$, if
$\tr \rfeq \rho$, and further,
there are $k+1$ disjoint subsequences $\tr_1, \tr_2, \ldots, \tr_{k+1}$
of $\tr$ (i.e., for every $1 \leq i\neq j \leq k+1$, $\events{\tr_i} \cap \events{\tr_j} = \emptyset$) 
such that $\rho = \tr_1 \cdot \tr_2 \cdots \tr_{k+1}$.
\end{definition}

\noindent
As with sliced reorderings (\propref{sound-slices}),
soundness is baked into the definition of $\kslice{k}$:

\begin{restatable}{proposition}{kSliceSoundness}[Soundness of $k$-sliced reorderings]
\proplabel{sound-k-slices}
For every $k \in \natsp$, $\kslice{k} \subseteq \rfeq$.
\end{restatable}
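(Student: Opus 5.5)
The claim follows directly from \defref{k-slice-reordering}, in the same way as \propref{sound-slices}. Fix $k \in \natsp$ and take any pair $(\tr, \rho) \in \kslice{k}$, i.e., $\ksliceto{k}{\tr}{\rho}$. The first conjunct of \defref{k-slice-reordering} requires $\tr \rfeq \rho$, so $(\tr, \rho) \in \rfeq$. Since the pair was arbitrary, $\kslice{k} \subseteq \rfeq$. By \defref{sound-reordering}, this inclusion is exactly what soundness means for the length-preserving relation $\kslice{k}$.

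Length preservation also holds, though it is not needed for the inclusion itself. The subsequences $\tr_1, \ldots, \tr_{k+1}$ are pairwise disjoint and concatenate to $\rho$. Together with $\tr \rfeq \rho$, which forces $\events{\tr} = \events{\rho}$, this gives $|\tr| = |\rho|$.

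There is no real obstacle here; the argument is a one-line unfolding of the definition. The only care needed is to note that $\rfeq$ is imposed as a hypothesis in the definition of $\kslice{k}$, not derived from the slicing structure.
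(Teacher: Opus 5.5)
Your proposal is correct and matches the paper's approach: the paper also just observes that soundness is built into \defref{k-slice-reordering}, since $\tr \rfeq \rho$ is one of its defining conditions. Your remark on length preservation is a harmless extra that the paper leaves implicit.
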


To illustrate the reorderings of \defref{k-slice-reordering}, recall \figref{lim}.
We have $\ksliceto{2}{\tr}{\gamma}$, i.e., there is a 
`one-shot 2-slice move' that transforms $\tr$ to $\gamma$.
The green and the orange curves pictorially denote the slices and the 
resulting three subsequences $\tr_1, \tr_2, \tr_3$ that witness this move.
The subsequence $\tr_1$ comprises the events above the innermost 
(green) curve, i.e., the first two events of $T_1$ and the first $3$ events of $T_2$.
The subsequence $\tr_2$ comprises of next two events ($\ev{T_1, \wt(t)}$, $\ev{T_1, \rd(y)}$)
of thread $T_1$ and the next two events ($\ev{T_2, \wt(y)}$, $\ev{T_2, \rd(x)}$) of thread $T_2$.
while $\tr_3$ consists of the remaining events.
It is easy to see that $\gamma$ can be obtained by the concatenation $\tr_1 \cdot \tr_2 \cdot \tr_3$,
and further, as we have already noted, $\tr \rfeq \gamma$.
Thus, $\ksliceto{2}{\tr}{\gamma}$.

\subsection{Properties of $k$-slice Reorderings}

It is easy to observe that the relations $\slice$ and $\kslice{1}$ (i.e., $k=1$) coincide. The discussion around the example in \figref{lim} argued the point that $\kslice{2}$ is strictly more expressive than $\kslice{1}$. 
The most significant feature of $\kslice{k}$ is that this strict increase in expressivity scales with the parameter $k$:


\begin{restatable}{proposition}{kSliceGradedExpressivity}[Graded Expressivity]
\proplabel{k-slice-gradation}
For every $k \in \natsp$, $\kslice{k} \subsetneq \kslice{k+1}$.
\end{restatable}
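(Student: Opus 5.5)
The proof has two parts: the inclusion $\kslice{k} \subseteq \kslice{k+1}$, and a separating witness showing the inclusion is strict.

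\emph{Inclusion.} Suppose $\ksliceto{k}{\tr}{\rho}$, witnessed by disjoint subsequences $\tr_1,\ldots,\tr_{k+1}$ with $\rho=\tr_1\cdots\tr_{k+1}$ and $\tr\rfeq\rho$. Take $\tr_{k+2}=\epsilon$. Then $\tr_1,\ldots,\tr_{k+2}$ are still disjoint subsequences of $\tr$, their concatenation is still $\rho$, and $\tr \rfeq \rho$ is unchanged. Hence $\ksliceto{k+1}{\tr}{\rho}$. (If the definition is read as requiring nonempty subsequences, I would instead split the last nonempty $\tr_i$ into two consecutive pieces; a single event can always be cut off.)

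\emph{Strictness.} I would give a family of pairs $(\tr^{(k)},\rho^{(k)})$ with $\tr^{(k)}\rfeq\rho^{(k)}$ that needs exactly $k+2$ blocks. The key observation is a lower bound. Let $\pi$ be the permutation taking positions in $\tr$ to positions in $\rho$. If $\rho=\tr_1\cdots\tr_{m}$ with each $\tr_i$ a subsequence of $\tr$, then each $\tr_i$ is a contiguous block of $\rho$ whose events appear in increasing $\tr$-order. So reading $\rho$ left to right, every \emph{descent}, meaning consecutive events $a,b$ of $\rho$ with $b \trord{\tr} a$, forces a block boundary. Thus $m \geq 1+\#\text{descents}$. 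Conversely, cutting $\rho$ at exactly its descents gives a valid decomposition. So the minimal number of slices equals the number of descents of $\rho$ read against $\tr$.

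For the witness, I would use $k+2$ threads $T_1,\ldots,T_{k+2}$, each performing a single event on a distinct variable, for instance $\ev{T_i,\wt(x_i)}$. Let $\tr$ run $T_1,\ldots,T_{k+2}$ in that order and $\rho$ run them in reverse. Program order is trivial and no reads exist, so $\tr\rfeq\rho$. Every adjacent pair in $\rho$ is a descent, which gives $k+1$ descents and requires $k+2$ blocks. Hence $\ksliceto{k+1}{\tr}{\rho}$ holds but $\ksliceto{k}{\tr}{\rho}$ does not.

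The only delicate step is making the descent lower bound rigorous. It uses that each slice keeps the relative $\tr$-order of its events, which is immediate from the definition of subsequence, and that the slices occupy contiguous segments of $\rho$, which holds because $\rho$ is their concatenation. I expect no other real obstacle. If a more program-like witness is preferred, the same reversal idea can be built with reads-from constraints, such as the \figref{lim} pattern iterated, but the reversal example already suffices for the statement.
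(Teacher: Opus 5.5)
Your inclusion argument matches the paper's, which also takes the last subsequence to be empty. Your descent lemma is exactly the mechanism behind the paper's strictness argument: two events that are adjacent in $\rho$ but inverted relative to $\tr$ cannot share a subsequence, so they force a boundary.

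The gap is in your separating witness. It uses $k+2$ distinct threads, so the thread set has to grow with $k$. The paper fixes $\labs$ (hence $\threads$) and assumes its size is constant, and $\kslice{k}$ and $\kslice{k+1}$ are relations on that fixed $\labs^*$. With $|\threads| = p$, your family only separates the levels with $k \le p-2$, so it does not prove the statement for every $k$. If you allow an unbounded supply of thread identifiers your argument is complete, but that is not the paper's setting. The obstruction is structural, not cosmetic. Events of one thread can never be inverted, so a fully reversed block of $m$ events needs $m$ threads. With boundedly many threads, the $k+1$ descents must instead come from repeated alternation between the same threads.

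Your own lemma repairs this immediately, and the repaired witness is the paper's: two threads with $k+2$ events each, run sequentially versus round-robin. Concretely:
\begin{itemize}
\item Let $a_j = \ev{T_1,\rd(x)}$ and $b_j = \ev{T_2,\rd(x)}$ for $1 \le j \le k+2$.
\item Let $\tr = a_1\cdots a_{k+2}\, b_1\cdots b_{k+2}$ and $\rho = a_1 b_1 a_2 b_2 \cdots a_{k+2} b_{k+2}$.
\item There are no writes, so every read is an orphan in both runs. Program order is unchanged, hence $\tr \rfeq \rho$.
\item The descents of $\rho$ against $\tr$ are exactly the pairs $(b_j, a_{j+1})$ for $1 \le j \le k+1$.
\end{itemize}
Your lemma then yields $\ksliceto{k+1}{\tr}{\rho}$ but not $\ksliceto{k}{\tr}{\rho}$, uniformly in $k$, over an alphabet with two threads and one variable.

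Separately, drop the parenthetical about nonempty subsequences, because under that reading the inclusion is false. A run of length $k+1$ is related to itself by $\kslice{k}$ via $k+1$ singletons, but it cannot be split into $k+2$ nonempty pieces. So ``a single event can always be cut off'' does not hold. The definition allows empty subsequences (the paper's reflexivity proof uses $\tr_2 = \epsilon$), so your main inclusion argument is the right one.
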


\begin{figure}[t]
\includegraphics[scale=0.8]{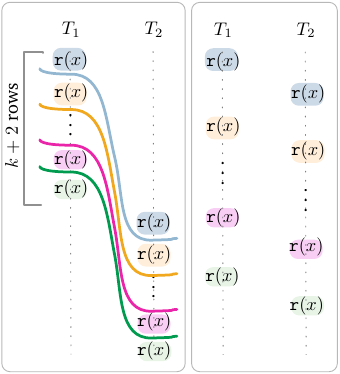}
\caption{$\tr^{\sf seq}$ (left) and $\tr^{\sf int}$ (right) \figlabel{k-slice-gradation}}
\end{figure}

\begin{proof}
The inclusion is straightforward since one can always choose the last partition to be an empty word.
The more interesting part of the statement of \propref{k-slice-gradation} is that
$\kslice{k+1}$ is strictly
more permissive than $\kslice{k}$, for every $k$.
To understand why, consider the sequential run $\tr^{\sf seq}$
and the interleaved run $\tr^{\sf int}$ of \figref{k-slice-gradation}, 
each consisting of $k+2$ events in both threads. Also note that $\tr^{\sf int} \rfeq \tr^{\sf seq}$.
As the figure demonstrates, indeed we have $\tr^{\sf seq} \kslice{k+1} \tr^{\sf int}$.
Now, let us argue that we cannot reorder $\tr^{\sf seq}$ into
$\tr^{\sf int}$ with less than $k+1$ slices.
Suppose on the contrary that $\ksliceto{k}{\tr^{\sf seq}}{\tr^{\sf int}}$.
Consider the subsequences $\tr^{\sf seq}_1, \tr^{\sf seq}_2 \ldots \tr^{\sf seq}_k$
that witnesses this reordering.
It is easy to see that the $i^{th}$ event of $T_2$ must appear
in an earlier subsequence than the $(i+1)^{th}$ event of $T_1$
because they appear in the inverse order in $\tr^{\sf int}$.
This means there must be at least  $k+1$ distinct subsequences, which is a contradiction.
That is, $\tr^{\sf seq} \notkslice{k} \tr^{\sf int}$.
\end{proof}

\propref{k-slice-gradation} implies, in a straightforward manner, that for any $k$, there are a pair of runs $\sigma$ and $\rho$ such that 
$$\sigma \kslice{k} \rho \land \sigma \notkslice{k - 1} \rho.$$ 
Below, we state this corollary and strengthen it by making the claim symmetric:

\begin{restatable}{proposition}{kSliceLowerLimitWidth}[Lower-bound on $m$ for $m$-slice relations]
\proplabel{lower-limit-width}
For any $m$, there exists a pair of runs $\sigma$ and $\rho$ such that $\sigma \kslice{m} \rho$ and $\rho \kslice{m} \sigma$ while $\sigma \notkslice{m - 1}$ and $\rho \notkslice{m - 1} \sigma$
\end{restatable}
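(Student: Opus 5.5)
The plan is to glue together two independent copies of the gadget from the proof of \propref{k-slice-gradation}, one for each direction, on disjoint threads and disjoint memory locations. Instantiate that proof with $k = m-1$, so each thread of the gadget has $m+1$ events. Gadget $A$ lives on threads $T_1, T_2$ over one set of variables. It comes with the runs $A^{\sf seq}$ and $A^{\sf int}$, and a witness $a_1, \ldots, a_{m}$ for $\ksliceto{m-1}{A^{\sf seq}}{A^{\sf int}}$; these are $m$ pieces, $m-1$ slices. Gadget $B$ is an isomorphic copy on threads $T_3, T_4$ over fresh variables, with witness pieces $b_1, \ldots, b_m$. I will then choose runs $\tr, \rho$ over all four threads with
\[
\proj{\tr}{A} = A^{\sf seq},\quad \proj{\tr}{B} = B^{\sf int},\quad \proj{\rho}{A} = A^{\sf int},\quad \proj{\rho}{B} = B^{\sf seq}.
\]
Since the two gadgets share no thread and no variable, $\po{}$ and $\rf{}$ are determined by the two projections. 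Hence any two runs with these projections are $\rfeq$-equivalent, and I am free to pick the cross-gadget interleaving.

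\textbf{Lower bounds.} Both lower bounds come from a projection lemma. Suppose $\ksliceto{j}{\tr}{\rho}$ is witnessed by $\tr_1, \ldots, \tr_{j+1}$. Projecting each piece onto the events of gadget $A$ gives disjoint subsequences of $\proj{\tr}{A}$ whose concatenation is $\proj{\rho}{A}$. Projection also preserves $\rfeq$ here, because the gadgets are disjoint. So $\ksliceto{j}{\proj{\tr}{A}}{\proj{\rho}{A}}$. By the counting argument of \propref{k-slice-gradation}, the $i$-th event of $T_2$ must lie in a strictly earlier piece than the $(i+1)$-th event of $T_1$. This chain needs at least $m$ pieces, so $\ksliceto{j}{A^{\sf seq}}{A^{\sf int}}$ forces $j \ge m-1$. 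One boundary case needs attention: depending on how \figref{k-slice-gradation} is instantiated, the chain may instead force exactly $m+1$ pieces. In that case I will resize the gadgets, using $m$ events per thread, so that exactly $m-1$ slices are necessary and sufficient. The same projection applied to gadget $B$ in the reverse direction gives $\rho \notkslice{m-2} \tr$.

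\emph{Reconciling indices.} The statement asks to rule out $\kslice{m-1}$ while allowing $\kslice{m}$. So I will size the gadgets such that each needs exactly $m$ slices, i.e. $m+1$ pieces. In what follows, $a_1, \ldots, a_{m+1}$ and $b_1, \ldots, b_{m+1}$ denote these tight witnesses. The projection lemma then gives $\tr \notkslice{m-1} \rho$ and $\rho \notkslice{m-1} \tr$.

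\textbf{Upper bounds.} Here the interleavings must be chosen carefully, and this is the step I expect to be the main obstacle. The reverse moves $A^{\sf int} \to A^{\sf seq}$ and $B^{\sf int} \to B^{\sf seq}$ each need only one slice, namely ``all of one thread first''. This spare capacity is what lets one gadget's cheap move ride along inside the other gadget's expensive one.
\begin{itemize}
\item For $\tr \kslice{m} \rho$, I will set $\tr$ to be an interleaving of $A^{\sf seq}$ and $B^{\sf int}$. I will aim for $\rho = (a_1\beta_1)(a_2\beta_2)\cdots(a_{m+1}\beta_{m+1})$, where $\beta_1 = $ the $T_3$-events of $B$, $\beta_2 = $ the $T_4$-events of $B$, and the other $\beta_i$ are empty. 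The pieces $a_i\beta_i$ must be subsequences of $\tr$ in this order. This holds provided, in $\tr$, all $T_3$-events of $B$ come after $a_1$ and before the material of $a_2$, and all $T_4$-events come after those.
\item For $\rho \kslice{m} \tr$, I need the symmetric condition with the roles of $A$ and $B$ swapped.
\end{itemize}
The two sets of placement constraints conflict if both gadgets are oriented the same way, as my first attempt shows. I will try to resolve this in two ways, in order. First, mirror gadget $B$, making $T_4$ the thread that goes first in $B^{\sf seq}$. Second, exploit the fact that the first and last pieces of the tight witnesses contain only a prefix of one thread and a suffix of the other, which leaves room to slot a whole thread of the other gadget there. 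I will then check the two resulting order constraints for consistency directly, for example by drawing the figure for $m=2$ and generalizing. If this still fails, the fallback is to pad both gadgets by one round. This costs one slice in each direction, and the construction then proves the statement with $m$ shifted by one, which after re-indexing is what is required.
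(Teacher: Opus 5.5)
Your lower-bound half is sound: restricting a witness for $\ksliceto{j}{\tr}{\rho}$ to gadget $A$ yields a witness for $\ksliceto{j}{A^{\sf seq}}{A^{\sf int}}$, and the chain argument of \propref{k-slice-gradation} then bounds $j$ from below. The gap is the upper bound. It is the only new content of the statement, and you leave it open. Moreover, the layout you propose cannot work.

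Write $x_i,y_i$ for the $i$-th events of $T_1,T_2$ and $c_i,d_i$ for those of $T_3,T_4$. The tight pieces for $A$ are then $a_j=x_jy_j$.
\begin{itemize}
\item Take any witness for $\ksliceto{m}{\rho}{\tr}$. The $B$-projection already forces $m$ cuts, so each piece carries exactly one round of $B$. In particular the first piece, which is a prefix of $\tr$, contains $c_1$.
\item That first piece cannot contain $y_1$. If it did, it would also contain $x_n$, and $x_n,y_1$ occur in opposite orders in $\tr$ and $\rho$.
\item Hence $c_1$ precedes every $y_j$ in $\tr$.
\item In your $\rho=(a_1\beta_1)(a_2\beta_2)\cdots$, the piece $a_j\beta_j$ that receives $c_1$ is a subsequence of $\tr$ only if $y_j$ precedes $c_1$ in $\tr$. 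This contradicts the previous point.
\end{itemize}
Your stated sufficient condition already fails before this, since $x_2\in a_2$ precedes $y_1\in a_1$ in $A^{\sf seq}$. Mirroring $B$ only renames the first event of $B^{\sf int}$, so it does not help.

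The fallback does not help either. Padding raises the projection lower bound by exactly as much as it enlarges the gadgets' own tight witnesses. You are left with the same problem of fitting the other gadget into a witness with no spare piece. And if the glued run needs one slice more than the projection bound, no re-indexing removes that gap, because the statement requires $m$ to be tight in both directions simultaneously.

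The missing idea is that each piece must interleave the two gadgets, with its events listed in the order of the other run. Here "$\rho$-pieces" are the blocks of $\rho$ in the witness for $\ksliceto{m}{\tr}{\rho}$, and "$\tr$-pieces" are the blocks of $\tr$ for the converse. Take $n=m+1$ events per thread.
\begin{itemize}
\item Put $x_j,y_j$ in $\rho$-piece $j$, all $c_i$ in $\rho$-piece $1$, and all $d_i$ in $\rho$-piece $n$.
\item Symmetrically, put $c_i,d_i$ in $\tr$-piece $i$, all $x_j$ in $\tr$-piece $1$, and all $y_j$ in $\tr$-piece $n$.
\item Let $\tr$ sort the events lexicographically by ($\tr$-piece, $\rho$-piece), and let $\rho$ sort them by ($\rho$-piece, $\tr$-piece). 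The remaining ties are always between an $A$-event and a $B$-event; break them the same way in both runs.
\end{itemize}
The four projections then come out as required. Every $\rho$-piece is a contiguous block of $\rho$ listed in $\tr$-order, and vice versa, so $m$ slices suffice in each direction.

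The paper needs no gluing at all. It takes the round-robin runs $x_1y_1\cdots x_my_m$ and $y_1x_1\cdots y_mx_m$. Renaming $T_1\leftrightarrow T_2$ exchanges them, so symmetry comes for free. Each direction has exactly $m$ drops, which gives the upper and lower bounds at once.
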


\begin{proof}
Recall \figref{k-slice-gradation}. Imagine $\sigma$ is $\sigma^{\sf int}$ from the figure, and $\rho$ is the same style of round-robin execution, except it starts with thread $T_2$ in contrast to $\sigma^{\sf int}$ that starts with thread $T_1$. Following the same line of the argument as before, it is easy to see that one needs a slice for each round of the round-robin execution to reorder the events from threads $T_1$ and $T_2$ that are in the wrong order. As such, one needs $k+1$ slices to go from $\sigma$ to $\rho$ and the same number to go from $\rho$ to $\sigma$. Let $m = k+1$, and the \propref{lower-limit-width} is proved. 
\end{proof}

%
%



$\kslice{k}$ is reflexive because $\slice = \kslice{1}$ is reflexive
(\propref{slice-refl-symm-trans}) and \propref{k-slice-gradation} guarantees monotonicity.
On the other hand, symmetry and transitivity continue to remain absent for
$k$-sliced reorderings, for all values of $k$:

\begin{restatable}{proposition}{kSliceNotEquivalence}[Not An Equivalence Relation]
\proplabel{k-slice-refl-symm-trans}
For every $k \in \natsp$, $\kslice{k}$ is reflexive, but neither symmetric nor transitive.
\end{restatable}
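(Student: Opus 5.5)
Reflexivity is immediate: take $\tr_1 = \tr$ and all remaining $k$ subsequences empty, so $\rho = \tr$ and trivially $\tr \rfeq \tr$. Equivalently, as noted before the statement, $\slice = \kslice{1}$ is reflexive (\propref{slice-refl-symm-trans}) and $\kslice{1} \subseteq \kslice{k}$ by \propref{k-slice-gradation}.

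For non-symmetry I would reuse the witness from \propref{k-slice-gradation}. Scale \figref{k-slice-gradation} so that each thread has $k+2$ events. Then $\tr^{\sf int} \kslice{k} \tr^{\sf seq}$: take $\tr_1$ to be all events of $T_1$ and $\tr_2$ to be all events of $T_2$, with the rest empty. This uses only one slice, and it is $\rfeq$-preserving because $\tr^{\sf int} \rfeq \tr^{\sf seq}$. In the opposite direction, the counting argument of \propref{k-slice-gradation} shows that the $i$-th event of $T_2$ must lie in a strictly earlier subsequence than the $(i+1)$-th event of $T_1$. This forces a strictly increasing chain of $k+2$ subsequence indices, which is more than the $k+1$ available, so $\tr^{\sf seq} \notkslice{k} \tr^{\sf int}$.

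For non-transitivity I would use the same chain-counting idea with two moves that each fit within $k$ slices but whose composition needs more. Start from $\tr^{\sf seq}$ with $T_1$ having events $a_1,\dots,a_n$ and $T_2$ having events $b_1,\dots,b_n$, with $n$ large. The first move $\tr^{\sf seq} \kslice{k} \gamma$ interleaves in coarse blocks: $\gamma$ places $b$-blocks before $a$-blocks using $k+1$ rounds. The second move $\gamma \kslice{k} \rho$ refines each block into a finer interleaving. The target $\rho$ is a fine round-robin with about $(k+1)^2$ alternations, and by the counting argument it needs about $(k+1)^2 > k+1$ subsequences when reached directly from $\tr^{\sf seq}$. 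A simpler alternative, which I would try first, is to take $\gamma$ with $\tr^{\sf seq} \kslice{k} \gamma$ where $\gamma$ has $k+1$ alternations, and then let $\gamma \kslice{1} \rho$ add one extra inversion, giving $k+2$ alternations in total. Either way, all runs must stay $\rfeq$-equivalent. The safest way to guarantee this is to use events with no reads-from interaction between threads, for example $T_1$ writing to $x$ and $T_2$ writing to $y$, or both threads only reading. Then any interleaving preserves $\po{}$ and $\rf{}$, and the only constraint is the slice-count argument.

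The main obstacle is the combinatorial lower bound for the composite move. I will state it as a lemma: if $\rho$ restricted to two threads alternates in $m$ maximal blocks relative to $\tr^{\sf seq}$, where every $T_1$-event precedes every $T_2$-event, then any witness of $\tr^{\sf seq} \kslice{j} \rho$ requires $j+1 \geq$ the number of $T_2$-blocks that precede some $T_1$-block, plus one. The proof follows the chain argument of \propref{k-slice-gradation}: each time a $T_2$-block precedes a later $T_1$-block in $\rho$, the subsequence index must strictly increase. With this lemma in hand, I only need to check that each individual step uses at most $k$ slices, which I verify by explicitly exhibiting the partition.
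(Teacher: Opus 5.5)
Your reflexivity and non-symmetry arguments are the paper's, with the same witnesses $\tr^{\sf int}$ and $\tr^{\sf seq}$. The inequalities between the $i$-th event of $T_2$ and the $(i{+}1)$-th event of $T_1$ only chain together because the subsequence index is non-decreasing along the target run. That is the drop counting of \propref{depth-drops}: $d$ drops force at least $d+1$ subsequences, and $d+1$ suffice.

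For non-transitivity, commit to your ``simpler alternative''. It works, and it is smaller and easier to check than the paper's witness, which uses three runs with $2k+2$ events per thread and argues that the composite needs about $2k$ slices. Concretely, take $T_1$ events $a_1,\dots,a_{k+2}$ and $T_2$ events $b_1,\dots,b_{k+2}$, with no cross-thread conflicts, and let
\begin{itemize}
\item $\tr^{\sf seq} = a_1\cdots a_{k+2}\,b_1\cdots b_{k+2}$,
\item $\gamma = a_1b_1\cdots a_kb_k\,a_{k+1}a_{k+2}\,b_{k+1}b_{k+2}$,
\item $\rho = a_1b_1\cdots a_{k+2}b_{k+2}$.
\end{itemize}
Then $\gamma$ has $k$ drops relative to $\tr^{\sf seq}$, $\rho$ has one drop relative to $\gamma$, and $\rho$ has $k+1$ drops relative to $\tr^{\sf seq}$. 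Hence $\tr^{\sf seq} \kslice{k} \gamma$ and $\gamma \kslice{k} \rho$, but $\tr^{\sf seq} \notkslice{k} \rho$. Make sure ``alternations'' means $T_2$-to-$T_1$ switches: $\gamma$ may have at most $k$ of them, i.e.\ $k+1$ rounds. Your lemma is just the first half of \propref{depth-drops} specialised to two threads, so you can cite it.

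Drop the coarse-to-fine variant, because it fails. Take $\gamma = A_1B_1\cdots A_{k+1}B_{k+1}$. Splitting even one round $A_iB_i$ into $A_i'B_i'A_i''B_i''$ creates a drop relative to $\gamma$ at the step from $B_i'$ to $A_i''$. So refining all $k+1$ rounds costs at least $k+1$ drops, and $\gamma \kslice{k} \rho$ is false.

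No repair of that variant can reach $(k+1)^2$ with two threads either:
\begin{itemize}
\item A $T_2$-to-$T_1$ switch of $\rho$ that is not itself a drop relative to $\gamma$ must cross a $T_2$-to-$T_1$ boundary of $\gamma$.
\item Two consecutive such switches can cross a common boundary only if a drop relative to $\gamma$, of the form $T_1$ followed by $T_2$, lies between them.
\end{itemize}
Hence two successive $k$-slice moves from $\tr^{\sf seq}$ create at most $2k$ switches, so the composite always lies in $\kslice{2k}$.
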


\begin{proof}

The absence of symmetry can be explained through
the two executions $\tr^{\sf int}$ and $\tr^{\sf seq}$ in \figref{k-slice-gradation}.
Recall that $\ksliceto{k}{\tr^{\sf int}}{\tr^{\sf seq}}$ does not hold.
But in the other direction, observe that one can obtain
$\tr^{\sf seq}$ from $\tr^{\sf int}$ using a single slice, i.e.,
two subsequences $\tr^{\sf int}_1$ and $\tr^{\sf int}_2$.
The subsequence $\tr^{\sf int}_1$ consists of exactly all events of thread $T_1$,
while the subsequence $\tr^{\sf int}_2$ consists of exactly all events of thread $T_2$.
It is easy
 to see that $\tr^{\sf seq} = \tr^{\sf int}_1 \cdot \tr^{\sf int}_2$.
This means, 
$\ksliceto{1}{\tr^{\sf int}}{\tr^{\sf seq}}$ and thus 
$\ksliceto{k}{\tr^{\sf int}}{\tr^{\sf seq}}$ (\propref{k-slice-gradation}).
But, the converse is not true. In other words, $\kslice{k}$ is not symmetric.

\begin{figure}[t]
\includegraphics[scale=0.8]{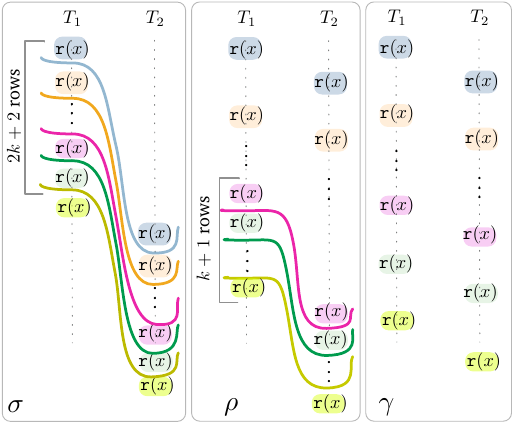}
\caption{$\kslice{k}$ is not transitive}
\figlabel{k-slice-not-transitive}
\end{figure}
The lack of transitivity also follows because of analogous reasoning.
Consider the three executions $\tr, \rho$ and $\gamma$ in \figref{k-slice-not-transitive}, each containing
$2k+2$ events in both threads.
We note that $\ksliceto{k}{\tr}{\rho}$ using an argument similar to the previous example.
Likewise, $\ksliceto{k}{\rho}{\gamma}$ for the same reason.
But, in order to transform $\tr$ to $\gamma$, we do need at least $2k$ slices and thus
$\notksliceto{k}{\tr}{\gamma}$.
\end{proof}

\subsection{Expressive Power of $k$-sliced Reorderings}
\seclabel{many-slices}

We start by briefly stating results analogous to those in \secref{comparison} for $k$-slice reorderings. 

\begin{restatable}{theorem}{kSliceExpressivity}[Expressivity of $k$-slice Reorderings]
\thmlabel{mslice-expressivity}
$k$-slice reorderings have an expressive power that is incomparable wrt Mazurkiewicz commutativity ($\mazeq$), grains and scattered grains. They are strictly weaker than reads-from equivalence ($\rfeq$). 
\end{restatable}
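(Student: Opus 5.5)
We have to prove three things for each fixed $k \in \natsp$: (i) $\kslice{k} \not\subseteq \mazeq$ (and likewise for grains and scattered grains); (ii) $\mazeq \not\subseteq \kslice{k}$ (and likewise for grains and scattered grains); (iii) $\kslice{k} \subsetneq \rfeq$.

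For (i) we reuse the witnesses of \secref{comparison}. The two-thread example after \thmref{maz-subsumption} performs a single slice move that reorders the block $\wt(x)\rd(x)$ of $T_1$ past the same block of $T_2$, so it lies in $\kslice{1}$. Those two runs are not trace equivalent, since the conflicting writes on $x$ change order. By \propref{k-slice-gradation}, $\kslice{1} \subseteq \kslice{k}$, so the pair is in $\kslice{k}$ for every $k$. For grains and scattered grains, the runs $\sigma,\rho$ of \figref{glimit} satisfy $\sliceto{\sigma}{\rho}$, and the Remark shows that no choice of (scattered) grains relates them. Monotonicity again lifts this to every $k$.

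For (ii) we need a trace-equivalent pair that is not a $k$-slice reordering. The natural candidate is the construction from the proof of \propref{k-slice-gradation}, $\tr^{\sf seq}$ versus $\tr^{\sf int}$, instantiated with $k+2$ events per thread, but using only read events, e.g. all $\rd(x)$, or writes to thread-local distinct variables. This makes every cross-thread pair independent, so $\tr^{\sf seq} \mazeq \tr^{\sf int}$ by swaps. The inversion-counting argument of \propref{k-slice-gradation} does not depend on the operations at all. The $i$-th event of $T_2$ must lie in a strictly earlier subsequence than the $(i+1)$-th event of $T_1$. Since subsequence indices along each thread are monotone, this forces a chain of $k+2$ strictly increasing indices, so $\tr^{\sf seq} \notkslice{k} \tr^{\sf int}$. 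Grain equivalence subsumes $\mazeq$ by taking singleton grains, and scattered grains generalize grains. The same pair therefore separates $\kslice{k}$ from both of them, which establishes incomparability.

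For (iii), containment $\kslice{k} \subseteq \rfeq$ is \propref{sound-k-slices}. Strictness follows from the pair in (ii): the two runs are trace equivalent, hence $\rfeq$-equivalent because $\mazeq \subseteq \rfeq$, but they are not related by $\kslice{k}$.

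The main obstacle is making the inversion-counting step in (ii) rigorous for the read-only variant. Specifically, we must verify that within each thread the slice index is non-decreasing, which follows from program-order preservation and the subsequences being concatenated in order, and that each cross-thread inversion forces a strict increase. Together these give a strictly increasing chain of length $k+2$, which exceeds the $k+1$ available subsequences.
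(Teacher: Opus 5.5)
Your proof is correct, and parts (i) and (ii) follow the paper's own argument. A single slice move that lies beyond $\mazeq$ (and beyond grains, via \figref{glimit}) is lifted to every $k$ by monotonicity. A read-only round-robin pair that is trace equivalent but needs more than $k+1$ subsequences gives the other direction. The paper uses the symmetric pair of \propref{lower-limit-width} where you use $\tr^{\sf seq}$ versus $\tr^{\sf int}$; the difference is immaterial.

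Where you genuinely depart is strictness against $\rfeq$. The paper says this is identical to \thmref{exp-slicestar}, i.e.\ it appeals to the pair of \figref{lim}. But the paper itself observes that this pair is related by $\kslice{2}$, so it only separates $\kslice{1}$ from $\rfeq$; for $k \geq 2$ another witness is needed. You observe that any pair in $\mazeq$ but not in $\kslice{k}$ is also in $\rfeq$ but not in $\kslice{k}$, because $\mazeq \subseteq \rfeq$. This gives strictness uniformly in $k$ at no extra cost, so your route is the more robust one.

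One step needs tightening. Write $a_i, b_i$ for the $i$-th events of $T_1, T_2$, and $s(e)$ for the index of the subsequence containing $e$. Within-thread monotonicity together with $s(b_i) < s(a_{i+1})$ does not yield a contradiction: putting all of $T_2$ in subsequence $1$ and all of $T_1$ in subsequence $2$ satisfies both constraints. What you need is that $s$ is non-decreasing along the target run itself, since that run is the concatenation $\tr_1 \cdots \tr_{k+1}$. In particular this gives $s(a_{i+1}) \leq s(b_{i+1})$ for the non-inverted cross-thread pairs. Then $s(b_1) < s(b_2) < \cdots < s(b_{k+1}) < s(a_{k+2})$, which requires $k+2$ distinct indices. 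Equivalently, count the $k+1$ drops $b_i \to a_{i+1}$ as in \propref{depth-drops}.
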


\begin{proof}
We can argue that $\kslice{k}$ for any $k$ is incomparable to $\mazeq$. First, recall that for any $k$, we have  $\slice \subseteq \kslice{k}$ and we already argued in Section \ref{sec:comparison} that $\slice$ can simulate a grain swap that is beyond the power of $\mazeq$. Recall the argument for \propref{lower-limit-width}. Observe that $\sigma \mazeq \rho$ since they only have read operations that fully commute. This shows that $\kslice{k}$ also does not subsume $\mazeq$. But, the latter is for an obvious reason. $\mazeq$ permits an arbitrary sequence of swaps whereas $\kslice{k}$ is a single shot move. As we argue in Section \ref{sec:comparison}, permitting a sequence of moves, even for $\kslice{1}$, would immediately yield a relation that subsumes $\mazeq$. 

The arguments for grains and scattered grains are similar. Since already a single slice can deliver different expressivity, we only need to argue that $k$-slice reorderings do not subsume grains. But, since they do not subsume $\mazeq$, they cannot subsume grains or scattered grains (which both subsume $\mazeq$) either. 

\propref{sound-k-slices} implies that $\rfeq$ subsumes $k$-slice orderings. The proof of the strictness of this subsumption is identical to that of \thmref{exp-slicestar}.

\end{proof}

While \propref{k-slice-gradation} states that higher values of $k$
lead to increasingly more reorderings using $k$-slices, this increase in expressiveness converges as $k$ approaches the length of the run.

\begin{restatable}{proposition}{kSliceBounded}
\proplabel{k-slice-bounded}
If $\sigma \kslice{k} \rho$ for some value $k$, then $\sigma \kslice{m} \rho$ for some $m \le |\sigma| - 1$.
\end{restatable}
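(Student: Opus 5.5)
The plan is to show that any witness for $\sigma \kslice{k} \rho$ can be compressed into one that uses at most $|\sigma|$ subsequences, that is, at most $|\sigma|-1$ slices. Let $n = |\sigma|$. Fix a witness $\tr_1, \ldots, \tr_{k+1}$ with $\rho = \tr_1 \cdot \tr_2 \cdots \tr_{k+1}$ and $\sigma \rfeq \rho$. If $k+1 \le n$, we are done: take $m = k$, or pad with empty subsequences if a specific $m$ is needed. By the inclusion part of \propref{k-slice-gradation}, padding with empty subsequences never hurts.

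Now suppose $k+1 > n$. Since the $\tr_i$ are pairwise disjoint and together cover all $n$ events, at most $n$ of them are nonempty. Deleting the empty subsequences leaves the concatenation unchanged, so it still equals $\rho$. It also keeps the relative order of the surviving subsequences and their disjointness. This gives a witness with at most $n$ subsequences, hence $m \le n-1$ slices. The condition $\sigma \rfeq \rho$ is untouched, because it concerns only $\sigma$ and $\rho$.

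A slightly sharper alternative, not needed here, merges consecutive subsequences $\tr_i, \tr_{i+1}$ whenever $\tr_i \cdot \tr_{i+1}$ is itself a subsequence of $\sigma$. That is, the last event of $\tr_i$ precedes the first event of $\tr_{i+1}$ in $\trord{\sigma}$. Merging repeatedly shows that the number of needed pieces is at most one plus the number of "descents" of $\rho$ relative to $\sigma$, which is at most $n-1$.

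The only step requiring care is the edge case $n = 0$. There the bound $m \le -1$ is vacuous or ill-typed, since $k \in \natsp$. I would handle it by noting that the statement is meant for nonempty runs, or else interpret $m=0$ as the identity relation. Otherwise the argument is a routine pigeonhole count, and I expect no real obstacle.
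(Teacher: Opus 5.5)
Your argument is correct, but it takes a different route from the paper's.

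The paper ignores the given witness entirely. It keeps only the consequence $\sigma \rfeq \rho$ and re-slices from scratch, placing each event in its own singleton subsequence in the order in which it occurs in $\rho$. This immediately gives $\sigma \kslice{n-1} \rho$. You instead keep the given witness and note, by pigeonhole, that at most $n$ of its pieces are nonempty. Discarding the empty ones changes neither the concatenation nor $\sigma \rfeq \rho$.

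The paper's version buys more, because it depends only on reads-from equivalence. The same two lines prove \thmref{k-slice-rf-bounded} and hence \corref{many-slice-rf-limit}. Your compression argument cannot deliver these on its own, since it presupposes a bounded-slice witness. Yours, in turn, yields the slightly finer bound $m \le \min(k, n-1)$, so it never uses more slices than the witness you started from. Your merging variant (fusing adjacent pieces across non-descents) essentially reproduces the cut-at-drops decomposition from the proof of \propref{depth-drops}.

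One small correction to your edge-case remark. For $n = 0$ no admissible $m$ exists at all. The convention that $\kslice{0}$ is the identity is what is needed for $n = 1$, where $m \le 0$ is forced even though \defref{k-slice-reordering} only allows $k \in \natsp$. The paper's proof, which produces $\kslice{n-1}$, relies on the same convention.
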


\begin{proof}
The proof follows from the observation that a list of size $n$ 
can be sorted using at most $n - 1$ insertions since 
each partition in a slicing can be viewed as an insertion, as we previously discussed in Section \ref{sec:comparison}. In other words, one can select each event of $\tr$ as part
of a distinct slice as follows --- if event $e$ appears at the
$i^\text{th}$ position in the target reordering $\rho$, then the subsequence $\tr_i$
can be chosen to contain exactly the singleton set $\set{e}$.
With this choice of subsequences $\tr_1, \tr_2, \ldots, \tr_n$,
it is easy to see that $\rho = \tr_1 \cdot \tr_2 \cdots \tr_n$
and thus $\ksliceto{n-1}{\tr}{\rho}$. 
\end{proof}

In particular, \propref{k-slice-bounded} implies one of the two key defining features of $k$-slices as predictors: limited to runs of length (up to) $k$, it suffices to use at most $k-1$ slices to witness any reordering, even those that are $\rfeq$ to it:

\begin{restatable}{theorem}{kSliceReadsFromBounded}
\thmlabel{k-slice-rf-bounded}
If $\sigma \rfeq \rho$, then $\sigma \kslice{m} \rho$ for some $m \le |\sigma| - 1$.  
\end{restatable}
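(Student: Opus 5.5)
The plan is to build the witnessing slices explicitly, one event per slice, mirroring the argument for \propref{k-slice-bounded} but starting from $\tr \rfeq \rho$ rather than from an existing sliced reordering. Write $n = |\tr|$ and $\rho = f_1 \cdot f_2 \cdots f_n$. Since $\rfeq$ is defined as the smallest equivalence generated by pairs with identical program order and reads-from, the first step is to note that this generating relation is already an equivalence. It is reflexive and symmetric, and it is transitive because equality of $\po{}$ and $\rf{}$ composes. So $\tr \rfeq \rho$ gives $\events{\tr} = \events{\rho}$, $\po{\tr} = \po{\rho}$ and $\rf{\tr} = \rf{\rho}$. In particular $\rho$ is a permutation of the events of $\tr$, and each $f_i$ is an event of $\tr$.

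Next define the $n$ subsequences $\tr_i$ for $1 \le i \le n$ by letting $\tr_i$ be the length-one subsequence of $\tr$ consisting of the single event $f_i$. Because the $f_i$ are pairwise distinct events (events carry unique identifiers), the sets $\events{\tr_i}$ are pairwise disjoint. Each $\tr_i$ is trivially a subsequence of $\tr$, since a single event is always a subsequence, and the order within each subsequence imposes no constraint. Concatenating gives $\tr_1 \cdot \tr_2 \cdots \tr_n = f_1 \cdots f_n = \rho$.

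Together with the hypothesis $\tr \rfeq \rho$, this meets both conditions of \defref{k-slice-reordering} with $k+1 = n$, that is, $k = n-1$. Hence $\ksliceto{n-1}{\tr}{\rho}$, and we may take $m = n - 1 = |\tr| - 1$.

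There are two boundary points to handle. If $n \ge 2$, then $m \ge 1$ and $m \in \natsp$ as the definition requires. If $n \le 1$, then $\rho = \tr$ and reflexivity of $\kslice{1}$ (\propref{k-slice-refl-symm-trans}) gives the claim. This case needs a brief remark because $m$ must be positive, unless one reads $\kslice{0}$ as the identity, as the introduction does. The only step with real content is the permutation fact from the first paragraph; everything else is bookkeeping. Alternatively, the theorem follows from \propref{k-slice-bounded} once we know $\tr \kslice{k} \rho$ for some $k$, and the construction above is exactly that witness.
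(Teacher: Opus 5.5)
Your proposal is correct and follows the paper's proof. Both put each event $f_i$ of $\rho$ into its own singleton subsequence $\tr_i$ of $\tr$, so that $\rho = \tr_1 \cdot \tr_2 \cdots \tr_n$; together with $\tr \rfeq \rho$ this witnesses $\ksliceto{n-1}{\tr}{\rho}$, and your extra remarks (that the generating relation of $\rfeq$ is already an equivalence, and that the cases $n \le 1$ clash with the requirement $m \in \natsp$) cover details the paper's argument passes over silently.
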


\begin{proof}
The proof follows from the observation that a list of size $n$ 
can be sorted using at most $n - 1$ insertions since 
each partition in a slicing can be viewed as an insertion, as we previously discussed in Section \ref{sec:comparison}. In other words, one can select each event of $\tr$ as part
of a distinct slice as follows --- if event $e$ appears at the
$i^\text{th}$ position in the target reordering $\rho$, then the subsequence $\tr_i$
can be chosen to contain exactly the singleton set $\set{e}$.
With this choice of subsequences $\tr_1, \tr_2, \ldots, \tr_n$,
it is easy to see that $\rho = \tr_1 \cdot \tr_2 \cdots \tr_n$
and thus $\ksliceto{n-1}{\tr}{\rho}$. 
\end{proof}




The proof follows from the observation that a list of size $n$ can be sorted using at most $n - 1$ insertions since each partition in a slicing can be viewed as an insertion, as we previously discussed in Section \ref{sec:comparison}. In other words, one can select each event of $\tr$ as part
of a distinct slice as follows --- if event $e$ appears at the
$i^\text{th}$ position in the target reordering $\rho$, then the subsequence $\tr_i$
can be chosen to contain exactly the singleton set $\set{e}$.
With this choice of subsequences $\tr_1, \tr_2, \ldots, \tr_k$,
it is easy to see that $\rho = \tr_1 \cdot \tr_2 \cdots \tr_k$
and thus $\ksliceto{k}{\tr}{\rho}$. 

In summary, the expressive power of $k$-sliced reorderings
converges to $\rfeq$ when restricted to runs of bounded length.
More importantly, one can frame this exact result more insightfully. 
Given that successively larger values of $k$ give strictly larger spaces of reorderings, it is imperative to ask --- what happens
in the limit? Let $\mslice = \bigcup_{k \geq 1} \kslice{k}$. Observe that it is the limit of the monotonically
increasing sequence $\kslice{1}, \kslice{2} \ldots$, which guarantees the soundness of  
$\mslice$. Hence an alternative formulation of \thmref{k-slice-rf-bounded} is:


\begin{corollary}
\corlabel{many-slice-rf-limit}
$\mslice = \rfeq$
\end{corollary}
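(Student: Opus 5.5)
The plan is to prove the two inclusions $\mslice \subseteq \rfeq$ and $\rfeq \subseteq \mslice$ separately, where $\mslice = \bigcup_{k \geq 1} \kslice{k}$.

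\emph{Soundness direction.} This is immediate from \propref{sound-k-slices}. Each $\kslice{k}$ is contained in $\rfeq$, and a union of subsets of $\rfeq$ is again a subset of $\rfeq$. So if $\sigma \mslice \rho$, then $\sigma \kslice{k} \rho$ for some $k$, and hence $\sigma \rfeq \rho$.

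\emph{Completeness direction.} Suppose $\sigma \rfeq \rho$ and let $n = |\sigma| = |\rho|$; these are equal because $\rfeq$ preserves the event set. I will invoke \thmref{k-slice-rf-bounded} to obtain $m \le n - 1$ with $\sigma \kslice{m} \rho$. The underlying construction is the singleton partition: $\tr_i = \set{e}$, where $e$ is the event at position $i$ of $\rho$. Concatenating these gives $\rho$, and the $\rfeq$ side condition holds by hypothesis.

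Two boundary issues need care.

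\begin{enumerate}[label=(\arabic*)]
\item \emph{Small $m$.} The definition of $k$-slice reorderings takes $k \in \natsp$, so the case $m = 0$ (when $n \le 1$) is not directly in the union. I will handle it by monotonicity from \propref{k-slice-gradation}, padding with empty slices: for $n \le 1$ we have $\sigma = \rho$, and reflexivity of $\kslice{1}$ gives $\sigma \mslice \rho$. More generally, any $m$ can be raised to $\max(m,1)$.
\item \emph{Event identity across the two runs.} Equivalence of runs identifies events by their unique identifiers, so that $\events{\sigma} = \events{\rho}$. This is what makes "the event at position $i$ of $\rho$" a well-defined event of $\sigma$, so the singletons do form disjoint subsequences of $\sigma$ covering all its events.
\end{enumerate}

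The only real obstacle is this bookkeeping. Once events are identified consistently across the two runs, the argument is a direct combination of \propref{sound-k-slices} and \thmref{k-slice-rf-bounded}.
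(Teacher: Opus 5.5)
Your proof is correct and follows essentially the paper's route. Soundness of $\mslice$ comes from applying \propref{sound-k-slices} to each member of the union, and completeness is \thmref{k-slice-rf-bounded} via the singleton-slice partition. You also handle the degenerate case $|\sigma| \le 1$, where the bound gives $m \le 0 \notin \natsp$, via reflexivity of $\kslice{1}$; the paper glosses over this detail, but it does not change the argument.
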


That is, in the limit, $k$-sliced reorderings attain the expressive
power of reads-from equivalence. Immediately, from this theorem, we can conclude that $\mslice$ is reflexive, transitive, and symmetric, and it subsumes all existing sound predictors, since $\rfeq$ does.

\subsection{Checking $k$-sliceability}
\seclabel{stack-height}

A natural question in the context of an equivalence $E \subseteq \labs^* \times \labs^*$ is the
\emph{recognition problem}~\cite{blass1984equivalence} --- given
two runs $\tr$ and $\rho$, how does one determine computationally if $(\tr, \rho) \in E$.
In the case of trace equivalence, it is well understood that the partial order, 
being a canonical representation for a class, can be efficiently used to check if $(\tr, \rho) \in \mazeq$
in linear time.
Likewise, the recognition problem for reads-from equivalence can be solved
in linear time by simply constructing and comparing the program order and reads-from relations.
 Here, we ask an analogous question for $\kslice{k}$, and answer it
 in terms of \emph{slice height}:

\begin{definition}[Slice height]
The {\em slice height} of a pair of runs $\sigma$ and $\rho$, denoted by $\sheight{\tr}{\rho}$,  is the
minimum $k \in \nats$ such that $\tr \kslice{k} \rho$.
We say,  $\sheight{\tr}{\rho} = 0$ if $\tr = \rho$ and 
$\sheight{\tr}{\rho} = \infty$ if $\tr \not\rfeq \rho$.
\end{definition}

Observe that $(\tr, \rho) \in \kslice{k}$ iff
$\sheight{\tr}{\rho} \leq k$.
We show that both the problems of determining the slice height of two runs
as well as the recognition problem for them can be solved in linear time:


\begin{restatable}{theorem}{kSliceDepthLinearTime}[Checking $k$-sliceability]
\thmlabel{sheight}
The problem of computing $\sheight{\tr}{\rho}$ can be solved in linear time.
Thus, the recognition problem $(\tr, \rho) \in \kslice{k}$ can also be solved in linear time.
\end{restatable}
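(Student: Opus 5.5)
The plan is to reduce slice height to counting ``descents'' in the permutation that maps $\rho$ back to $\tr$. First I handle the degenerate cases of the definition. If $\tr \not\rfeq \rho$, then $\sheight{\tr}{\rho} = \infty$. This can be detected in linear time by computing $\po{}$ and $\rf{}$ for both runs in one pass each. For $\rf{}$, I keep the last write per memory location in a table of constant size, since the alphabet is assumed constant.

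Events of $\rho$ are matched with events of $\tr$ by the pair (thread, index of the event within its thread). This is well defined because the two runs have the same program order. During the pass over $\tr$, I record, for each event, its position in $\tr$ and its label. During the pass over $\rho$, I look up the matched event in $\tr$. This lookup uses an array indexed by thread and per-thread counter, which takes linear total size. With this matching, comparing labels and comparing $\rf{}$ (expressed through matched identifiers) is also linear. The result of this phase is a permutation $\pi$, where $\pi(i)$ is the position in $\tr$ of the $i$-th event of $\rho$.

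The key claim is that, when $\tr \rfeq \rho$, $\sheight{\tr}{\rho}$ equals the number $d$ of descents of $\pi$, that is, the number of indices $i$ with $\pi(i+1) < \pi(i)$.

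For the lower bound, take a witness $\rho = \tr_1 \cdots \tr_{k+1}$. Each $\tr_j$ occupies a contiguous block of $\rho$, and $\tr_j$ is a subsequence of $\tr$, so $\pi$ is increasing on that block. Hence every descent must fall on a block boundary, which gives $k \ge d$.

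For the upper bound, cut $\rho$ exactly at its $d$ descents. This produces $d+1$ contiguous blocks on each of which $\pi$ is increasing, so each block is a subsequence of $\tr$. The blocks are pairwise disjoint and together cover $\events{\tr}$. Combined with $\tr \rfeq \rho$, this witnesses $\ksliceto{d}{\tr}{\rho}$, with the convention that $d = 0$ means $\tr = \rho$.

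Counting descents is a single linear scan of $\pi$. The recognition problem then reduces to the comparison $\sheight{\tr}{\rho} \le k$, so it is linear as well.

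The main subtlety I expect is the event-identity step rather than the combinatorics. Events are only given by labels in the streams, so I must justify that the correspondence via (thread, per-thread index) is the one intended by the definition. It is intended because any witness partition must preserve the order of events within a single thread. I must also ensure that it is computable with constant-time lookups. Once that correspondence is fixed, the greedy/descent argument is straightforward.
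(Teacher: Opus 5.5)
Your proposal is correct and follows essentially the paper's route: check $\tr \rfeq \rho$ in linear time, build $\pi$ (your thread/per-thread-index matching makes explicit a detail the paper leaves implicit), then use the two facts behind \propref{depth-drops}---every drop must lie on a slice boundary, and cutting at all drops gives a valid decomposition---and count drops in one scan. Your value $\sheight{\tr}{\rho} = d$ is the one consistent with the definition of slice height (minimum parameter $k$, i.e., number of subsequences minus one), whereas \propref{depth-drops} states $|D|+1$ because its proof counts subsequences; this off-by-one does not affect the linear-time claim.
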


The intuition behind this result is based on a key observation that, 
when $\sigma \rfeq \rho$, then the number of slices can be uniquely 
determined by looking at the number of times a pair of consecutive 
events in $\sigma$ that belong to two different threads appear in inverted order from $\rho$. 
One can argue that the number of slices necessary is one more than this number. 
For instance, recall the two runs in \figref{lim}. The $\sheight{(a)}{(b)} = 2$, 
precisely because there are exactly two pairs of events: 
$(\ev{T_1, \wt(x)}, \ev{T_2, \wt(x)})$
and 
$(\ev{T_1, \rd(y)}, \ev{T_2, \wt(y)})$
 that are consecutive and appear as reordered in (b). 
This can be easily verified by observing that exactly two slices,
namely those demarcated by the two curves in \figref{lim},
will suffice to transform (a) to (b).  
Similarly, $\sheight{(b)}{(a)}=2$ because the following two consecutive pairs
that appear in inverted order as compared to (a):
$(\ev{T_2, \wt(y)}, \ev{T_1, \wt(y)})$
and 
$(\ev{T_2, \rd(x)}, \ev{T_1, \wt(x)})$;
again two slices (like the ones marked in (a), but with opposite threads)
suffice to transform (b) to (a).
More formally, we prove the following towards the proof of \thmref{sheight}:

\begin{restatable}{proposition}{depthDrops}
\proplabel{depth-drops}
Let $\tr$ and $\rho$ be such that $\tr \rfeq \rho$ with $|\tr| = |\rho| = n$.
Let $\pi : \set{1, \ldots, n} \to \set{1, \ldots, n}$ be the permutation
function such that the $i^\text{th}$ event in $\rho$
is the $\pi(i)^\text{th}$ event in $\tr$.
Let $D = \setpred{i}{1 \leq i\leq n-1, \pi(i) > \pi(i+1)}$ be the set of
drop positions in $\pi$.
Then, $\sheight{\tr}{\rho} = |D| + 1$.
\end{restatable}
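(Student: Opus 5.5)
The plan is to reduce the statement to a purely combinatorial fact about the permutation $\pi$. Since $\tr \rfeq \rho$ is assumed, the reads-from condition in \defref{k-slice-reordering} holds automatically. So $\tr \kslice{k} \rho$ holds iff $\rho$ can be written as $\tr_1 \cdot \tr_2 \cdots \tr_{k+1}$ with the $\tr_j$ disjoint subsequences of $\tr$.

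The first step is to restate this in terms of $\pi$. Such a decomposition cuts $\rho$ into at most $k+1$ contiguous blocks of positions $[a_j, b_j]$ covering $\set{1,\ldots,n}$, with empty subsequences contributing nothing. For a block to be a subsequence of $\tr$, its events must occur in $\tr$ in the order they occur in $\rho$, i.e.\ $\pi$ must be strictly increasing on $[a_j, b_j]$. Disjointness is automatic, since the blocks partition the positions of $\rho$ and $\pi$ is a bijection. Conversely, any cut of $\set{1,\ldots,n}$ into contiguous intervals on which $\pi$ is increasing yields a valid witness. So the question becomes: what is the minimum number of contiguous intervals partitioning $\set{1,\ldots,n}$ such that $\pi$ increases on each?

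The second step computes this minimum as exactly $|D|+1$, by two inequalities. For the upper bound, cut $\rho$ immediately after every drop position $i \in D$. This gives the $|D|+1$ maximal ascending runs of $\pi$, on each of which $\pi$ is increasing by construction. For the lower bound, take any valid partition and any $i \in D$. Positions $i$ and $i+1$ cannot lie in the same interval, because $\pi(i) > \pi(i+1)$ would contradict monotonicity there. Hence every drop is an interval boundary, and there are at least $|D|$ boundaries, i.e.\ at least $|D|+1$ intervals.

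The only delicate point, which I expect to be the main obstacle, is the bookkeeping between \emph{number of subsequences} and the parameter $k$. By \defref{k-slice-reordering}, $k$ counts slices (cuts), so $k+1$ subsequences means $k$ cuts. Read literally, the argument above therefore gives a minimal $k$ equal to the number of cuts, namely $|D|$, with $|D|+1$ the number of subsequences. This is consistent with the convention $\sheight{\tr}{\tr} = 0$, since $\tr=\rho$ gives $D=\emptyset$. In writing the proof I would state the core claim as ``the minimum number of subsequences is $|D|+1$'' and then align it with the indexing convention of the slice height definition, flagging the off-by-one explicitly. The linear-time algorithm of \thmref{sheight} then follows by computing $\pi$ (matching events by identifier) and counting the drops in a single pass.
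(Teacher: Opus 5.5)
Your proposal is correct and matches the paper's proof. The paper's Claim~1 (every drop must be a boundary between consecutive pieces, since each piece is a subsequence of $\tr$ and so $\pi$ increases on it) and Claim~2 (cutting $\rho$ at the drops yields pieces on which $\pi$ is increasing) are exactly your lower and upper bounds. Your off-by-one flag is also justified: the paper's proof silently takes $\sheight{\tr}{\rho}$ to be the least number of pieces in $\rho = \rho_1\cdots\rho_k$, whereas the definition via $\kslice{k}$ uses $k+1$ pieces and sets $\sheight{\tr}{\tr}=0$; under that definition the same argument gives $|D|$ cuts and $|D|+1$ pieces, as you say.
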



\section{The New Problem of Predictive Monitoring}
\seclabel{problem}

Reads-from and trace equivalence relations have both been the basis of the classic predictive monitoring problem. $k$-slice reorderings, in contrast, are not symmetric nor transitive, and hence do not yield an equivalence relation. We revisit the formal definition of the {\em predictive monitoring} problem for such relations, in preparation for stating the key result of this paper in the next section. 

A (monitoring) specification (or its negation) is typically represented using a  
language $L \subseteq \labs^*$ denoting the set of buggy executions. Given a run $\tr \in \labs^*$, the predictive monitoring problem against $L$ modulo $R$ can be formalized as the validity of either of the following two sentences: 
\begin{align}
\exists \rho: \rho \in L \land (\sigma, \rho) \in R \\
\exists \rho: \rho \in L \land (\rho, \sigma) \in R 
\end{align}


Recall that,
a reordering relation $R$ is {\em sound} if and only if $(\sigma, \rho) \in R \implies \sigma \rfeq \rho$. 
This means that when $R$ is sound, the validity of either statement about $\tr$, in turn, guarantees the soundness of the predicted bug via the certifying execution $\rho$. 
In this sense, predictive monitoring helps enhance the coverage of an otherwise
vanilla monitoring problem (i.e., `is $\tr \in L$?').
We focus on regular language specifications since they 
can encode a wide class of concurrency bugs. 
We discuss some known examples of bugs in \secref{reg-spec}.


\subsection{Encoding bugs using regular specifications}
\seclabel{reg-spec}

Regular languages have been the de facto standard for
specifying properties in runtime monitoring~\cite{RuntimeVerification2009}.
Their popularity stems from their close computational connection to finite automata
as well as to other specification formalisms such as linear temporal logic (LTL) and
mondadic second order (MSO) logic.
Indeed, several common concurrency bugs can also be encoded as regular languages. We list some concrete instances to outline the breadth of applicability of the techniques proposed in this paper.

\myparagraph{Data Races} While many definitions of data races
have emerged, a prominent one in the predictive analysis literature demarcates an
execution to be racy if in this execution, two conflicting events performed by different
threads appear consecutively~\cite{Kini17,Smaragdakis12,Mathur18,genc2019,Huang14,Mathur21,OSR2024}.
The language corresponding to racy executions is then the following
and is easily regular: 
\[L_{\sf race} = \sum_{\footnotesize\begin{aligned}\begin{array}{c}x \in \mems, t_1 \neq t_2 \in \threads{}, (op_1, op_2) \neq (\rd, \rd) \end{array}\end{aligned}} \labs^* \cdot \ev{t_1, op_1(x)} \cdot \ev{t_2, op_2(x)} \cdot \labs^* \]

\myparagraph{Order Violations}
Order violations are a common source of
errors in low-level systems code and manifest as errors such as use-after-free~\cite{huang2018ufo} and null-pointer dereferences~\cite{Farzan12}.
To formally define them, one first fixes two events (more precisely, types of events)
$\alpha, \beta \in \labs$ whose desired order is $\alpha < \beta$, and thus
the set of executions with such a violation is simply the regular language:

\[L_{\sf OV}^{\alpha,\beta} = \labs^* \beta \cdot \labs^* \cdot \alpha \cdot \labs^*
\]

\myparagraph{Atomicity}
Atomicity is a correctness specification derived from database theory
and is a key correctness property that allows programmers to reason about 
their code easily, in a modular fashion, without the need to consider all possible 
interleavings.
Here, a programmer specifies their intent by denoting which parts of code are assumed atomic by marking them with begin ($\btx$)
and end ($\etx$) instructions.
Executions thus also include events corresponding to \emph{transaction boundaries}
denoted by begin and end instructions.
An execution is atomic if all transactions appear serially, i.e., without interference
from other threads.
Formally, let $\labs' = \labs \cup \setpred{\ev{t, \btx}, \ev{t, \etx}}{t \in \threads{}}$
denote the extended set of event labels.
The language of atomic runs is then the following:

\[
L_{\sf serial} = \big(\sum\limits_{t \in \threads} \ev{t, \btx} \cdot \labs_t^* \cdot \ev{t, \etx} \big)^*
\]
In the above, $\labs_t$ is the subset of $\labs$ performed by thread $t$.
In words, a serial run is a sequence of transactions, where each transaction
is performed by a single thread $t$ which begins with the $\btx$ event by thread $t$,
performs a bunch of non-begin and non-end events of thread $t$ and ends in $\etx$ event by $t$.
We remark that the problem of predictive monitoring against
the $L_{\sf serial}$ modulo trace equivalence precisely corresponds to
checking \emph{conflict serializability}~\cite{confser}, while
checking it modulo reads-from equivalence corresponds to \emph{view serializability}~\cite{confser}.

\myparagraph{Pattern Languages}
In \cite{ang2023predictive}, pattern languages were introduced to specify more expressive specifications
for finding bugs in concurrent programs, in line with small depth hypotheses~\cite{pct,chistikov2016}. 
A pattern language of dimension $d \in \nats$ is a regular language
of the following form ($a_1, a_2 \ldots, a_d \in \labs$):
\[\mathtt{Patt}_{a_1, a_2, \ldots, a_d} = \labs^* a_1 \labs^* \ldots \labs^* a_d \labs^*\]

\subsection{Predictive Monitoring as Image Computation}

The predictive membership problem can be equivalently stated as a membership in a pre-/post-image of the specification language under the reordering relation $R$:

\begin{definition}[Predictive Membership With Image Computation]\deflabel{new-problem}
Let $L \subseteq \labs^*$ be a language and let 
$R \subseteq \labs^* \times \labs^*$.
The pre-image and post-image of $L$ under $R$ are defined as follows:
\begin{align*}
\pre{L}{R} &= \setpred{\tr \in \labs^*}{\exists \rho \in L, (\tr, \rho) \in R} \\
\post{L}{R} &= \setpred{\rho \in \labs^*}{\exists \tr \in L, (\tr, \rho) \in R}
\end{align*}
The membership problem with pre-image (respectively post-image) computation asks if $\sigma \in \pre{L}{R}$ (respectively $\sigma \in \pre{L}{R}$). 
\end{definition}


This membership problem can be solved in constant space and linear time
iff the language $\pre{L}{R}$ (respectively $\post{L}{R}$) is regular.
Thus, for the setting of our work, a reordering relation $R$ is desirable
if the pre-/post-image of every regular language under $R$ is also a regular language.

When the reordering relation is an equivalence (say $\sim)$, 
then $\pre{L}{\sim}$ is also known as the \emph{closure} of $L$ under $\sim$
and denoted as $[L]_\sim$;
indeed, observe the closure property  $[[L]_\sim]_\sim = [L]_\sim$.

The closure $[L]_{\rfeq}$ of $L$ under $\rfeq$ is known to be
non-regular even for a very simple regular language that captures data races.
For Mazurkiewicz's trace equivalence $\mazeq$, 
the closure of data races or atomicity violation specifications 
is known to be regular~\cite{FarzanMathur2024,FM08CAV,ang2023predictive}.
However, for an arbitrary regular language, its closure under $\mazeq$ 
can be beyond context-free languages, and cases under which regularity
is preserved have been studied~\cite{Ochmanski85,Bouajjani2007APC,PartialCommutationRegular2008Gomez,ang2023predictive}.
We show that, the pre- and post-image of arbitrary regular languages under
grain and scattered grain commutativity relations, as well as $\sliceq$ are also
non-regular, and show that this happens for reasons similar to the case of trace equivalence:


\begin{restatable}{proposition}{nonRegularityOfClosures}
\proplabel{non-regular-closure}
There is a regular language $L$ such that each of 
$[L]_{\rfeq}$, $[L]_{\mazeq}$, $\pre{L}{\sliceq}$,
 $\post{L}{\sliceq}$, $\pre{L}{\graineq}$, $\post{L}{\graineq}$ 
is non-regular\footnote{We formally do not list scattered grains. The result holds simply because scattered grains can simulate Mazurkiewicz. But, since the definition of a prediction relation based on scattered grains does not appear in \cite{FarzanMathur2024}, we refrain from stating the result based on an undefined closure.}.
\end{restatable}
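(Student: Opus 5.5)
We will exhibit a single regular language $L$ and use the classic trick of intersecting with a regular language and then applying a counting argument. Choose two threads $T_1, T_2$ and two distinct memory locations $x, y$. Let $a = \ev{T_1, \wt(x)}$ and $b = \ev{T_2, \wt(y)}$, so $a$ and $b$ are independent under $\indrel$. Take
\[ L = (a\cdot b)^*. \]
Every $w \in L$ has no reads at all, and its program order is determined by its label sequence. Hence, for any $\tr$ over $\set{a,b}$, we have $\tr \rfeq w$ iff $\tr$ has the same number of $a$'s and $b$'s as $w$. The same holds for $\tr \mazeq w$, since $a$ and $b$ commute. It follows that
\[ [L]_{\rfeq} \cap a^*b^* = [L]_{\mazeq} \cap a^*b^* = \setpred{a^n b^n}{n \ge 0}, \]
which is non-regular. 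Since regular languages are closed under intersection, neither closure is regular.

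For $\sliceq$, we show that both $\pre{L}{\sliceq}$ and $\post{L}{\sliceq}$ contain every $\tr \in \set{a,b}^*$ with equal letter counts, and nothing else.
\begin{itemize}
\item \emph{Nothing else.} By soundness, $\sliceq \subseteq \rfeq$ (by \propref{sound-slices} and transitivity of $\rfeq$). So every member of either image is $\rfeq$ to some word of $L$, and therefore has equal counts.
\item \emph{Everything with equal counts.} By \thmref{maz-subsumption}, $\mazeq \subseteq \sliceq$. Because $\mazeq$ is symmetric, $\tr \mazeq w$ gives both $\sliceqto{\tr}{w}$ and $\sliceqto{w}{\tr}$. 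So if $\tr$ has equal counts, pick $w = (ab)^n \in L$ with $\tr \mazeq w$, and $\tr$ lies in both the pre-image and the post-image.
\end{itemize}
Intersecting with $a^*b^*$ again yields $\setpred{a^nb^n}{n\ge0}$, so both images are non-regular.

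For grains, we apply the same sandwich using $\mazeq \subseteq \graineq \subseteq \rfeq$.
\begin{itemize}
\item The inclusion $\mazeq \subseteq \graineq$ needs care, since $\graineq$ as described is a single grain partition followed by swaps. With singleton grains, two single events $a,b$ that share no thread and touch no common location are independent under $\indrel_G$, so $\mazeq$-swaps of $a,b$ are grain swaps.
\item The inclusion $\graineq \subseteq \rfeq$ holds because grain swaps preserve program order and reads-from.
\end{itemize}
Thus both $\pre{L}{\graineq}$ and $\post{L}{\graineq}$ equal the set of equal-count words over $\set{a,b}$, and are non-regular.

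The main obstacle is the bookkeeping for the non-equivalence relations: a pre-image or post-image could in principle be smaller than the closure. The plan handles this by sandwiching each relation between $\mazeq$ (which is symmetric) and $\rfeq$; on this read-free alphabet both of these coincide with "same Parikh vector". Because $L$ contains no reads, no subtlety about reads-from arises.
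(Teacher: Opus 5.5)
Your proof is correct, and it takes a different route from the paper's.

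\textbf{The paper's route.} The paper derives the proposition as a corollary of \thmref{beyond-trace}. That theorem gives a one-pass constant-space reduction from the linear-space-hard language $L_n$, which yields a linear-space lower bound (and $S(n)\cdot T(n)\in\Omega(n^2)$) for membership in $\pre{L}{R}$ and $\post{L}{R}$ for every sound $R\supseteq\mazeq$. It then invokes ``regular iff constant-space membership'' together with $\mazeq\subseteq R$ for each listed relation.

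\textbf{Your route.} You argue purely with closure properties. The sandwich $\mazeq\subseteq R\subseteq\rfeq$ collapses, on your read-free alphabet, to ``same letter counts'' at both ends. Symmetry of $\mazeq$ then handles pre- and post-images simultaneously. This pins down $\pre{L}{R}\cap a^*b^*=\post{L}{R}\cap a^*b^*=\{a^nb^n \mid n\geq 0\}$, and closure of regular languages under intersection finishes the argument.

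\textbf{What each buys.} The paper's route gives quantitative streaming lower bounds. Yours gives a short, self-contained argument with no communication-complexity input. For the upper inclusion it needs only that each relation permutes events; full soundness is more than you use. Checking $\mazeq\subseteq\graineq$ only on your alphabet, via singleton grains, also spares you from justifying that inclusion in general, which the paper asserts without proof.

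\textbf{Your witness is the robust one.} Your language pairs letters from \emph{different} threads, and this is essential. The witness literally stated in \thmref{beyond-trace} is $(ab+\bar a\bar b)^*$, with $a,b$ in thread $T$, $\bar a,\bar b$ in thread $\bar T$, and pairwise distinct locations. By the characterization in \lemref{trace-rf-coincide}, its closure is the set of words whose $T$-projection lies in $(ab)^*$ and whose $\bar T$-projection lies in $(\bar a\bar b)^*$. That set is regular. Concretely, the reduction there sends $00\#00$ to $aa\bar a\bar a$, which lies outside that closure even though $00\#00\in L_n$. So the paper's chain only goes through once its witness is repaired to a cross-thread pairing such as $(a\bar a+b\bar b)^*$, which is essentially your language.
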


\begin{proof}
The proof follows from \thmref{beyond-trace} and the observations that
(1) a language is regular iff its membership problem can be solved in constant space, and
(2) $\mazeq$ is subsumed by each of ${\rfeq}, {\mazeq}, {\sliceq}, \graineq$
\end{proof}

In \secref{algo}, we show that the pre-image of any arbitrary regular
language under $\kslice{k}$ is regular (see \thmref{regularity-of-forward-closure})
and discuss how to compute the DFA representation of this regular language.
In sharp contrast, we show that the post-image of a regular language under $\kslice{k}$ is not necessarily regular (see \thmref{post-image-liear-space-hardness}). This is precisely why in \defref{new-problem}, we maintained a separation between the two modes of predication and did not combine them as a single predictor with more predictive power.

\section{Predictive monitoring modulo sliced reorderings}
\seclabel{algo}

In this section, we consider the predictive monitoring 
question modulo sliced reorderings.
Recall from \secref{problem} that we opt for a language-theoretic view
on predictive monitoring and study the image of regular specifications
under sliced reorderings.
More precisely, we show in \secref{pre-image-slices}
that the pre-image of a regular language
under $k$-sliced reorderings ($k \in \natsp$) is actually regular,
thus allowing us to solve the predictive monitoring problem efficiently 
in \emph{constant space and linear time}. 
We also consider the dual problem of determining the post-image
of a language $L$ in \secref{post-image-slices} and show that this may not be regular
even when $L$ is regular.

\subsection{Pre-image of regular languages under sliced reorderings}
\seclabel{pre-image-slices}

Our key result is that the pre-image of a regular language $L$ under
the $k$-sliced reordering relation (for a fixed $k \in \nats$)
is also a regular language (\thmref{regularity-of-forward-closure}).
In the following, we first give an overview of the proof of this result, and then give the details.

\begin{figure}[t]
\begin{subfigure}[t]{0.35\textwidth}
\centering
\includegraphics[scale=0.23]{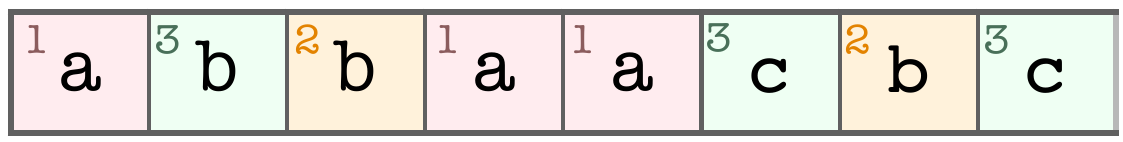}
\caption{Annotating $\texttt{abbaacbc}$. }
\figlabel{annotated-word}
\end{subfigure}
\begin{subfigure}[t]{0.18\textwidth}
\centering
\includegraphics[scale=0.15]{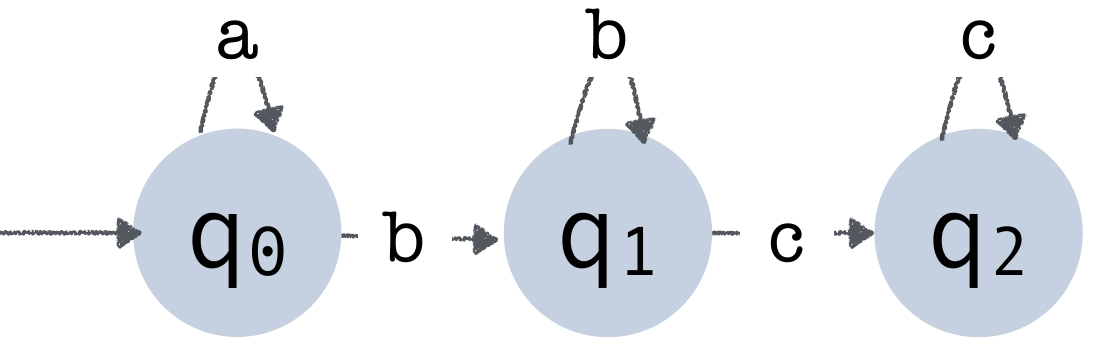}
\caption{Automaton $\aut$}
\figlabel{automaton}
\end{subfigure}
\hfill
\begin{subfigure}[h]{0.42\textwidth}
\centering
\includegraphics[scale=0.23]{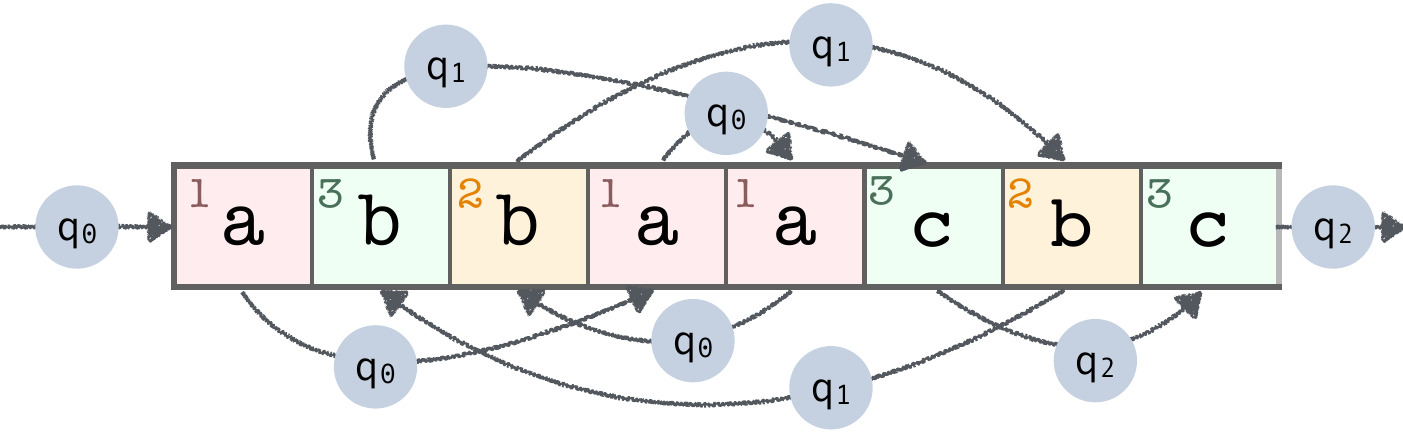}
\caption{Run of $\aut$ on reordered run.}
\figlabel{out-of-order-run}
\end{subfigure}
\vspace{-10pt}
\caption{Illustrating the construction of the automaton $\autcl{k}$
for pre-image ($k = 2$) of the regular language
$L(\aut) = {\texttt{a}^*\texttt{b}^+\texttt{c}^+}$ (middle, (b)).
The automaton $\autcl{k}$ first guesses an annotation (left, (a)) for every event in the execution
$\tr = \texttt{abbaacbc}$ denoting the reordering
$\rho = \texttt{aaabbbcc}$, and then simulates the 
original automaton $\aut$
on the reordered run $\rho$ (right, (c)). 
}
\figlabel{illustration-automaton}
\vspace{-0.1in}
\end{figure}

\myparagraph{Overview}
At a high level, we start with the NFA $\aut$ for the regular
language $L$ and derive the NFA $\autcl{k}$ 
that accepts $\pre{L}{\kslice{k}}$.
We refer readers to our running example in \figref{illustration-automaton}
where we work with the automaton that accepts the language $\texttt{a}^*\texttt{b}^+\texttt{c}^+$ 
(\figref{automaton}), where $\texttt{a}, \texttt{b}, \texttt{c} \in \labs$.
Recall that our new automaton $\autcl{k}$ must accept a run $\tr$ iff there are disjoint 
subsequences $\tr_1, \tr_2, \ldots, \tr_{k+1}$ such that
the reordering $\rho = \tr_1 \cdot \tr_2 \cdots \tr_{k+1}$
obtained by successive concatenation of these subsequences satisfies:
(\underline{\em consistency}) the concatenated string $\rho$ is $\rf{}$-equivalent to $\tr$, and 
(\underline{\em membership}) the concatenated string $\rho$ is a string that the automaton $\aut$ accepts.

Since these checks can be performed more conveniently over runs
which already demarcate the $k$ subsequences, we will work with 
the `annotated' alphabet, where letters are also identified with the index of the subsequence they belong to:
\[\hat{\labs} = \labs \times \set{1, 2, \ldots, k+1}\]
Consider for example the execution $\tr = \texttt{abbaaacbc}$
and a possible annotation of it in \figref{annotated-word};
the reordering corresponding to this annotation is  $\rho = \texttt{aaabbbcc}$.
For an annotated execution $\hat{\tr} \in \hat{\labs}^*$,
we will use the notation $\proj{\hat{\tr}}{i}$ to denote
the maximal subsequence of $\hat{\tr}$ each of whose events have annotation $i$.
Towards our main result, we will consider the following two 
languages over the alphabet $\hat{\labs}$, capturing the requirements of
\underline{\em consistency} and \underline{\em membership} outlined above:
\begin{align*}
\begin{array}{rcl}
\hat{L}_{\cnst} 
&=& 
\setpred{
\hat{\tr} \in \hat{\labs}^*
}{
h(\hat{\tr}) \rfeq h(\proj{\hat{\tr}}{1}) \cdot h(\proj{\hat{\tr}}{2}) \cdots h(\proj{\hat{\tr}}{k+1})
} \\
\hat{L}_{\memb} 
&=& 
\setpred{
\hat{\tr} \in \hat{\labs}^*
}{
h(\proj{\hat{\tr}}{1}) \cdot h(\proj{\hat{\tr}}{2}) \cdots h(\proj{\hat{\tr}}{k+1}) \in L
}
\end{array}
\end{align*}
Here, $h: \hat{\labs} \to \labs$ is the projection homomorphism
given by $h((a, i)) = a$ for every $a \in \labs$ and $i \in \set{1, 2, \ldots, k+1}$.
We will show that both $\hat{L}_{\cnst}$ and $\hat{L}_{\memb}$
are regular.
Together with the observation that $\pre{L}{\kslice{k}} = h(\hat{L}_{\cnst} \cap \hat{L}_{\memb})$,
it follows that  $\pre{L}{\kslice{k}}$ is regular, as desired.

\myparagraph{Automaton for $\hat{L}_{\cnst}$}
An algorithm that checks membership
of an execution $\hat{\tr}$ in $\hat{L}_{\cnst}$ 
essentially checks if the unique reordering $\hat{\rho} = 
\proj{\hat{\tr}}{1} \proj{\hat{\tr}}{2} \cdots \proj{\hat{\tr}}{k+1}$
is such that $\hat{\tr} \rfeq \hat{\rho}$.
As such, there are complexity-theoretic limits on efficiently solving problems
pertaining the existence of $\rf{}$-equivalent 
reorderings that satisfy even very simple properties~\cite{FarzanMathur2024,Mathur2020b}.
We instead show that, when parametrized by 
a maximum slice-width $k$, this question becomes efficiently checkable, using an automata-theoretic
algorithm, i.e.,
using only as much time and space as is afforded by a DFA.
Our construction, in turn, relies on the following key observation
that outlines how the requirement that `within $\proj{\hat{\tr}}{i}$
the relative order of events does not change' can be leveraged
to efficiently check the consistency of the annotation:
\begin{restatable}{lemma}{consistencyOfSlices}
\lemlabel{consistency}
Let $\tr \in \labs^*$ be a concurrent program execution, $k \in \natsp$,
$\tr_1, \tr_2, \ldots, \tr_{k+1}$ be a partitioning 
of $\tr$ into subsequences, and $\rho = \tr_1 \cdot \tr_2 \cdots \tr_{k+1}$.
We have $\tr \rfeq \rho$ iff 
\begin{enumerate}
	\item $\po{\tr}$ is aligned with respect to the subsequences, 
	i.e., for every $(e, f) \in \po{\tr}$,
	such that $e \in \events{\tr_i}$ and $f \in \events{\tr_j}$, 
	we have $i \leq j$, and,
	
	\item $\rf{\tr}$ is aligned with respect to the subsequences.
	That is, for a read event $e_\rd \in \events{\tr_i}$:
	\begin{enumerate}
		\item If $\rf{\tr}(e_\rd)$ is not defined (i.e., $e_\rd$ is an \emph{orphan read}), then
		for every write event $e'_\wt$ ($\OpOf{e'_\wt} = \wt$ and $\MemOf{e'_\wt} = \MemOf{e_\rd}$)
		with $e'_\wt \in \events{\tr_\ell}$, we have that $\ell \geq i$.

		\item If $e_\wt = \rf{\tr}(e_\rd)$ is defined (with $e_\wt \in \events{\tr_j}$), then
		$j \leq i$ and for every other write event $e_\wt \neq e'_\wt \in \tr_\ell$ such that
		$\OpOf{e'_\wt} = \wt$ and $\MemOf{e'_\wt} = \MemOf{e_\rd}$
		we have 
		(i) $(\ell \leq j \lor \ell \geq i)$, and
		(ii) if $\ell = i \land j < i$ then $e_\rd \trord{\tr} e'_\wt$, and
		(iii) if $\ell = j \land j < i$ then $e'_\wt \trord{\tr} e_\wt$.
	\end{enumerate}
\end{enumerate}
\end{restatable}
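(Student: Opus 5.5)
The plan is to reduce $\tr \rfeq \rho$ to two concrete equalities, $\po{\tr} = \po{\rho}$ and $\rf{\tr} = \rf{\rho}$, and to check each against the slice annotation. The tool is an explicit description of $\trord{\rho}$. Write $s(e)$ for the index $i$ with $e \in \events{\tr_i}$. Since $\rho = \tr_1 \cdots \tr_{k+1}$ and each $\tr_i$ is a subsequence of $\tr$, we have
\[ e \trord{\rho} f \iff s(e) < s(f) \ \lor\ \big(s(e) = s(f) \land e \trord{\tr} f\big). \]
Events carry identities and $\rho$ is a permutation of $\tr$, so $\events{\tr} = \events{\rho}$. I will first note that $\tr \rfeq \rho$ holds exactly when $\po{\tr} = \po{\rho}$ and $\rf{\tr} = \rf{\rho}$. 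The generating condition is itself an equivalence on runs over a fixed event set, so the smallest equivalence containing it is that condition. Both directions of the lemma then follow from two independent equivalences, one per clause.

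\textbf{Clause (1) $\iff \po{\tr} = \po{\rho}$.} Program order is the restriction of the total order to same-thread pairs, and both orders are total on each thread. Hence $\po{\tr} = \po{\rho}$ iff every same-thread pair $e \trord{\tr} f$ also satisfies $e \trord{\rho} f$. By the displayed characterization this holds iff $s(e) \le s(f)$: when $s(e) = s(f)$ the $\tr$-order is inherited, and when $s(e) > s(f)$ the pair is inverted. This is exactly clause (1).

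\textbf{Clause (2) $\iff \rf{\tr} = \rf{\rho}$.} Fix a read $e_\rd$ with $s(e_\rd) = i$ on location $x$. I will show, read by read, that $\rf{\rho}(e_\rd) = \rf{\tr}(e_\rd)$ iff (2a)/(2b) holds for $e_\rd$. A write $e'_\wt$ on $x$ with $s(e'_\wt) = \ell$ precedes $e_\rd$ in $\rho$ iff $\ell < i$, or $\ell = i$ and $e'_\wt \trord{\tr} e_\rd$.

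In the orphan case, no $x$-write precedes $e_\rd$ in $\tr$, so same-slice writes never precede it in $\rho$. Then $e_\rd$ is an orphan in $\rho$ iff no $x$-write has $\ell < i$, which is (2a).

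In the case $e_\wt = \rf{\tr}(e_\rd)$ with $s(e_\wt) = j$, the read $e_\rd$ reads from $e_\wt$ in $\rho$ iff two things hold. First, $e_\wt \trord{\rho} e_\rd$, which is equivalent to $j \le i$ because $e_\wt \trord{\tr} e_\rd$. Second, no other $x$-write $e'_\wt$ lies strictly between them in $\rho$. I will split the second requirement by the value of $\ell$:
\begin{itemize}
\item If $j < \ell < i$, the write $e'_\wt$ lies strictly between, which (i) forbids.
\item If $\ell = i > j$, then $e_\wt \trord{\rho} e'_\wt$ holds automatically, so $e'_\wt$ lies between iff $e'_\wt \trord{\tr} e_\rd$, which (ii) forbids.
\item If $\ell = j < i$, then $e'_\wt \trord{\rho} e_\rd$ holds automatically, so $e'_\wt$ lies between iff $e_\wt \trord{\tr} e'_\wt$, which (iii) forbids.
\item If $\ell < j$ or $\ell > i$, the write $e'_\wt$ is not between.
\item If $\ell = j = i$, all three events lie in one slice, so their $\rho$-order equals their $\tr$-order, and $e'_\wt$ is not between because $e_\wt = \rf{\tr}(e_\rd)$. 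In this case (i) is vacuous and (ii), (iii) do not apply, so nothing extra is demanded.
\end{itemize}

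I expect the main obstacle to be this last case analysis, specifically the boundary cases $\ell = i$ and $\ell = j$ together with the degenerate case $i = j$. The plan is to handle them uniformly through the displayed characterization of $\trord{\rho}$, so that each case becomes a one-line check.
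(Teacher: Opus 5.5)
Your proof is correct and follows essentially the same route as the paper: you reduce $\tr \rfeq \rho$ to $\po{\tr} = \po{\rho}$ and $\rf{\tr} = \rf{\rho}$, then split on the slice index $\ell$ of each conflicting write relative to $j$ and $i$. The difference is only organizational. Your explicit formula for $\trord{\rho}$ lets you prove each clause as a single biconditional, which covers both directions at once and makes explicit two cases the paper's two-direction write-up leaves implicit: the orphan case (2a) in the forward direction, and the degenerate case $\ell = j = i$.
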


Expert readers may already observe that the characterization
of~\lemref{consistency} is FO-definable, given that both $\po{\tr}$
and $\rf{\tr}$ are FO-definable in terms of the total order $\trord{\tr}$.
In the following, we instead describe a DFA $\aut_{\cnst} = (Q_{\cnst}, q^0_{\cnst}, \delta_{\cnst}, F_{\cnst})$
over the alphabet $\hat{\labs}$, directly inspired from
~\lemref{consistency}.

\noindent
\underline{States.}
The states $Q_{\cnst}$ comprise of a unique rejecting state $\bot$, or
is a tuple of the form 
\[
q = (\mathsf{T2S}_q, \mathsf{LastW}_q, \mathsf{SeenW}_q, \mathsf{ForbiddenW}_q) \in Q_{\cnst},
\]
with:
\begin{itemize}
\item $\mathsf{T2S}_q: \threads \to \set{0, 1,\ldots,k+1}$
\item $\mathsf{LastW}_q: \mems \to \set{0,1,\ldots,k+1}$
\item $\mathsf{SeenW}_q: \mems \to \powset{\set{1,\ldots,k+1}}$
\item $\mathsf{ForbiddenW}_q: \mems \to \powset{\set{1,\ldots,k+1}}$
\end{itemize}
Informally, after reading some prefix $\pi$, if the automaton reaches
some state $q \neq \bot$, then $\mathsf{T2S}_q(t)$ stores the
largest slice index seen so far for thread $t$,
$\mathsf{LastW}_q(x)$ stores the slice index of the latest write event on memory location $x$,
$\mathsf{SeenW}_q(x)$ tracks the set of all slices that have, so far, witnessed a write event on $x$
and 
$\mathsf{ForbiddenW}_q(x)$ tracks the set of all slices that must not,
in the future, see a write on $x$.
The initial state is $q^0_{\cnst} = (\lambda t \cdot 0, \lambda x \cdot 0, \lambda x \cdot \emptyset, \lambda x \cdot \emptyset)$.
A state is accepting iff it is not the sink $\bot$; i.e., $F_{\cnst} = Q_{\cnst}\setminus\{\bot\}$.

\noindent
\underline{Transitions.}
The state $\bot$ is a sink, i.e., $\delta_{\cnst}(\bot, (a, i)) = \bot$ for every $(a, i) \in \hat{\labs}$.
Otherwise on input symbol $(e,i)\in \hat{\labs}$ (with $e=\ev{t, op(x)}$),
and on state $p = (\mathsf{T2S}_p, \mathsf{LastW}_p, \mathsf{SeenW}_p, \mathsf{ForbiddenW}_p)$,
the resulting state $q = \delta_{\cnst}(p, (e, i))$ is defined as follows.
If the following hold, then $q = \bot$ (here, $(a,b]$ denotes $\setpred{\ell}{a < \ell \leq b}$):
\begin{align*}
\begin{array}{c}
\big(\mathsf{T2S}_p(t) > i\big) \lor \big(op = \wt \land i \in \mathsf{ForbiddenW}_p(x)\big) \\
\lor \\
\big((op = \rd) \land (\mathsf{LastW}_p(x) > i \lor \mathsf{SeenW}_p(x) \cap (\mathsf{LastW}_p(x), i]  \neq \emptyset)\big)
\end{array}
\end{align*}
Otherwise, we have 
$q = (\mathsf{T2S}_q, \mathsf{LastW}_q, \mathsf{SeenW}_q, \mathsf{ForbiddenW}_q)$, where
$\mathsf{T2S}_q = \mathsf{T2S}_p[t \mapsto i]$, and
\begin{enumerate}
	\item if $op = \wt$, then $\mathsf{LastW}_q = \mathsf{LastW}_p[x \mapsto i]$,
		$\mathsf{SeenW}_q = \mathsf{SeenW}_p[x \mapsto \mathsf{SeenW}_p(x) \cup \set{i}]$
		and $\mathsf{ForbiddenW}_q = \mathsf{ForbiddenW}_p$.
	\item if $op = \rd$, then $\mathsf{LastW}_q = \mathsf{LastW}_p$, $\mathsf{SeenW}_q = \mathsf{SeenW}_p$ 
	and $\mathsf{ForbiddenW}_q = \mathsf{ForbiddenW}_p[x \mapsto \mathsf{ForbiddenW}_p(x) \cup \setpred{\ell}{\mathsf{LastW}_p(x) \leq \ell < i}]$.
\end{enumerate}

\begin{restatable}{lemma}{consistencyRegular}
\lemlabel{consistency-regular}
$L(\autcnst) = \hat{L}_{\cnst}$. Thus, $\hat{L}_{\cnst}$ is regular
\end{restatable}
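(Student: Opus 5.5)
We will prove $L(\autcnst) = \hat{L}_{\cnst}$ by reducing membership in $\hat{L}_{\cnst}$ to the characterization of \lemref{consistency}, and then showing that $\autcnst$ avoids $\bot$ exactly when the conditions (1), (2a) and (2b) of that lemma hold.

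Fix $\hat{\tr} \in \hat{\labs}^*$, let $\tr = h(\hat{\tr})$, and let $\tr_i = h(\proj{\hat{\tr}}{i})$. These subsequences partition $\tr$. By \lemref{consistency}, $\hat{\tr} \in \hat{L}_{\cnst}$ iff the annotation is aligned with $\po{\tr}$ and with $\rf{\tr}$. It therefore suffices to prove that $\autcnst$ never enters $\bot$ on $\hat{\tr}$ iff these alignment conditions hold. Regularity then follows immediately, since $Q_{\cnst}$ is finite for fixed $k$, $\threads$ and $\mems$.

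The core of the argument is an invariant, proved by induction on prefixes $\pi$ of $\hat{\tr}$, as long as $\bot$ has not been reached:
\begin{itemize}
\item $\mathsf{T2S}(t)$ is the maximum annotation of a $t$-event in $\pi$, or $0$ if there is none.
\item $\mathsf{LastW}(x)$ is the annotation of the $\trord{\tr}$-last write on $x$ in $\pi$, or $0$ if there is none.
\item $\mathsf{SeenW}(x)$ is the set of annotations of writes on $x$ in $\pi$.
\item $\mathsf{ForbiddenW}(x) = \bigcup_{r} [\,\mathsf{ann}(\rf{}(r)),\, \mathsf{ann}(r))$, where $r$ ranges over the reads on $x$ in $\pi$ and $\mathsf{ann}(\rf{}(r))$ is read as $0$ for orphan reads.
\end{itemize}
Each update rule of $\delta_{\cnst}$ preserves this invariant directly.

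With the invariant in hand, we match each $\bot$ trigger with a violated condition of \lemref{consistency}:
\begin{itemize}
\item $\mathsf{T2S}_p(t) > i$ holds exactly when some $\po{}$-predecessor of the current event has a larger index, i.e. condition (1) fails.
\item For a read $e_\rd$ with index $i$, $\mathsf{LastW}_p(x)$ is the index $j$ of $\rf{\tr}(e_\rd)$, or $0$ if $e_\rd$ is an orphan. So $\mathsf{LastW}_p(x) > i$ is the failure of $j \le i$.
\item $\mathsf{SeenW}_p(x) \cap (j, i] \neq \emptyset$ holds exactly when some earlier write $e'_\wt$ (earlier than $\rf{}(e_\rd)$ in $\tr$, or any earlier write in the orphan case) has index $\ell \in (j, i]$. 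For $\ell \in (j, i)$ this violates (2b)(i). For $\ell = i > j$, the write lies before $e_\rd$, so it violates (2b)(ii); for orphans it violates (2a) when $\ell < i$.
\item A later write $e'_\wt$ with index $\ell \in \mathsf{ForbiddenW}_p(x)$ means some earlier read $r$, with writer index $j$ and read index $i'$, has $j \le \ell < i'$. If $\ell \in (j, i')$, this violates (2b)(i). If $\ell = j < i'$, then $e'_\wt$ occurs after $\rf{}(r)$, violating (2b)(iii). If $r$ is an orphan ($j = 0$, $\ell < i'$), this violates (2a).
\end{itemize}

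For the converse, every violation of \lemref{consistency} involves a pair of events, and the later of the two in $\tr$ triggers one of the checks above. For example:
\begin{itemize}
\item A write with index $\ell = i$ placed after the read $e_\rd$ is allowed by (2b)(ii), and correctly, since $i \notin [j, i)$.
\item A write with $\ell = j$ placed before $\rf{}(e_\rd)$ is allowed by (2b)(iii), and correctly, since $\ell = j \notin (j, i]$.
\item In (2a), writes before an orphan read cannot exist, so only later writes with $\ell < i$ matter. These are exactly the ones caught by $\mathsf{ForbiddenW}$ with the interval starting at $0$.
\end{itemize}

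The main obstacle is this final case analysis: keeping the boundary cases $\ell = i$ and $\ell = j$ straight in both temporal directions, and checking that the half-open intervals used in the $\mathsf{SeenW}$ test and in the $\mathsf{ForbiddenW}$ update reproduce conditions (ii) and (iii) exactly. I would carry this out as a table indexed by $\ell < j$, $\ell = j$, $j < \ell < i$, $\ell = i$, $\ell > i$, crossed with whether the write $e'_\wt$ comes before or after $e_\rd$ in $\tr$.
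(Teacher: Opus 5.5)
Your proposal is correct and follows essentially the same route as the paper. Both reduce to the characterization in \lemref{consistency}, maintain the same four invariants on $\mathsf{T2S}$, $\mathsf{LastW}$, $\mathsf{SeenW}$ and $\mathsf{ForbiddenW}$ by induction on prefixes, and match each rejection test to a violated clause, with $(j,i]$ covering earlier conflicting writes and $[j,i)$ covering later ones. The paper runs two separate inductions and invokes prefix-closedness of $\hat{L}_{\cnst}$ for the consistent-implies-accepted direction, whereas your observation that the later event of each violating pair triggers a check achieves the same thing directly.
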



\myparagraph{Automaton for $\hat{L}_{\memb}$}
We construct a DFA 
$\autmemb = (\statesmemb,$ $\initmemb, \trnsmemb, \accptmemb)$.
that in turn simulates the DFA 
$\aut = (\states, \init, \trns, \accpt)$ for the language $L$,
on each subsequence 
$\proj{\hat{\tr}}{1}, \proj{\hat{\tr}}{2}, \ldots, \proj{\hat{\tr}}{k+1}$.
\figref{out-of-order-run} pictorially illustrates the
challenge that $\autmemb$ addresses --- this automaton
must process events of $\tr$ out-of-order to accurately simulate $\aut$ on the reordered execution $\rho$.
Readers with expertise in automata theory may observe that
one can come up with a $2$-way automaton for this task,
which can then be translated to a DFA~\cite{Vardi1989TwoWay}; here we present a direct construction instead.
Each state $q \in \statesmemb$ is of the form $q \in [\set{1, 2, \ldots, k+1} \times \states \to \states]$,
and tracks, after reading a prefix $\hat{\pi}$ of $\hat{\tr}$,
the state that $\aut$ would result into when having read $h(\proj{\hat{\pi}}{i})$ 
starting from each state $p \in \states$.
The initial state $\initmemb$ is such that for every $i \in \set{1, \ldots, k+1}$
and for every $p \in \states$, we have $q(i, p) = p$.
The transitions are given as follows.
Starting from state $q$ on reading input $(i, a) \in \hat{\labs}$,
the resulting state $q' = \trnsmemb(q, (i, a))$ is given by
$q'(j, p) = \trns(q(j, p), a)$ if $j = i$, and otherwise
it is $q'(j, p) = q(j, p)$.
Finally, the set of final states is those in which, intuitively, the final states for each subsequence
match the initial state of the next subsequence and is an accepting state of $\aut$ for the very last one.
Formally,  a state $q \in \accptmemb$ iff there is a sequence of states
$p_1, p_2, \ldots, p_{k+1} \in \states$ such that
$p_1 = q(1, \init)$,
for every $1 \leq i \leq k$, $p_{i+1} = q(i+1, p_i)$
and finally $p_{k+1} \in \accpt$.
The correctness of the above construction is stated as follows.

\begin{restatable}{lemma}{membershipRegular}
\lemlabel{membership-regular}
$L(\autmemb) = \hat{L}_{\memb}$. Thus, $\hat{L}_{\memb}$ is regular
\end{restatable}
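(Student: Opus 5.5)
The plan is to prove, by induction on the length of the annotated prefix $\hat{\pi}$ of $\hat{\tr}$, an invariant describing the state $q$ that $\autmemb$ reaches after reading $\hat{\pi}$. The invariant is: for every $j \in \set{1, \ldots, k+1}$ and every $p \in \states$,
\[
q(j, p) = \trns^*\big(p, h(\proj{\hat{\pi}}{j})\big),
\]
where $\trns^*$ is the usual extension of $\trns$ to words. The statement of \lemref{membership-regular} will then follow from this invariant together with a direct unfolding of the acceptance condition.

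\emph{Base case.} For the empty prefix, every projection $\proj{\hat{\pi}}{j}$ is empty, so $\trns^*(p, \epsilon) = p$. This is exactly $\initmemb(j, p)$, so the invariant holds.

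\emph{Inductive step.} Suppose $\hat{\pi}' = \hat{\pi} \cdot (a, i)$. Projection interacts with appending a letter as follows: $\proj{\hat{\pi}'}{i} = \proj{\hat{\pi}}{i} \cdot (a, i)$, and $\proj{\hat{\pi}'}{j} = \proj{\hat{\pi}}{j}$ for every $j \neq i$. Since $h$ is a homomorphism, the new letter contributes exactly $a$ to $h(\proj{\hat{\pi}'}{i})$. The transition rule sets $q'(i, p) = \trns(q(i, p), a)$ and leaves all other components unchanged. Hence, by the inductive hypothesis,
\[
q'(i, p) = \trns\big(\trns^*(p, h(\proj{\hat{\pi}}{i})), a\big) = \trns^*\big(p, h(\proj{\hat{\pi}'}{i})\big),
\]
and the components for $j \neq i$ are correct trivially. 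The only care needed here is the notational mismatch between the letter $(a, i)$ and the pair $(i, a)$ used in the transition definition. I would fix one convention at the outset of the proof.

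\emph{Acceptance.} Let $q$ be the state reached after reading the whole of $\hat{\tr}$, and write $w_j = h(\proj{\hat{\tr}}{j})$. Because $\trns$ is deterministic, the sequence $p_1, \ldots, p_{k+1}$ in the definition of $\accptmemb$ is uniquely determined. By the invariant,
\[
p_1 = \trns^*(\init, w_1), \qquad p_{i+1} = \trns^*(p_i, w_{i+1}).
\]
Induction on $i$ then gives $p_i = \trns^*(\init, w_1 \cdots w_i)$. Consequently, $q \in \accptmemb$ if and only if $\trns^*(\init, w_1 \cdots w_{k+1}) \in \accpt$. This in turn holds if and only if $w_1 \cdots w_{k+1} \in L$, which is exactly the condition for $\hat{\tr} \in \hat{L}_{\memb}$.

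\emph{Finiteness and the main obstacle.} Regularity follows because $\statesmemb$ is finite: it has $|\states|^{(k+1)|\states|}$ elements. The main obstacle is the assumption that $\aut$ is a DFA; the overview mentions an NFA. If $\aut$ is only an NFA, I would first determinize it via the subset construction and then apply the argument above, so that each $q(j, \cdot)$ remains a function. The alternative is to let each $q(j, \cdot)$ be a relation on $\states$ and to phrase acceptance as the existence of a chain $p_1, \ldots, p_{k+1}$. In that relational version, the invariant would become reachability by $h(\proj{\hat{\pi}}{j})$, and the same inductive proof goes through.
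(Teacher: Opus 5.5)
Your proposal is correct and follows the paper's approach: the paper's proof rests on the same invariant $q_{\hat{\pi}}(i,p) = \trns^*(p, h(\proj{\hat{\pi}}{i}))$, established by induction over prefixes, followed by unfolding the acceptance condition of $\autmemb$. You additionally write out the inductive step and the chain argument $p_i = \trns^*(\init, w_1\cdots w_i)$, both of which the paper skips as a ``straightforward induction''; your determinization remark agrees with the paper, which already takes $\aut$ to be a DFA in this construction.
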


\myparagraph{Putting it together}
Since both $\hat{L}_{\cnst}$ and $\hat{L}_{\memb}$ are shown to be regular,
their intersection $\hat{L}_{\cnst\land \memb} = \hat{L}_{\cnst} \cap \hat{L}_{\memb}$
is also a regular language.
Now, we observe that $\pre{L}{\kslice{k}} = h(\hat{L}_{\cnst\land \memb})$, i.e.,
the set of executions in $\pre{L}{\kslice{k}}$ are precisely those that have a corresponding annotation
which is both consistent and whose reorderings (dictated by the annotation)
belong to $L$. Since regular languages are closed under homomorphism, we have the following:

\begin{restatable}{theorem}{forwardClosureRegular}
\thmlabel{regularity-of-forward-closure}
Let $L$ be a regular language and let $k \in \natsp$. 
The image $\pre{L}{\kslice{k}}$ is regular.
\end{restatable}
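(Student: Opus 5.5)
The plan is to prove the set identity $\pre{L}{\kslice{k}} = h(\hat{L}_{\cnst} \cap \hat{L}_{\memb})$ and then combine \lemref{consistency-regular} and \lemref{membership-regular} with standard closure properties. Regular languages are closed under intersection: take the product of the DFAs $\autcnst$ and $\autmemb$. They are also closed under homomorphic images: relabel each transition on $(a,i)$ by $a$, which gives an NFA. Together these make $h(\hat{L}_{\cnst} \cap \hat{L}_{\memb})$ regular. So the whole weight of the argument rests on the identity. Since $k$ and $\aut$ are fixed, the resulting automaton has constant size.

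For the inclusion $\supseteq$, I would take $\hat{\tr} \in \hat{L}_{\cnst} \cap \hat{L}_{\memb}$ and set $\tr = h(\hat{\tr})$. Define $\tr_i$ to be the subsequence of $\tr$ at the positions carrying annotation $i$, so that $h(\proj{\hat{\tr}}{i}) = \tr_i$ and the $\tr_i$ are disjoint subsequences partitioning $\events{\tr}$. Put $\rho = \tr_1 \cdots \tr_{k+1}$. Membership in $\hat{L}_{\cnst}$ gives $\tr \rfeq \rho$, so $\ksliceto{k}{\tr}{\rho}$ by \defref{k-slice-reordering}. Membership in $\hat{L}_{\memb}$ gives $\rho \in L$. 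Hence $\tr \in \pre{L}{\kslice{k}}$.

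For the inclusion $\subseteq$, I would take $\tr$ together with $\rho \in L$ such that $\ksliceto{k}{\tr}{\rho}$, witnessed by disjoint subsequences $\tr_1, \ldots, \tr_{k+1}$ with $\rho = \tr_1 \cdots \tr_{k+1}$. The first point to settle is that these subsequences cover all of $\tr$. This follows from $\tr \rfeq \rho$, which forces $|\rho| = |\tr|$ and hence forces the disjoint subsequences to exhaust every event. Then annotate each event of $\tr$ with the index of the unique $\tr_i$ containing it, giving $\hat{\tr}$ with $h(\hat{\tr}) = \tr$ and $h(\proj{\hat{\tr}}{i}) = \tr_i$. The two defining conditions of $\hat{L}_{\cnst}$ and $\hat{L}_{\memb}$ are then exactly $\tr \rfeq \rho$ and $\rho \in L$.

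I expect the identity itself to be routine bookkeeping. The one step needing care is the covering observation in the $\subseteq$ direction, together with reading subsequences as sets of event occurrences (events carry identifiers), so that the annotation is well defined. The genuinely nontrivial content is already in \lemref{consistency-regular}, whose correctness rests on \lemref{consistency}, and I will assume it as stated.
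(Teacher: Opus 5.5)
Your proposal is correct and follows the paper's proof: the paper also establishes $\pre{L}{\kslice{k}} = h(\hat{L}_{\cnst} \cap \hat{L}_{\memb})$ in an appendix claim, by annotating each event with the index of its slice exactly as you do, and then concludes via \lemref{consistency-regular}, \lemref{membership-regular}, and closure of regular languages under intersection and homomorphism. Two of your steps are small improvements on what the paper writes: you explicitly check that the disjoint subsequences cover $\events{\tr}$ (since $\rfeq$ is length-preserving), which the paper leaves implicit, and you appeal directly to the definition of $\hat{L}_{\cnst}$ instead of routing through \lemref{consistency}.
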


In the context of predictive monitoring, we are interested in the complexity-theoretic
aspects of  predictive membership, which follow straightforwardly 
as a consequence of \thmref{regularity-of-forward-closure}

\begin{corollary}
\corlabel{complexity-closure-membership}
Fix a language $L \subseteq \labs^*$ and a constant $k \in \natsp$. 
The predictive membership problem against $L$ modulo $k$-sliced reorderings
can be solved in constant space and linear time.
\end{corollary}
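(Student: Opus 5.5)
The corollary follows from \thmref{regularity-of-forward-closure} together with the standard fact that membership in a fixed regular language is decidable in constant space and linear time.

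\textbf{Step 1: the data.} Since $L$ and $k$ are fixed, the theorem says $\pre{L}{\kslice{k}}$ is regular. So fix, once and for all, a DFA $\aut^{\sf pre}$ recognizing it. This DFA depends only on $L$, $k$ and the alphabet $\labs$, and not on the input run $\tr$.

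Concretely, $\aut^{\sf pre}$ can be taken from the construction of \secref{pre-image-slices}:
\begin{itemize}
\item form the product of $\autcnst$ and $\autmemb$ (\lemref{consistency-regular} and \lemref{membership-regular}), which gives a DFA over $\hat{\labs}$;
\item apply the homomorphism $h$, which yields an NFA over $\labs$;
\item determinize this NFA by the subset construction.
\end{itemize}
Each intermediate automaton has a state count that is a function of $|\labs|$, $k$ and the DFA for $L$ only. The state count of $\autcnst$ is bounded by a function of $|\threads|$, $|\mems|$ and $k$, and $\autmemb$ has at most $|\states|^{(k+1)|\states|}$ states. The subset construction is at most exponential in that product. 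All of this is constant under the paper's convention that the alphabet has constant size.

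\textbf{Step 2: the algorithm.} On input $\tr$, run $\aut^{\sf pre}$ over $\tr$ in a single streaming pass, storing only the current state. Accept iff the final state is accepting. By \defref{new-problem}, $\tr \in \pre{L}{\kslice{k}}$ iff there is a $\rho \in L$ with $\tr \kslice{k} \rho$, which is exactly the predictive membership question. So the procedure is correct.

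\textbf{Step 3: resources.} Each input symbol costs one table lookup, so the running time is $O(|\tr|)$. The working memory is $O(\log |\statesmemb \times Q_{\cnst}|)$-many bits, up to the determinization blowup, which is a constant.

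If one prefers to avoid the determinization step, an alternative is to simulate the NFA $h(\autcnst \times \autmemb)$ on the fly, keeping the set of currently reachable product states. That set also has constant size, so each step is still constant time.

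The only point needing care is making explicit that every size bound depends solely on $L$, $k$ and $\labs$ and never on $|\tr|$. I expect no genuine obstacle beyond this bookkeeping.
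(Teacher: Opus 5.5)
Your proposal is correct and is essentially the paper's argument: the corollary follows directly from \thmref{regularity-of-forward-closure} by running a fixed finite automaton for $\pre{L}{\kslice{k}}$ (the homomorphic image of the product of the consistency and membership automata, optionally determinized) over the input in a single pass, and your state-count bookkeeping matches the analysis the paper gives right after the corollary. One caveat: like the paper, you tacitly assume $L$ is regular (the statement as written says only ``a language''), and this hypothesis is genuinely needed for the theorem to apply.
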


Let us also analyze the actual space usage for the monitoring problem
by counting the number of states in the automaton for $\pre{L}{\kslice{k}}$.
The number of states in the automaton $\autcnst$ 
is $|\statescnst|=O((k+2)^{|\threads|+|\mems|}\cdot2^{2(k+1)\cdot |\mems|})$.
Suppose the automaton for $L$ is a DFA with $m = |\states|$ states.
Then, number of states in $\autmemb$ is $|\statesmemb|=m^{(k+1)\cdot m}$.
Their product automaton has $O\big((k+2)^{|\threads|+|\mems|}\cdot2^{2(k+1)\cdot|\mems|} \cdot m^{(k+1)\cdot m}\big)$ states. Finally, the automaton obtained from the homomorphism is an NFA with the same number
of states.
When monitoring against an NFA, one needs as much memory as the number of states of the NFA.
Thus, a conservative estimate of the
space usage of the predictive monitoring algorithm is $O\big((k+2)^{|\threads|+2|\mems|}\cdot2^{|\mems|(k+1)} \cdot m^{(k+1)\cdot|m|}\big)$, which is constant,
assuming that the alphabet $\labs$ (and thus $|\threads|$ and $\mems$) as well
as the parameter $k$ is constant, i.e., independent of the length of the run being monitored.
This also shows that membership in pre-image of a regular language (with DFA of $m$ states)
is FPT in the parameter $|\labs| + m + k$.


\subsection{Post-image of regular languages under sliced reorderings}
\seclabel{post-image-slices}

Here, we investigate the \emph{dual} predictive membership problem
of checking if there is a $\rho \in L$ such that $(\rho, \tr) \in \kslice{k}$
for a given input execution $\tr$.
Recall that this boils down to the vanilla membership problem in the
image $\post{L}{\kslice{k}}$.
Here, we show that, unlike with pre-images,
the post-image under sliced reorderings  does
not, in general, admit a constant-space linear-time monitoring algorithm, 
and in fact, becomes as hard as predictive monitoring
modulo $\rfeq$~\cite{FarzanMathur2024}, \emph{for every $k \in \natsp$} (even for $k = 1$),
even for very simple regular languages:

\begin{restatable}{theorem}{postImageLinearSpace}
\thmlabel{post-image-liear-space-hardness}
Let $k \in \natsp$.
Let $\alpha = \ev{T_2, \wt(u)}$ and $\beta = \ev{T_1, \wt(u)}$,
where $u \in \mems{}$ and $T_1, T_2 \in \threads{}$ (with $t_1 \neq t_2$),
and let $L$ be the fixed regular language 
$L = \labs^* \beta \cdot \labs^* \cdot \alpha \cdot \labs^*$.
Any algorithm that checks for membership in $\post{L}{\kslice{k}}$
in a streaming fashion must use space at least linear in
 the length of the input execution.
\end{restatable}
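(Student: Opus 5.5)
The plan is to adapt the fooling-set (Myhill--Nerode) lower bound used in~\cite{FarzanMathur2024} for causal concurrency modulo $\rfeq$. We exhibit, for every $n$, a family of $2^{n}$ prefixes $\set{u_A}_{A \subseteq \set{1,\ldots,n}}$, each of length $O(n)$, together with suffixes $v_B$ such that $u_A \cdot v_B \in \post{L}{\kslice{k}}$ iff $A \cap B = \emptyset$ (or a similar set-dependent predicate). Then any two distinct prefixes $u_A \neq u_{A'}$ are separated by some suffix; taking $B = \set{i}$ with $i \in A \setminus A'$ works. So a streaming algorithm must reach pairwise distinct memory configurations on all $2^n$ prefixes, which forces $\Omega(n)$ bits of memory.

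The first step is to simplify membership in the post-image so the construction can be reasoned about without $k$. By definition, $\tr \in \post{L}{\kslice{k}}$ iff there is a $\rho \rfeq \tr$ with $\beta$ before $\alpha$, together with a partition of $\rho$ into $k+1$ subsequences whose concatenation is $\tr$. Equivalently, $\tr$ splits into $k+1$ contiguous blocks, and $\rho$ is some interleaving of these blocks that is $\rf{}$-equivalent to $\tr$ and places $\beta$ before $\alpha$. I would design $\tr$ so that:
\begin{enumerate*}[label=(\roman*)]
\item $\alpha$ (in $T_2$) occurs early and $\beta$ (in $T_1$) occurs late;
\item whenever some $\rho \rfeq \tr$ has $\beta$ before $\alpha$, this $\rho$ can be chosen of the form ``interleave a prefix block of $\tr$ with its suffix block'', so a single cut suffices.
\end{enumerate*}
With this, for all $k \geq 1$ the post-image membership of our family coincides with ``there exists $\rho \rfeq \tr$ with $\beta \trord{\rho} \alpha$''. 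Then one argument covers every $k$, because $\kslice{1} \subseteq \kslice{k}$ by \propref{k-slice-gradation}.

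The gadget follows the linear-space hardness for $\rfeq$.
\begin{itemize}
\item The prefix $u_A$ is a $T_2$-block. It starts with $\alpha = \ev{T_2,\wt(u)}$ and then, for each $i \in A$, performs $\ev{T_2, \wt(x_i)}$ followed by a read in $T_2$ that observes a write of $T_1$ appearing later.
\item The suffix $v_B$ is a $T_1$-block that ends with $\beta = \ev{T_1,\wt(u)}$. Before $\beta$, for each $j \in B$, it contains $T_1$-events that read $x_j$ from $T_2$'s write.
\end{itemize}
Moving the $T_1$-prefix up to $\beta$ in front of $\alpha$ keeps all reads-from edges exactly when no $x_j$ with $j \in B$ was written in the $T_2$ block, i.e.\ when $A \cap B = \emptyset$. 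If some $j \in A \cap B$, a $T_1$ read of $x_j$ would lose its writer, or an orphan read would gain one. This is the pattern that \lemref{consistency} checks for. A small number of memory locations (constant alphabet) must suffice, so in practice the indices $i$ would be encoded positionally, e.g.\ by matching the $i$-th occurrence of a repeated pattern across threads, as in~\cite{FarzanMathur2024}.

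The main obstacle is getting this encoding right with a constant alphabet while keeping property (ii). That is, I must ensure that whenever \emph{any} $\rf{}$-equivalent reordering flips $\alpha$ and $\beta$, one built from a single cut also exists, so the answer does not depend on $k$. I would first take the exact instance from the $\rfeq$ lower bound of~\cite{FarzanMathur2024} with two threads. With only two threads the required reordering is essentially ``run $T_1$'s prefix through $\beta$ before $T_2$'s block''. I would then verify directly, via \lemref{consistency} with a single cut placed just after the $T_2$ block, that $\rf{}$-alignment holds iff the sets are disjoint. The conclusion then follows from the fooling-set argument.
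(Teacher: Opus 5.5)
Your set-disjointness gadget cannot carry the lower bound, however the indices are encoded. It needs a fresh location $x_i$ per index, but the alphabet $\labs$, and hence $L$, is fixed. Its edge in which a $T_2$ read ``observes'' a later $T_1$ write is ill-formed, since $\rf{\tr}$ only points backwards in $\tr$. More fundamentally, the layout itself defeats hardness. If $u_A$ is a $T_2$ block beginning with $\alpha$ and $v_B$ is a $T_1$ block ending with $\beta$, program order forces every $\rho$ with $\beta$ before $\alpha$ to be exactly $v_B \cdot u_A$. Moreover, $v_B \cdot u_A \rfeq u_A \cdot v_B$ holds iff no location is written by one thread and read by the other before that other thread's own first write to it. That condition depends only on an $O(|\mems|)$-bit summary of $u_A$. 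So over a constant alphabet this family has no fooling set of size $2^n$, and hard instances need flip witnesses that genuinely interleave the two threads.

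Your fallback (use the two-thread instance of \cite{FarzanMathur2024} directly) is the route the paper takes. However, the step you call the main obstacle is left unproved, and the reasons you sketch (there are only two threads; the witness moves a prefix of $T_1$ ahead of $T_2$'s block) are not what makes it true. Two threads do not suffice: for $k \geq 1$ the runs of \figref{k-slice-gradation} satisfy $\tr^{\sf seq} \rfeq \tr^{\sf int}$, yet $\tr^{\sf int}$ is not a sliced reordering of $\tr^{\sf seq}$. Nor is any assumption on the shape of the witness needed.

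The missing observation is that the instance is thread-sequential: in $\tr$, all events of one thread precede all events of the other. Then for \emph{every} $\rho \rfeq \tr$, the two per-thread subsequences of $\rho$ are disjoint. Since $\po{\rho} = \po{\tr}$, their concatenation is $\tr$, so $\sliceto{\rho}{\tr}$ no matter how $\rho$ interleaves the threads. With $\slice \subseteq \kslice{k} \subseteq \rfeq$, membership of these runs in $\post{L}{\kslice{k}}$ coincides with the $\rfeq$ flip question. The known bound then transfers through the constant-space streaming reduction from $\set{\bar a \# \bar b \mid \bar a = \bar b}$. Neither \lemref{consistency} nor a disjointness predicate is needed; that instance encodes equality, not disjointness.
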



\section{Frontier-graph style algorithms}
\seclabel{frontier}

In \secref{pre-image-slices} and \secref{post-image-slices},
we investigated the problem of predictive \emph{monitoring}
modulo slices, where we focused on algorithms (or their provable non-existence)
that work in a streaming fashion while using space that is independent of the
length of the input word.
Here, in turn, we ask if there are alternative algorithms for solving
the predictive membership problem when such a restriction is not imposed.
We show that the classic paradigm of \emph{frontier graph} 
algorithms~\cite{Gibbons1997,Gibbons1994,Mathur2020b,Agarwal2021} can be adopted to the setting of slices
to answer membership in pre- and post-images of regular languages.
In doing so, we also establish new upper and lower bounds for predictive monitoring,
complementing those in \secref{pre-image-slices} and \secref{post-image-slices}.

At a high level, a \emph{frontier} is a subset of events of the trace $\tr$
which is downward closed with respect to some partial 
order, such as the program order $\po{\tr}$.
A frontier graph is then a graph whose nodes are such frontiers and whose
edges represent extension of frontiers by a single event, while ensuring
other constraints, such as, preservation under reads-from, are met.
Intuitively, paths in such a frontier graph represent all possible
(or a precisely defined subset of) reads-from equivalent
executions.
Intuitively, a frontier graph algorithm for, say, predictive membership
against a regular language $L$ annotates each frontier $X$ with set of those states
that correspond to the paths that lead to $X$.
In the following, we show that such algorithms can also be used
to check for membership in pre and post images,
by in turn showing that these state annotations can be computed inductively on the frontier graph.
In turn, these yield algorithms whose running times offer a different tradeoff with respect
to paramters such as $|\threads|$ (number of threads)
or the slice bound $k$.

We first show that the problem of membership in pre-image
can be solved in time that varies polynomially with $k$
and the number of states $m$ in the automaton for $L$,
but exponentially with $|\threads|$:

\begin{restatable}{theorem}{preImageKSliceFrontier}
\thmlabel{pre-image-kslice-frontier}
Fix $k \in \natsp$ and a regular language $L \subseteq \labs^*$
given by an NFA with $m$ states.
There is an algorithm that, given an input run $\tr \in \labs^*$
of length $n$, decides whether $\tr \in \pre{L}{\kslice{k}}$
in time $O\!\left(m \cdot |\threads| \cdot (k+1) \cdot (n+1)^{|\threads|}\right).$
\end{restatable}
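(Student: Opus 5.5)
The plan is to adapt the frontier-graph paradigm, using \propref{depth-drops} to replace ``partition into $k+1$ subsequences'' by a purely local condition on the order in which $\rho$ is built. Fix $\tr$ of length $n$ and write $\mathsf{pos}(e)$ for the position of $e$ in $\tr$. By \propref{depth-drops}, $\ksliceto{k}{\tr}{\rho}$ holds iff $\tr \rfeq \rho$ and the permutation $\pi$ from $\rho$ to $\tr$ has at most $k$ drops. A drop is an index $i$ with $\pi(i) > \pi(i+1)$, i.e., a point where the next event of $\rho$ occurs earlier in $\tr$ than the previous one. So $\tr \in \pre{L}{\kslice{k}}$ iff some linearization $\rho$ of the events of $\tr$ satisfies three conditions: (a) $\rho$ respects $\po{\tr}$ and $\rf{\tr}$, (b) $\rho$ has at most $k$ drops with respect to $\mathsf{pos}$, and (c) $\rho$ is accepted by the NFA $\aut$ for $L$.

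I build $\rho$ left to right through frontiers. Every prefix of such a $\rho$ is $\po{\tr}$-downward closed, so it is determined by a vector of per-thread prefix lengths; this gives at most $(n+1)^{|\threads|}$ frontiers $X$. Each frontier has at most $|\threads|$ outgoing edges $X \to X \cup \{e\}$, one for the next event $e$ of each thread. I keep such an edge only if appending $e$ preserves the reads-from relation. If $e$ is a read on $x$, its $\rf{\tr}$-writer must be in $X$ and no other write on $x$ may be placed after that writer. Since all other writes on $x$ must then be outside $X$, this is equivalent to a counting condition: the writer is in $X$, and the number of writes on $x$ in $X$ equals the number of writes on $x$ that precede the writer in $\tr$, plus one (or zero writes on $x$ in $X$ for an orphan read). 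If $e$ is a write on $x$, no pending read whose writer is already in $X$ may still be outstanding. After linear preprocessing (per-thread prefix counts of writes per variable, and the index of each read's writer), each of these checks takes $O(1)$ per edge. Standard induction, in the style of the reads-from frontier graphs cited earlier, shows that complete paths from $\emptyset$ to $\events{\tr}$ are exactly the linearizations $\rho$ with $\rho \rfeq \tr$.

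Each frontier $X$ is annotated with a table indexed by pairs $(q, c)$, where $q$ is a state of $\aut$ and $c \in \{0, \ldots, k\}$ counts drops so far. The entry $\mathsf{best}_X(q, c)$ stores the \emph{minimum} value of $\mathsf{pos}(\text{last event})$ over all paths to $X$ that reach state $q$ with $c$ drops, or $\infty$ if there is none. The key observation is a domination argument. Along an edge labelled $e$, the drop count increases exactly when $\mathsf{pos}(e) <$ the current last position. A smaller last position can therefore only make future drops less likely, so whenever a continuation is valid from a larger last position, it is also valid from the smaller one, with no more drops. Keeping only the minimum loses no witnesses. The table update along an edge for $e$ sets the new last position to $\mathsf{pos}(e)$. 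It computes $c' = c + [\mathsf{pos}(e) < \mathsf{best}_X(q, c)]$, discards the entry if $c' > k$, and otherwise propagates to every $q' \in \trns(q, e)$. The answer is yes iff, at $X = \events{\tr}$, some entry with an accepting $q$ is finite.

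The step needing the most care is this domination lemma, together with the exact cost accounting. Without it, the last thread or last event would have to be stored as well, which introduces an extra factor $|\threads|$ or $n$. I would prove the lemma by induction on the length of the suffix: from the same $(X, q)$, starting with last position $p \le p'$ and drop count $c$, every path completion has at most as many drops as it does when started from $p'$. For the counting, there are $(n+1)^{|\threads|}$ frontiers, at most $|\threads|$ edges per frontier, and $m(k+1)$ table entries per edge. If each NFA successor set is processed as a constant-size bitset operation over $m$ states, the total time is $O\!\left(m \cdot |\threads| \cdot (k+1) \cdot (n+1)^{|\threads|}\right)$, as claimed.
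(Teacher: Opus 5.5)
\textbf{Gap: the read-side reads-from check.} As written, your read-side reads-from check makes the algorithm incomplete. You require that, when a read $r$ on $x$ with writer $w$ is appended, the number of writes on $x$ in $X$ equals the number of writes on $x$ preceding $w$ in $\tr$, plus one. This silently assumes that writes to the same location keep their $\tr$-order in $\rho$. Neither $\rfeq$ nor even $\kslice{1}$ requires this.

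Take $\tr = \ev{T_1,\wt(x)}\,\ev{T_1,\rd(x)}\,\ev{T_2,\wt(x)}\,\ev{T_2,\rd(x)}$ and $\rho = \ev{T_2,\wt(x)}\,\ev{T_2,\rd(x)}\,\ev{T_1,\wt(x)}\,\ev{T_1,\rd(x)}$. This is the swap of two $\wt(x)\rd(x)$ blocks that the paper itself uses to separate $\slice$ from trace equivalence. Here $\tr \kslice{1} \rho$, witnessed by the slices $\ev{T_2,\wt(x)}\ev{T_2,\rd(x)}$ and $\ev{T_1,\wt(x)}\ev{T_1,\rd(x)}$. But when $\ev{T_2,\rd(x)}$ is appended, $X=\set{\ev{T_2,\wt(x)}}$ contains one write on $x$, while your condition demands two. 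So for $L=\set{\rho}$ your procedure answers no although $\tr\in\pre{L}{\kslice{1}}$. In particular, your claim that complete paths are exactly the $\rfeq$-linearizations of $\tr$ is false.

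\textbf{Repair.} Drop the count and let your write-side condition do the work. For a read, require only $w\in X$; for an orphan read, require that $X$ contains no write on $x$. This suffices: if some write on $x$ had been appended after $w$ on a path to $X$, the write-side condition would have forced every reader of $w$, including $r\notin X$, into the frontier at that moment. Hence $w$ is the last write on $x$ on every path to $X$. The edge test is then both correct and a function of $X$ alone. Your domination argument needs exactly this, since it transplants one suffix onto two different paths into the same $X$.

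\textbf{Comparison with the paper.} With this repair your proof goes through, and it genuinely differs from the paper's. The paper puts the last position $\ell$ into the node $(X,\ell,d)$ and propagates NFA state sets. By the paper's own count this gives $O(m\cdot|\threads|\cdot k\cdot(n+1)^{|\threads|+1})$, a factor $n+1$ above the stated bound. Your minimum-last-position table is what removes that factor. The domination lemma is in fact a one-step argument: after appending $e$, every path has last position $\mathsf{pos}(e)$. For the same reason, the update along an edge is, per counter value, just two NFA set-images. That matches the paper's own per-edge charge of $m$, so no appeal to constant-size bitsets is needed.
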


\newcommand{\pos}{\textsf{pos}}

\begin{proof}
Let $\tr \in \labs^*$ be the input run of length $n$, and let
$A=(Q,\delta,Q_0,F)$ be the fixed NFA for $L$, with $|Q|=m$.
We describe a frontier-graph algorithm that decides whether
$\tr \in \pre{L}{\kslice{k}}$, i.e., whether there exists a run
$\rho \in L$ such that $\tr \kslice{k} \rho$.

Recall that $\tr \kslice{k} \rho$ holds iff $\rho \rfeq \tr$ and the
permutation $\pi$ mapping positions of $\rho$ to positions of $\tr$
has at most $k$ drop positions.
By \propref{depth-drops}, this is equivalent to requiring that
$\pi$ is a concatenation of at most $(k{+}1)$ increasing runs.

We construct a directed \emph{frontier graph} $G=(V,E)$ whose paths
correspond to such runs $\rho$.
A node of $G$ is a triple 
\[(X,\ell,d),\] where:
(i) $X \subseteq \events{\tr}$ is a set of events that is downward closed
with respect to program order $\po{\tr}$;
(ii) $\ell \in \set{0,1,\ldots,n}$ is the position in $\tr$ of the last
event emitted; $\ell=0$ indicates that no event has yet been
emitted. 
(iii) $d \in \set{0,1,\ldots,k}$ counts the number of drop positions
created so far.
The initial node is $(\emptyset,0,0)$, and terminal nodes are those of
the form $(\events{\tr},\ell,d)$ with $d \le k$.

There is an edge
\[
  (X,\ell,d) \xrightarrow{e} (Y, \ell', d')
\]
iff the following conditions hold.
First, $Y = X \uplus \set{e}$.
In addition, if $e$ is a write on variable $x$, then for every write event $e_w$ on $x$
with $e_w \in X$, we have that $\setpred{e_r \in \events{\tr}}{(e_w, e_r) \in \rf{\tr}} \subseteq X$.

Let $i=\pos_{\tr}(e)$, i.e., $e$ is the $i^{\text{th}}$ event of $\tr$.
We set $\ell' = i$, and update the drop counter as follows:
if $\ell = 0$ or $\ell < i$, then no new drop is created and $d' = d$;
otherwise $\ell > i$, in which case the permutation goes down and we set
$d' = d+1$, requiring that $d' \le k$.
No edge is added if this condition is violated.

By construction, every path in $G$ from $(\emptyset,0,0)$ to a terminal
node spells a permutation $\rho$ of $\tr$ that respects program order,
is reads-from equivalent to $\tr$, and whose associated permutation has
at most $k$ drop positions.
Hence $\tr \kslice{k} \rho$ by \propref{depth-drops}.
Conversely, if there exists $\rho$ such that $\tr \kslice{k} \rho$, then
$\rho$ induces a permutation with at most $k$ drops and corresponds to a
path in $G$ from the initial node to a terminal node.

To enforce the regular constraint $\rho \in L$, we propagate NFA states
over the frontier graph.
For each node $v=(X,\ell,d)$ we maintain a set $P(v)\subseteq Q$ of NFA
states reachable after reading the label sequence of some path from the
initial node to $v$.
We initialize $P(\emptyset,0,0)=Q_0$ and propagate as follows:
for every labeled edge $v \xrightarrow{e} v'$ we add
$\delta(P(v),e)$ to $P(v')$.
A terminal node $v$ is accepting iff $P(v)\cap F \neq \emptyset$.
Thus the algorithm accepts iff there exists $\rho \in L$ such that
$\tr \kslice{k} \rho$.

Let us now analyze the running time.
Let $n_t$ be the number of events of $\tr$ in
thread $t \in \threads$.
The number of possible frontiers is exactly
$\prod_{t\in \threads} (n_t+1) \le (n+1)^{|\threads|}$.
For each frontier, the drop counter $d$ ranges over $\set{0,\ldots,k}$,
and $\ell$ can range over $\set{0, \ldots, n+1}$
Hence the total number of nodes is $O\!\left(k \cdot (n+1)^{|\threads|+1}\right)$.
From each node, there are at most $|\threads|$ outgoing edges corresponding to
enabled events and thus total number of edges is $O\!\left(|\threads| \cdot k \cdot (n+1)^{|\threads|+1}\right)$.
Therefore, the overall running time of the algorithm is
$O\!\left(m \cdot |\threads| \cdot k \cdot (n+1)^{|\threads|+1}\right)$.
\end{proof}

Next, we show that, complementary to the linear space lower bound
of \thmref{post-image-liear-space-hardness}, membership in post image
can be solved with a frontier graph algorithm in time
that whose time varies with a factor of $O(n^{O(k)})$:

\begin{restatable}{theorem}{postImageKSliceFrontier}
\thmlabel{post-image-kslice-frontier}
Fix $k \in \natsp$ and a regular language $L \subseteq \labs^*$
given by an NFA with $m$ states.
There is an algorithm that, given an input run $\tr \in \labs^*$
of length $n$, decides whether $\tr \in \post{L}{\kslice{k}}$
in time
$O\!\left(m \cdot n^{k} \cdot \beta (n+1)^{\beta}\right)$,
where $\beta = \min(k+1, |\threads|)$
\end{restatable}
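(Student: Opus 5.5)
The post-image asks whether there is $\rho \in L$ with $\rho \kslice{k} \tr$. Equivalently, $\tr$ must split into at most $k+1$ contiguous blocks $\tr = B_1 \cdot B_2 \cdots B_{k+1}$, where the blocks are the images of the subsequences $\rho_1,\ldots,\rho_{k+1}$ of $\rho$. The string $\rho$ is then an interleaving (shuffle) of $B_1,\ldots,B_{k+1}$ that preserves each block's internal order. By \propref{depth-drops}, choosing the $k$ cut points in $\tr$ costs a factor of at most $O(n^{k})$: we enumerate all $\binom{n+1}{k}$ placements of $k$ cuts, possibly empty blocks.

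\textbf{Step 1: frontier graph for a fixed cut.} For a fixed cut, we must decide whether some interleaving $\rho$ of the blocks satisfies $\rho \rfeq \tr$ and $\rho \in L$. The nodes are frontiers $(c_1,\ldots,c_{k+1})$, where $c_j$ is the number of events of $B_j$ already emitted into $\rho$. An edge emits the next event of some block. The edge is allowed only if:
\begin{enumerate*}[label=(\roman*)]
\item it keeps $X$ downward closed under $\po{\tr}$;
\item the reads-from check of \thmref{pre-image-kslice-frontier} holds (a write on $x$ is emitted only when every read of every earlier emitted write on $x$ is already in $X$);
\item a read is emitted only when its $\rf{\tr}$-writer is the last emitted write on $x$.
\end{enumerate*}
Condition (iii) is determined by $X$ alone, because conditions (i) and (ii) force the last emitted write on $x$ to be the unique write in $X$ whose readers are not all in $X$. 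The same NFA-state propagation as in \thmref{pre-image-kslice-frontier} then decides membership in $L$.

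\textbf{Step 2: correctness.} Paths from $\emptyset$ to $\events{\tr}$ correspond exactly to interleavings $\rho$ with $\po{\rho}=\po{\tr}$ and $\rf{\rho}=\rf{\tr}$. Moreover, $\tr$ is the concatenation of order-preserving subsequences of $\rho$, which is precisely the definition of $\rho \kslice{k} \tr$. Conversely, any witness $\rho$ induces a cut and a path.

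\textbf{Step 3: counting (the main obstacle).} The frontier count must be bounded by $(n+1)^{\beta}$ with $\beta=\min(k+1,|\threads|)$. The bound $(n+1)^{k+1}$ comes directly from the block coordinates. The bound $(n+1)^{|\threads|}$ requires showing that reachable frontiers are also determined by per-thread counts. This holds because $X$ is downward closed in $\po{\tr}$, so $X$ is determined by how many events of each thread it contains. Hence the set of reachable nodes has size at most $(n+1)^{\beta}$.

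Each node has at most $\beta$ outgoing edges: one per nonempty block, and at most one per thread since only each thread's next event is enabled. Each edge propagates a subset of $m$ states in $O(m)$ time using precomputed per-symbol transitions, assuming constant alphabet. Multiplying by the $O(n^{k})$ cut choices gives $O(m\cdot n^{k}\cdot \beta(n+1)^{\beta})$.

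The delicate point is making the dual indexing consistent. The algorithm should store nodes keyed by the per-thread vector while enforcing block order. The block order is recovered from $X$: an event of $B_j$ is enabled only if it is the first unemitted event of $B_j$ in $\tr$-order. So the key is well defined, and the same ``next enabled'' test serves both views.
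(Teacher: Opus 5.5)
Your proposal is correct and follows essentially the same route as the paper: you enumerate the $O(n^k)$ cut placements of $\tr$, build a frontier graph over event sets that are downward closed under both block order and $\po{\tr}$ with reads-from side conditions, propagate NFA states along it, and bound the number of frontiers by $(n+1)^{\beta}$ using per-block and per-thread coordinates. Your read-side condition (iii), where an orphan read is enabled only while no write on its location has been emitted, is a genuinely needed addition: the paper's edge conditions constrain only writes, so they would allow emitting $\ev{T_2,\rd(x)}$ before its writer $\ev{T_1,\wt(x)}$.
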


\begin{proof}
Fix the input run $\tr = a_1a_2\cdots a_n \in \labs^*$.
We first observe that $\ksliceto{k}{\rho}{\tr}$ implies that
$\tr$ can be written as a concatenation of $(k{+}1)$ contiguous blocks
\[
  \tr = \rho_1 \cdot \rho_2 \cdots \rho_{k+1},
\]
where $\rho_1, \rho_2, \ldots, \rho_{k+1}$ are subsequences of $\rho$
and also form a partition of $\rho$.
Our algorithm for checking membership of $\tr$ in $\post{L}{\kslice{k}}$,
therefore, works by enumerating all choices of $k$ cut points in $\tr$
$1 \le c_1 < \cdots < c_k \le n$
and checking, for each such choice, whether there exists $\rho \in L$ such that $\rho$ is a shuffle
(interleaving) of $s_1,\ldots,s_{k+1}$ and also satisfies
feasibility constraints ensuring $\rho \rfeq \tr$;
here for the choice $(c_1, \ldots, c_k)$ of cutpoints,
the blocks $s_1, \ldots, s_{k+1}$ are given by
$s_j = \tr[c_j, \min(c_{j+1}, n)]$. 
We remark that, the number of choices of cut positions is
$\binom{n}{k} \in O(n^k)$.

We now show that, for a fixed decomposition
$\tr = s_1 \cdots s_{k+1}$, the problem of checking whether there exists
a shuffle $\rho$ of $s_1,\ldots,s_{k+1}$ such that
$\rho \in L$ and $\rho \rfeq \tr$ can be solved in time
$O\!\left(m \cdot (n+1)^{\beta}\right)$ using a frontier graph algorithm,
where $\beta = \min(k+1, |\threads|)$.
In the following, we fix this decomposition
$(s_1,\ldots,s_{k+1})$ and construct the associated frontier graph.

We now define the frontier graph associated with the fixed decomposition
$\tr = s_1 \cdots s_{k+1}$.
Let $\prec_{\sf blk}$ denote the union of the total orders induced by the
blocks $s_1,\ldots,s_{k+1}$.
Define $\prec := \prec_{\sf blk} \cup \po{\tr}$; recall that $\po{\tr}$
is the \emph{program order} of $\tr$, and is a union of $|\threads|$ total orders.
A \emph{frontier} is a set $X \subseteq \events{\tr}$ that is downward
closed with respect to $\prec$, i.e.,
for every pair of events $(e, e')$,
if $(e \prec e' \wedge e' \in X)$, then we have $e \in X$.
Intuitively, a frontier represents a prefix of each block and of each
thread.
Let $V$ be the set of all such frontiers.

We define a directed graph $G=(V,E)$ as follows.
There is an edge $X \xrightarrow{e} Y$ iff $Y = X \uplus \set{e}$ and:
(i) all $\prec$-predecessors of $e$ are contained in $X$;
and
(ii) if $e$ is a write on variable $x$, then for every write event $e_w$ on $x$
with $e_w \in X$, we have that $\setpred{e_r \in \events{\tr}}{(e_w, e_r) \in \rf{\tr}} \subseteq X$.
The initial frontier is $X_{\mathsf{init}}=\emptyset$, and the unique
terminal frontier is $X_{\mathsf{fin}}=\events{\tr}$.
By construction, every path in $G$ from $X_{\mathsf{init}}$ to
$X_{\mathsf{fin}}$ spells a word $\rho$ that is a shuffle of
$s_1,\ldots,s_{k+1}$.
Moreover, the write--read closure condition ensures that any such word
$\rho$ is reads-from equivalent to $\tr$.

We now incorporate the regular constraint $\rho \in L$.
Let $\aut = (Q, Q_0, \delta, F)$ be the NFA for $L$.
For each frontier $X \in V$, we associate a set $P(X) \subseteq Q$ of NFA
states, defined as the least mapping satisfying:
\begin{itemize}
	\item $P(X_{\mathsf{init}}) = Q_0$, 
	\item for every edge  $X \xrightarrow{e} Y,\; P(Y) \supseteq \delta(P(X), e)$
\end{itemize}
Equivalently, $P(X)$ is the set of all NFA states reachable after reading
the label sequence of some path from $X_{\mathsf{init}}$ to $X$.
This mapping can be computed by a standard forward worklist algorithm.
There exists a word $\rho \in L$ labeling a path from
$X_{\mathsf{init}}$ to $X_{\mathsf{fin}}$ if and only if
$P(X_{\mathsf{fin}}) \cap F \neq \emptyset$.

Let us now evaluate the running time.
Because every frontier is downward closed under $\prec$, it is uniquely
determined by the number of events it contains from each block and from
each thread.
Consequently, the number of frontiers is bounded by
$|V| \le (n+1)^{\beta}$, where $\beta = \min(k+1, |\threads|)$.
Likewise, $|E| \le \beta \cdot (n+1)^{\beta}$ since
the outdegree of every node is at most $\beta$.
For each frontier $X$, there are at most $\beta$ candidate successors
$Y=X\uplus\set{e}$, and after linear-time preprocessing of $\tr$ (to index
$\prec$-predecessors and reads-from obligations), feasibility of each
candidate can be checked in $O(1)$ time.
Hence the total time spent generating edges is $O(|E|)$, i.e.,
$O(\beta\cdot (n+1)^{\beta})$ for a fixed decomposition.
Likewise, the time to compute the states at each node in the graph
can be upper bounded by $O(\beta\cdot (n+1)^{\beta})$.

Finally, enumerating all $\binom{n}{k} \in O(n^k)$ choices of cutpoints
yields an overall running time of
$O\!\left(m \cdot n^k \cdot \beta \cdot (n+1)^{\beta}\right)$.
This completes the proof.
\end{proof}

Observe that, for the case of pre-image,
the membership problem is in FPT in the parameter $k$ (\thmref{pre-image-kslice-frontier}).
In contrast, the algorithm in~\thmref{post-image-kslice-frontier} does not yield such an FPT algorithm
for post-image.
In the following, we show that this is unavoidable;
we show that the problem is W[1]-hard in the parameter $k$
and thus, under the exponential time hypothesis (ETH), is not in the class FPT:

\begin{restatable}{theorem}{postImageETHHardness}
\thmlabel{post-image-eth-hardness}
The problem of checking,
for given run $\tr \in \labs^*$, language $L\subseteq \labs^*$
and $k \in \natsp$, if $\tr \in \post{L}{\kslice{k}}$,
is W[1]-hard in the parameter $k$.
\end{restatable}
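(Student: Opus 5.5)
The plan is a parameterized reduction from \textsc{Multicolored Clique}, which is W[1]-hard in the clique size $k$. Given a graph $G$ whose vertices are partitioned into colour classes $V_1,\ldots,V_k$, we will produce in polynomial time a run $\tr$, an NFA for $L$ of size polynomial in $|G|$, and a slice bound $k' = O(k^2)$. The goal is that $\tr \in \post{L}{\kslice{k'}}$ holds iff $G$ has a multicoloured $k$-clique. Since $k'$ depends only on $k$, this establishes W[1]-hardness.

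To make reads-from equivalence harmless, every event of $\tr$ will be a read of a single variable on which no write ever occurs. Then $\rf{}$ is trivially preserved, and $\rho \rfeq \tr$ reduces to $\po{\rho}=\po{\tr}$. By the characterisation used in the proof of \thmref{post-image-kslice-frontier}, $\ksliceto{k'}{\rho}{\tr}$ then says exactly this: $\tr$ is cut into $k'+1$ contiguous blocks, and $\rho$ is an interleaving of these blocks that respects program order. The cut points are the existential choice that will encode vertex selection.

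\textbf{Gadgets.} We take $k'+1 = O(k^2)$ \emph{selection gadgets}, one for each ordered pair $(i,j)$, $i \ne j$, together with copies needed for consistency. Each gadget is a segment of $\tr$ listing the vertices of a colour class $V_i$ in a fixed order. Each vertex is written as a distinct short pattern of events, using marker letters from dedicated threads so that thread identities pin down the gadget boundaries. Forcing exactly one cut inside each gadget, using program-order constraints and the fixed number of cuts, makes the position of that cut name a selected vertex. The language $L$ requires $\rho$ to begin with a prescribed marker skeleton. This forces the interleaving to bring the ``head'' of each block, namely the selected vertex's code, into a canonical sequence of pair records $(v_i,v_j)$, ordered so that consecutive records share a vertex. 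The DFA for $L$ therefore only needs to remember one vertex at a time, giving $O(|V|)$ states per stage and $O(k^2|V|)$ overall. It checks two things: that each pair record is an edge of $G$, and that consecutive copies of the same colour name the same vertex. This is the standard trick of duplicating each selection $k$ times so that only local equality and adjacency checks are needed.

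\textbf{Correctness.} The forward direction is the easy one: a clique yields cut points and an interleaving, which we can write down explicitly. The main obstacle is the converse. We must show that any $\rho\in L$ with $\ksliceto{k'}{\rho}{\tr}$ actually uses cuts exactly one per gadget, rather than, say, several cuts in one gadget and none in another, or cuts straddling gadget boundaries. We must also show that the skeleton enforced by $L$ can only be matched by block heads in the intended order. I would first try a counting argument, in the spirit of the proof of \propref{k-slice-gradation}: each gadget contains a pair of consecutive events whose inversion is demanded by the marker skeleton of $L$, and by \propref{depth-drops} each such inversion costs one drop. Since there are exactly $k'$ gadget-boundary inversions and only $k'$ drops are available, no cut can be wasted, which pins the structure down. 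Checking polynomial size and computability of the construction is then routine.
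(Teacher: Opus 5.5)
\textbf{The automaton step fails.} Ordering the pair records so that consecutive records share a vertex lets a one-vertex-memory automaton compare a record only with its neighbour. Each colour occurs in $k-1$ records, and for $k\ge 3$ some colour necessarily occurs in non-adjacent records; nothing ties those occurrences together. Already for $k=3$, with records $(v_1,v_2),(v_2,v_3),(v_3,v_1')$, your automaton accepts any walk $v_1,v_2,v_3,v_1'$ through the classes $1,2,3,1$. A $6$-cycle coloured $1,2,3,1,2,3$ has such a walk but is triangle-free, so NO-instances become YES-instances.

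Suppose instead that $L$ itself must enforce agreement of the copies. Just before the last colour first appears, each of the other $k-1$ colours has been seen and still owes an adjacency check with the new colour. It cannot be re-read from a later copy without comparing that copy to the earlier one. So the automaton must carry $k-1$ vertex names at once, which in general means $|V|^{\Omega(k)}$ states. The reduction then no longer has size $f(k)\cdot|G|^{O(1)}$. ``Duplicate and check locally'' works for an intersection of many small automata, not for a single left-to-right automaton. Agreement of the copies must therefore be forced by $\tr$ itself (with $\rf{}$ trivial, by program order), and your construction has no mechanism for that.

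\textbf{The run side and the converse are also missing.} With $\rf{}$ trivial, the only constraints on $\rho$ are program order and the internal order of each block. Each block is the tail of one gadget followed by the pre-cut part of the next. For a selected code to lie in a prefix skeleton of $\rho$, every earlier event of $\tr$ on the same threads must also lie in that prefix. So if codes are patterns over a few dedicated threads per gadget, the unselected codes before the cut would have to precede the selected code in $\rho$, and by block order so would the whole tail of the previous gadget. This contradicts $\rho$ beginning with the skeleton. For single-thread gadgets with one cut each, program order and block order chain up and force $\rho=\tr$. The codes must sit on threads with no events in the unselected material before them, and you would then have to re-establish what forces one cut per gadget.

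Your counting is also inconsistent in two ways. First, $k'+1$ gadgets cannot each receive one of $k'$ cuts. Second, if the inversion demanded in a gadget is a \emph{fixed} consecutive pair, the cut is pinned there and encodes no vertex. You need an inversion between events at the two ends of the gadget, which by \propref{depth-drops} forces a drop somewhere inside it.

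For contrast, the paper reduces from Independent Set using the data-race run of Mathur et al. It puts the graph into $\tr$ (locks simulated by write/read pairs) and takes the fixed pattern $L=\labs^*\ev{t_{2c+1},\wt(x)}\ev{t_{2c+2},\rd(x)}\labs^*$. In the NO case no run $\rfeq$-equivalent to $\tr$ lies in $L$ at all, so the slice bound is needed only to exhibit the YES witness. In your route the slice bound is the only thing excluding spurious witnesses, and that is exactly the part left open.
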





\section{Discussion}
\seclabel{discussion}


\subsection{Parameterizing trace equivalences}
\seclabel{kmaz}

In~\secref{stacking-slices}, we argued how $k$-sliced reorderings
provide a natural way to parameterize the expressive power of
reads-from equivalence.
\emph{Are there meaningful ways in which one can parametrize
other known equivalences? Do they also yield algorithmic benefits in the context
of predictive monitoring a la sliced reorderings~\corref{complexity-closure-membership}?}
Here we entertain these questions in the context of trace equivalence $\mazeq$.
While one may be able to design bespoke parameteric versions of trace equivalence,
a natural parameterization can be obtained by taking from a closer look at 
the swap-based characterization of trace equivalence, and bounding
the number of swaps with a parameter:

\begin{definition}[$k$-Mazurkiewicz reorderings]
Let $\tr$ and $\rho$ be concurrent program runs, and let $k \in \natsp$.
We say that $\rho$ is a $k$-Mazurkiewicz reordering of $\tr$,
denoted $\tr \kmaz{k} \rho$ if $\rho$ can be obtained from $\tr$ by $\leq k$
successive swaps of neighboring independent events determined by $\indrel$.
\end{definition}

As an example, in \figref{k-slice-gradation}, $\tr^{\sf int}$ can be obtained
by $(k+1)\cdot(k+1)$ swaps starting from $\tr^{\sf seq}$.
As a result, $\tr^{\sf int} \kmaz{(k+1)^2} \tr^{\sf seq}$.

\myparagraph{Reflexivity, symmetry and transitivity}
Reflexivity of $\kmaz{k}$ follows because one can chose a swap sequence of length $1$.
Symmetry follows because each individual swap is reversible.
Finally, $\kmaz{k}$ is not transitive because one may need at least $k_1+k_2$
swaps to reach from $\tr$ to $\gamma$ if one can reach $\rho$ from $\tr$ with $k_1 > 0$ swaps
and $\gamma$ from $\rho$ with $k_2 > 0$ swaps.

\begin{proposition}
For every $k$, $\kmaz{k}$ is reflexive and symmetric but not transitive.
\end{proposition}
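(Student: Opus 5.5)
Reflexivity and symmetry follow directly from the definition; the only real work is a counterexample to transitivity.

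\emph{Reflexivity.} I read ``$\leq k$ successive swaps'' as allowing zero swaps, so $\tr \kmaz{k} \tr$ holds trivially. If one insists on at least one swap, the paper's remark about a length-$1$ sequence does not literally give $\tr$ back. In that case I would note that for $k \geq 2$, swapping an independent adjacent pair and then swapping it back returns $\tr$. I would therefore fix the zero-swap reading explicitly.

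\emph{Symmetry.} Suppose $\tr = \gamma_0, \gamma_1, \ldots, \gamma_j = \rho$ with $j \leq k$, where each $\gamma_{i+1}$ is obtained from $\gamma_i$ by swapping adjacent events $e,f$ with $(e,f) \in \indrel$. Since $\indrel$ is symmetric, $(f,e) \in \indrel$, so swapping the same two positions back is also a legal swap. The reversed sequence $\gamma_j, \ldots, \gamma_0$ therefore witnesses $\rho \kmaz{k} \tr$ with the same number of swaps.

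\emph{Non-transitivity.} I would build three runs $\tr, \rho, \gamma$ with $\tr \kmaz{k} \rho$ and $\rho \kmaz{k} \gamma$ but $\tr \not\kmaz{k} \gamma$. Take two threads $T_1, T_2$ and events that all read the same location $x$, so any two events from different threads are independent. Let
\[
\tr = \ev{T_1,\rd(x)}^{2k}\,\ev{T_2,\rd(x)},\quad
\rho = \ev{T_1,\rd(x)}^{k}\,\ev{T_2,\rd(x)}\,\ev{T_1,\rd(x)}^{k},\quad
\gamma = \ev{T_2,\rd(x)}\,\ev{T_1,\rd(x)}^{2k}.
\]
Moving the single $T_2$ event left by $k$ positions uses exactly $k$ swaps, so $\tr \kmaz{k} \rho$ and $\rho \kmaz{k} \gamma$.

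For the lower bound I use an inversion count. Fix the identities of events in $\tr$ and count the pairs whose relative order differs between $\tr$ and a given run. An adjacent swap changes this count by exactly one. Between $\tr$ and $\gamma$ there are $2k$ inverted pairs: the $T_2$ event against each $T_1$ event, and events within $T_1$ are never inverted because they cannot swap with each other. Hence any swap sequence from $\tr$ to $\gamma$ has length at least $2k > k$, so $\tr \not\kmaz{k} \gamma$.

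The main obstacle is making event identities precise when labels repeat. Swaps act on positions, but in the paper's model they carry the unique event ids, so inversions should be counted over ids rather than labels. Since only cross-thread pairs ever swap, program order fixes the ids within each thread. The inversion argument then goes through as stated.
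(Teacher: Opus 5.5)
Your proof is correct and follows the paper's outline. The paper gets reflexivity from the trivial swap sequence, symmetry from reversibility of each swap, and non-transitivity from the fact that composing a $k_1$-swap and a $k_2$-swap reordering can force $k_1+k_2$ swaps. Your explicit witness with $k_1=k_2=k$, together with the inversion-count lower bound (event identities fixed within each thread by program order), makes rigorous what the paper only states informally.
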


\myparagraph{Gradation of expressiveness and limit}
As with $k$-sliced reorderings, the above parameterization also observes
strict increase in expressiveness  on increasing the value of the parameter, 
Further, they reach the full trace equivalence in the limit, and
thus remain strictly less expressive than $\rfeq$.

\begin{proposition}
For every $k$, $\kmaz{k} \, \subsetneq \,  \kmaz{k+1}$. Further, $\big(\bigcup_{k \geq 1} \kmaz{k} \big)\, = \,  \mazeq$.
\end{proposition}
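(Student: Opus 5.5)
The statement has two parts: monotonicity with strictness, $\kmaz{k} \subsetneq \kmaz{k+1}$, and the limit $\bigcup_{k\ge 1}\kmaz{k} = \mazeq$. I would treat them separately, with the inclusions read directly off the definition.

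\emph{Inclusion.} For every $k$ we have $\kmaz{k} \subseteq \kmaz{k+1}$. The definition allows \emph{at most} $k$ swaps, and any sequence of at most $k$ swaps is also a sequence of at most $k+1$ swaps.

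\emph{Limit.} Suppose $\tr \mazeq \rho$. By definition, $\mazeq$ is the smallest equivalence closed under neighbouring swaps of $\indrel$-independent letters. Every swap is self-inverse, so the relation ``reachable by a finite swap sequence'' is already reflexive, symmetric and transitive, hence an equivalence. Therefore $\tr \mazeq \rho$ holds exactly when some finite sequence of $N$ swaps takes $\tr$ to $\rho$, which means $\tr \kmaz{N} \rho$. Conversely, each $\kmaz{k}$ is contained in $\mazeq$ because it is built from the same swaps. This gives equality of the union with $\mazeq$.

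\emph{Strictness.} I would use inversion counting. For a fixed pair of runs, let $\Inv(\tr,\rho)$ be the number of event pairs whose relative order differs in $\tr$ and $\rho$. Events carry identities, so this is well defined. A single adjacent swap changes the relative order of exactly one pair of events. Hence $\Inv$ changes by exactly one per swap, and reaching $\rho$ from $\tr$ needs at least $\Inv(\tr,\rho)$ swaps.

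It then suffices to exhibit, for each $k$, runs $\tr,\rho$ with $\Inv(\tr,\rho)=k+1$ that are connected by $k+1$ legal swaps. Take $\tr = \ev{T_1,\rd(x)}\cdot \ev{T_2,\rd(x)}^{k+1}$ and let $\rho$ move the $T_1$ read to the end. All $T_1$–$T_2$ read pairs are independent, so the $T_1$ event can bubble rightward through the $k+1$ reads of $T_2$ in $k+1$ swaps. The inverted pairs are exactly the $k+1$ pairs formed by the $T_1$ read with each $T_2$ read; the $T_2$ events keep their mutual order. So $\tr \kmaz{k+1} \rho$ but not $\tr \kmaz{k} \rho$.

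The one delicate point is the identity-based counting. Equal labels must not be conflated, so events are tracked by identifier, as in the paper's footnote on events. With that convention, a swap exchanges only the two events involved and changes the order of exactly that one pair, so the bound $\Inv$ is sound.
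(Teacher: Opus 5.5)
Your proof is correct, and it is more complete than what the paper offers: the paper states this proposition without proof and justifies it only by analogy with \propref{k-slice-gradation}. Your inclusion and limit arguments follow directly from the definitions, and the key observation is the right one. Because $\indrel$ is symmetric, swap-reachability is already an equivalence, so it coincides with $\mazeq$. The case $\tr=\rho$ with zero swaps lands in $\kmaz{1}$, since the definition says ``at most''. For strictness, you use the lower bound ``number of adjacent swaps $\geq$ number of inverted pairs''. The paper cites this same fact as well known in the proof of \propref{k-maz-vs-k-slice}, and your $\pm 1$ argument actually proves it. Your witness, one $T_1$ read bubbling past $k+1$ reads of $T_2$, has swap distance exactly $k+1$, so it separates every pair of consecutive levels. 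By contrast, the paper's only concrete $\kmaz{}$ example ($\tr^{\sf seq}$ versus $\tr^{\sf int}$) needs quadratically many swaps and does not directly give strictness at each $k$. Your concern about event identities is harmless. Two events with the same label belong to the same thread, hence are never independent and never swapped, so the matching between $\tr$ and $\rho$ is canonical. In your example, the lone $T_1$ event is inverted with all $k+1$ other events under any label-preserving matching anyway.
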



\myparagraph{Comparison with sliced reorderings}
Recall that trace equivalence and sliced reorderings are incomparable in their expressive power
(\thmref{grain-subsumption}); the relationship between $\kmaz{k}$ 
and $\kslice{k}$ is that of subsumption: 

\begin{restatable}{proposition}{kmazVskslice}
\proplabel{k-maz-vs-k-slice}
For every $k$, $\kmaz{k} \, \subsetneq \, \kslice{k}$.
\end{restatable}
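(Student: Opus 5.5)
Two things need to be shown for \propref{k-maz-vs-k-slice}: the inclusion $\kmaz{k} \subseteq \kslice{k}$, and a witness pair in $\kslice{k}$ that is not in $\kmaz{k}$. Both reduce to counting drop positions, using \propref{depth-drops}.

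\emph{Inclusion.} Suppose $\tr \kmaz{k} \rho$ via $j \le k$ swaps of adjacent independent events. Every swap preserves $\po{}$ and $\rf{}$, because independent events come from different threads and never conflict. Hence $\tr \rfeq \rho$, and it remains to show $\sheight{\tr}{\rho} \le k$. Let $\pi$ be the permutation of \propref{depth-drops}, and let $\mathrm{inv}(\pi)$ be its number of inversions, i.e.\ pairs $i<i'$ with $\pi(i) > \pi(i')$. I will use two facts. First, a single adjacent transposition changes the inversion count by exactly one, so $\mathrm{inv}(\pi) \le j \le k$. Second, every drop position $i$ (with $\pi(i) > \pi(i+1)$) is itself an inversion, so $|D| \le \mathrm{inv}(\pi) \le k$.

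\emph{Resolving the off-by-one.} \propref{depth-drops} reads $\sheight{\tr}{\rho} = |D| + 1$, which clashes with the definitions. By \defref{k-slice-reordering}, a $k$-slice reordering uses $k+1$ subsequences, i.e.\ $k$ cut points. The increasing runs of $\pi$ must each fit inside one subsequence, and there are $|D|+1$ of them, so the minimal $k$ is $|D|$. The cross-check is that for $\tr \ne \rho$ with a single swap, $|D| = 1$ and one slice suffices, as shown in \secref{maz-comp}. I will therefore argue directly from the definition, without relying on the indexing in the proposition. I split $\rho$ into its $|D|+1$ maximal increasing runs, each of which is a subsequence of $\tr$. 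These give $|D|+1 \le k+1$ disjoint subsequences whose concatenation is $\rho$, padded with empty subsequences if fewer are needed. Together with $\tr \rfeq \rho$, this gives $\tr \kslice{k} \rho$. The point I expect to need the most care is this identification of slices with maximal increasing runs, particularly the exact indexing.

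\emph{Strictness.} For the witness I reuse \figref{k-slice-gradation}. The run $\tr^{\sf int}$ reaches $\tr^{\sf seq}$ with a single slice (all of $T_1$ first, then all of $T_2$), so $\tr^{\sf int} \kslice{1} \tr^{\sf seq}$, and by \propref{k-slice-gradation} also $\tr^{\sf int} \kslice{k} \tr^{\sf seq}$. Mazurkiewicz swaps change the inversion count by exactly one each, so reaching $\tr^{\sf seq}$ needs at least $\mathrm{inv}(\pi)$ swaps. Here $\mathrm{inv}(\pi)$ is quadratic in the run size, as in the $(k+1)^2$ computation of \secref{kmaz}. Taking a run of size $k+2$ per thread makes this exceed $k$ for every $k \ge 1$. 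Hence $\tr^{\sf int} \not\kmaz{k} \tr^{\sf seq}$.

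If independence of the events in this example is in question (the figure might contain conflicting accesses), I would instead use the all-$\rd(x)$ run from \figref{swap-lower-bound}. With two threads, sequential versus round-robin, that pair is one slice apart and needs quadratically many swaps.
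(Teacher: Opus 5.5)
Your proof is correct. The inclusion argument is the same as the paper's. You bound $|D|$ by the number of inversions, bound that by the number of swaps, and cut $\rho$ at the drops into at most $k+1$ increasing blocks, each a subsequence of $\tr$.

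Your reading of the indexing is right. The paper's own proof of this proposition also counts $r \le k+1$ blocks directly rather than using the $|D|+1$ of \propref{depth-drops}. One small slip in your aside: for minimality, each subsequence must lie inside a single maximal increasing run (every drop is a cut), not the reverse. Only the upper-bound direction is needed for the inclusion, so nothing breaks.

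Where you differ is the strictness witness. The paper simply invokes \thmref{grain-subsumption}: a single slice can reorder conflicting accesses, giving a pair that is not $\mazeq$-equivalent at all and hence in no $\kmaz{k}$. You instead use $\tr^{\sf int}, \tr^{\sf seq}$ from \figref{k-slice-gradation}. This pair is one slice apart but has $\binom{k+2}{2} > k$ inversions, so no sequence of at most $k$ adjacent transpositions relates them.

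What each route buys:
\begin{itemize}
\item Your route reuses the inversion-counting tool from the inclusion and is self-contained: it does not depend on how grains relate to $\mazeq$.
\item Since the paper indicates these runs contain only commuting reads, your witness also shows that the gap persists among $\mazeq$-equivalent pairs.
\item The paper's route shows instead that the gap contains pairs trace equivalence can never relate.
\end{itemize}

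Your fallback to \figref{swap-lower-bound} is unnecessary. The inversion lower bound holds for arbitrary adjacent transpositions, so conflicting accesses in the figure would only make $\kmaz{k}$ fail more easily.
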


\begin{proof}
The non-inclusion is easy and follows from \thmref{grain-subsumption}.
Here we focus on the inclusion $\kmaz{k} \subseteq \kslice{k}$.
That is, we prove that $\tr \kmaz{k} \rho \implies \tr \kslice{k} \rho$.


If each of the swaps in the sequence from $\tr$ to $\rho$
occur in well-separated parts of the trace,
the $k{+}1$ slices can be seen directly:
the first slice collects all events up to the first swapped pair and includes
the later event of that pair; the next slice starts from the earlier event of
the first pair and extends to the position just before the second pair,
and so on.
When the swaps are not well-separated—e.g., when the same event moves
left across many others—this simple construction no longer suffices.
Nevertheless, the underlying intuition remains:
if $\rho$ can be obtained from $\tr$ by at most $k$ adjacent
swaps of independent events, then the permutation of events between
$\tr$ and $\rho$ is ``almost increasing''—only a few pairs of events
have reversed their order.
Each such reversal can be absorbed into a slice boundary,
so that concatenating these slices in order yields $\rho$.
We now make this argument precise.

\smallskip
Let $\tr = e_1 e_2 \cdots e_n$ and $\rho = f_1 f_2 \cdots f_n$
be two executions such that $\tr \kmaz{k} \rho$.
Every swap in a $\kmaz{k}$ sequence exchanges a pair of \emph{independent}
events, hence preserves both program order and reads-from edges.
Consequently, $\tr$ and $\rho$ are $\rfeq$-equivalent.

For each event $f_i$ in $\rho$, let $\pi(i)$ denote its position in~$\tr$,
so that $\pi : \set{1, \ldots, n} \to \set{1, \ldots, n}$ is a permutation satisfying
$\tr = e_1 e_2 \cdots e_n$ and $\rho = e_{\pi(1)} e_{\pi(2)} \cdots e_{\pi(n)}$.
We will say that a pair $(i,j)$ with $i<j$ is called an \emph{inversion} if
$\pi(i)>\pi(j)$, and will
denote by $\Inv(\pi)$, the set of all such pairs.
It is well-known that the minimal number of adjacent swaps required to
realize $\pi$ equals $|\Inv(\pi)|$.  Since $\tr \kmaz{k} \rho$
admits a sequence of at most $k$ swaps, we must have $|\Inv(\pi)|\le k$.

Next, consider the \emph{drops} of~$\pi$, i.e.,
positions $i$ where $\pi(i)>\pi(i{+}1)$.
Let $D=\setpred{i\in \set{1, \ldots, n-1}}{\pi(i)>\pi(i{+}1)}$
be the set of all drops of $\pi$.
Each drop contributes at least one inversion—the pair $(i,i{+}1)$,
and thus $|D| \leq |\Inv(\pi)|$.
Let $0=i_0 < i_1 < \cdots < i_r=n$ be
an enumeration of the set  $D \uplus \set{0, n}$, such that
each block $(i_{s-1}{+}1,\dots,i_s)$
is maximal within $\pi$ and is strictly increasing.
Observe that $r-1 = |D| \le |\Inv(\pi)| \le k$.

For each $1\le s\le r$, define the subsequence
\[
P_s = f_{i_{s-1}+1} f_{i_{s-1}+2} \cdots f_{i_s}.
\]
Because $\pi$ is increasing on every block, the events within each $P_s$
appear in $\tr$ in the same relative order as in $\rho$.
Hence every $P_s$ is a subsequence of $\tr$ that preserves $\po{}$ and $\rf{}$.
Moreover,
\[
\rho = P_1 \cdot P_2 \cdots P_r.
\]
Thus, $\rho$ can be obtained from $\tr$ by serially composing
at most $r\le k{+}1$ such slices.
Thus, $\tr \kslice{k} \rho$ since 
$\tr$ and $\rho$ are $\rfeq$-equivalent.
\end{proof}

\myparagraph{Image under $\kmaz{k}$} We final consider the task of predictive monitoring
under parametrized trace equivalence. Much like sliced reorderings,
the image of a regular language under $\kmaz{k}$ is also regular,
giving us a constant space  linear time streaming algorithm for predictive monitoring.
This is in sharp contrast to the case of the full equivalence $\mazeq$ for which
such an algorithm is unlikely under arbitrary regular specifications~\cite{ang2023predictive}.

\begin{restatable}{proposition}{kmazImageRegular}
\proplabel{kmaz-image-regular}
For every $k$ and regular language $L$, 
$\pre{L}{\kmaz{k}} = \post{L}{\kmaz{k}}$ is regular.
\end{restatable}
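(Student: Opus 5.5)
The statement has two parts. First, the pre-image and post-image coincide. Second, this common image is regular.

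\emph{Equality of images.} This follows from symmetry of $\kmaz{k}$, which is the proposition stated just above (each swap is reversible). Unfolding the definitions, $\tr \in \pre{L}{\kmaz{k}}$ iff there is $\rho \in L$ with $\tr \kmaz{k} \rho$. By symmetry this holds iff $\rho \kmaz{k} \tr$, i.e., iff $\tr \in \post{L}{\kmaz{k}}$. So it suffices to show that $\post{L}{\kmaz{k}}$ is regular.

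\emph{Regularity by induction on $k$.} Let $S_1(L') = \post{L'}{\kmaz{1}}$, the set of words obtained from a word of $L'$ by at most one swap of adjacent independent letters. Then $\post{L}{\kmaz{k}} = S_1^k(L)$. The inclusion from right to left holds because composing $k$ one-swap steps gives at most $k$ swaps. The inclusion from left to right holds because any sequence of $j \le k$ swaps can be padded to $k$ steps using identity steps, which are allowed since ``at most one'' includes zero. So it is enough to show that $S_1$ preserves regularity, and then apply it $k$ times.

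\emph{$S_1$ preserves regularity.} Given a DFA $\aut = (\states,\init,\trns,\accpt)$ for $L'$, I would build an NFA for $S_1(L')$ whose states are $\states \cup (\states \times \labs) \cup \states'$, where $\states'$ is a tagged copy of $\states$ marking that the swap has already been used.
\begin{itemize}
\item In an untagged state $q$, on letter $b$ the NFA either moves to $\trns(q,b)$ (no swap yet), or guesses a swap by moving to the pending state $(q,b)$.
\item In a pending state $(q,b)$, on letter $a$ the NFA may move to the tagged state $\trns(\trns(q,a),b)'$, provided $(a,b) \in \indrel$.
\item In tagged states the NFA simulates $\trns$ normally.
\end{itemize}
The final states are $\accpt \cup \accpt'$.

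Correctness is a routine check. A word $u\,b\,a\,v$ is accepted via a swap iff $u\,a\,b\,v \in L'$ and $(a,b)\in\indrel$. Since $\indrel$ is symmetric, this is exactly one reverse swap of a word in $L'$. The unswapped path accepts $L'$ itself.

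Alternatively, one can present $S_1$ as a rational transduction (a finite-state transducer that copies its input and, at one guessed point, rewrites $ab$ to $ba$) and invoke closure of regular languages under rational transductions.

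The only point needing care is getting the bookkeeping of ``at most $k$'' versus ``exactly $k$'' right, and the pending-letter construction handles it. I expect no real obstacle. The resulting NFA has roughly $(|\states|(|\labs|+2))^k$ states, which is constant for fixed $k$.
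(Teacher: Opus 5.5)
Your proposal is correct and follows the same route as the paper. Symmetry of $\kmaz{k}$ gives $\pre{L}{\kmaz{k}} = \post{L}{\kmaz{k}}$, a pending-letter NFA handles a single swap, and iterating via $\kmaz{k+1} = \kmaz{k} \circ \kmaz{1}$ gives regularity for every $k$; your tagged copy $\states'$, which blocks a second swap, makes explicit the ``at most one swap'' bookkeeping that the paper's sketch leaves unstated.
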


\begin{proof}
Since $\kmaz{k}$ is symmetric, it follows that
$\pre{L}{\kmaz{k}} = \post{L}{\kmaz{k}}$ for any language $L$.
The image of a regular language $L$ under $\kmaz{k}$ can be shown to be regular as follows.
First, given a DFA $\aut = (\states, \init, \trns, \accpt)$ for $L$, 
one can construct an NFA $\aut' = (\states', \init' = \init, \trns', \accpt' = \accpt)$
for $\post{L}{\kmaz{1}}$ by augmenting states of $\aut$ as
$\states' = \states \uplus \set{(q,b,a)}_{q \in Q, (a,b) \in \indrel}$.
Apart from the transitions of $\aut$,
the automaton $\aut'$ can transition non-deterministically from a state
$q$ on reading $b$ to $(q, b, a)$ and wait to read $a$ next (and fail otherwise).
Next, observe that $\kmaz{k+1} = \kmaz{k} \circ \kmaz{1}$
and thus $\pre{L}{\kmaz{k+1}} = \pre{\pre{L}{\kmaz{k}}}{\kmaz{1}}$
is also regular.
\end{proof}

\subsection{Going beyond trace equivalence}
\seclabel{beyond-trace-equivalence}

Readers may have observed that, while in the limit, sliced reorderings surpass
the expressivity of trace equivalence ($\mazeq \subsetneq \rfeq$), 
each fixed parameter version remains incomparable with $\mazeq$ 
(see \thmref{mslice-expressivity}).
While the goal of a reordering relation that surpasses trace
equivalence in expressive power appears desirable, it is in direct
conflict with the orthogonal goal of a reordering relation that yields an efficient 
(read `constant space, streaming, linear time')
predictive monitoring algorithm against arbitrary regular specifications.
Indeed, for a very simple language $L$ (see \thmref{beyond-trace}),
the closure of $L$ against $\mazeq$ is not even context-free
and cannot admit a sub-linear membership check.
This should,
in fact, hold for every reordering relation $R$ that includes $\mazeq$. Formally:

\begin{restatable}{theorem}{BeyondTraceHardness}
\thmlabel{beyond-trace}
Let $a = \ev{T, \wt(x)}, b = \ev{T, \wt(y)}, \bar{a} = \ev{\bar{T}, 
\bar{x}}, \bar{b} = \ev{\bar{T}, \bar{y}} \in \labs$ for some
distinct $T, \bar{T} \in \threads$ and $x,y,\bar{x},\bar{y} \in \mems$. 
Let $L$ be the regular language $L = (ab + \bar{a}\bar{b})^*$.
Let $R \subseteq \labs^* \times \labs^*$ be an arbitrary sound reordering relation such that
$\mazeq \subseteq R$. 
Any  one pass algorithm for membership in 
$\pre{L}{R}$ and $\post{L}{R}$ must use linear space in the worst case.
Further, the time $T(n)$ and space $S(n)$ usage of any multi-pass algorithm
for solving this problem must satisfy $S(n) \cdot T(n) \in \Omega(n^2)$,
where $|\sigma| = n$.
\end{restatable}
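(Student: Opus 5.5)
The plan is a communication-complexity reduction, following the standard argument that the Mazurkiewicz closure of $(ab+\bar a\bar b)^*$ is not regular. The labels $a,b$ belong to thread $T$ and $\bar a,\bar b$ to thread $\bar T$, and they touch pairwise distinct memory locations. Hence every letter of thread $T$ is independent (under $\indrel$) of every letter of thread $\bar T$, and the only constraint any sound reordering must respect is program order. The reads-from relation is trivial: there are no reads, and writes on different variables never interact.

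The first step is to pin down $\pre{L}{R}$ and $\post{L}{R}$ exactly. For any word $w$, the set of runs trace-equivalent to $w$ is the set of all interleavings of $\proj{w}{T}$ and $\proj{w}{\bar T}$. Since $R$ is sound, $R\subseteq\rfeq$, and here $\rfeq$ coincides with $\mazeq$: same program order and no reads means the same projections onto both threads. Together with $\mazeq\subseteq R$, this gives $R=\mazeq$ on these words. Strictly, $\mazeq\subseteq R\subseteq\rfeq$, and we check that $\rfeq=\mazeq$ on $\{a,b,\bar a,\bar b\}^*$. Consequently both images equal $[L]_{\mazeq}$, which is the set of $\sigma$ over this alphabet such that $\proj{\sigma}{T}\in(ab)^*$, $\proj{\sigma}{\bar T}\in(\bar a\bar b)^*$, and there is an interleaving into blocks $ab$ and $\bar a\bar b$. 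That interleaving always exists once the projections have the right form, so this characterization needs refining. I will instead use a witness word of the shape $\sigma=(ab)^{n}\cdot \bar w$, or more sharply let the input encode two strings. On reflection, the characterization means the hard instance must come from somewhere else.

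The adjustment is to exploit that $L$ forces each $a$ to be immediately followed by its $b$. A useful pair of families is $u\in\{a,b\}^*$ and $v\in\{\bar a,\bar b\}^*$ with $\sigma=u\cdot v$. Then $\sigma\in[L]$ iff $u\in(ab)^*$ and $v\in(\bar a\bar b)^*$, which is regular, so this is also not hard. The honest obstacle is therefore to find the source of hardness for this specific $L$. I would look for it by letting letters of one thread conflict, and I would reread the statement: $\bar a=\ev{\bar T,\bar x}$ omits the operation, so perhaps $\bar x=x$ and $\bar y=y$ are intended. In that case $a$ conflicts with $\bar a$ and $b$ with $\bar b$, while $a$ commutes with $\bar b$ and $b$ with $\bar a$.

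Under that reading, I would embed set-disjointness, or equality, of $n$-bit strings. Alice's half encodes $x\in\{0,1\}^n$ as a word in which bit $i$ determines whether the $i$-th $T$-block is $ab$ or $ba$-padded, and Bob's half encodes $y$ similarly for $\bar T$. The conflict edges $a$–$\bar a$ and $b$–$\bar b$ then force the $i$-th blocks to align, so that a valid regrouping into $(ab+\bar a\bar b)^*$ exists iff $x_i,y_i$ are compatible for all $i$. This gives a fooling set of size $2^n$, which yields the linear-space lower bound for one pass. The multi-pass bound $S\cdot T\in\Omega(n^2)$ then follows from the standard crossing-sequence / Borodin–Cook style argument, as used in \cite{FarzanMathur2024} for $\rfeq$. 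I would replicate their reduction, using only that $\mazeq\subseteq R\subseteq\rfeq$ sandwiches the image. The hardest step is designing the gadget so that the upper bound via $\rfeq$ and the lower bound via $\mazeq$ coincide on the gadget words.
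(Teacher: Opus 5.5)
Your first step coincides with the paper's lemma. Soundness gives $R\subseteq\rfeq$. On $\{a,b,\bar a,\bar b\}^*$ (writes to pairwise distinct locations) $\rfeq$ and $\mazeq$ coincide. Hence $\pre{L}{R}=\post{L}{R}=[L]_{\mazeq}$ for every admissible $R$.

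Your next observation is also correct, and it is decisive: $[L]_{\mazeq}=\{\sigma \mid \proj{\sigma}{T}\in(ab)^*,\ \proj{\sigma}{\bar T}\in(\bar a\bar b)^*\}$, which a DFA recognizes. So, for the language as written, every admissible $R$ yields this regular image. Membership is then decidable in constant space in one pass, and the statement is false. The correct output at that point was this counterexample, not a search for a gadget.

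The genuine gap is the pivot that follows. Reading $\bar x=x$, $\bar y=y$ cannot rescue the claim. Since $a,b$ share thread $T$ and $\bar a,\bar b$ share $\bar T$, both $ab$ and $\bar a\bar b$ have connected alphabets in the dependence graph. So $(ab+\bar a\bar b)^*$ is star-connected, and $[L]_{\mazeq}$ is regular by Ochma\'nski's theorem~\cite{Ochmanski85}, whatever the locations and operations. Since $R=\mazeq$ is sound, no fooling set of size $2^n$ can exist.

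Your gadget is also ill-formed. A $T$-block such as $ba$ makes $\proj{\sigma}{T}\notin(ab)^*$. Every sound reordering preserves program order, so such a word lies outside every image and cannot encode a bit. Your closing sentence concedes that the gadget, which is the entire content of the lower bound, is missing.

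For comparison, the paper's proof takes your first step and then reduces from $L_n$. It uses a one-pass constant-space transducer sending $0,1$ to $a,b$ before $\#$ and to $\bar a,\bar b$ after it, and asserts $w\in L_n\iff\sigma\in[L]_{\mazeq}$. For $L=(ab+\bar a\bar b)^*$ this equivalence fails for exactly the reason you found: $00\#00\in L_n$ is mapped to $aa\bar a\bar a\notin[L]_{\mazeq}$.

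The argument goes through once $L$ pairs letters \emph{across} the threads, i.e.\ $L=(a\bar a+b\bar b)^*$. Then:
\begin{itemize}
\item $[L]_{\mazeq}$ is the set of words whose $T$-projection and $\bar T$-projection spell the same bit string.
\item The same transducer gives $u\#v\in L_n\iff\sigma\in[L]_{\mazeq}=\pre{L}{R}=\post{L}{R}$.
\item Both the one-pass $\Omega(n)$ bound and the $S(n)\cdot T(n)\in\Omega(n^2)$ trade-off are inherited directly from $L_n$, with no separate crossing-sequence argument.
\end{itemize}
That cross-thread pairing is the idea your proposal never reaches.
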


In other words, the absence of subsumption of trace equivalence is, in some sense, 
inevitable if a general framework for deriving efficient predictive monitoring algorithms
against arbitrary regular specifications is desirable.


\section{Related Work}
\seclabel{related}

Predictive monitoring has emerged as a principled means to enhance coverage
of dynamic testing of concurrent programs.
However, the underlying algorithms have largely been one off, mostly catering to 
the prediction of data races and deadlocks and often trying to 
beat the theoretical or practical predictive power of previously proposed
algorithms~\cite{Smaragdakis12,Kini17,Roemer18,Pavlogiannis2020,Mathur21,OSR2024,genc2019,Tunc2023},
while still retaining polynomial time.
In these works, reads-from equivalence, for which complete algorithms are unlikely to be 
tractable~\cite{Mathur21,FarzanMathur2024}, forms the theoretical limit
for predictive power, given that this captures the maximum amount of information
without any knowledge of the program and with knowledge of only addresses of shared memory locations
being accessed in the execution being monitored.
In contrast, ideas borrowed from trace equivalence have
been the guiding principle for more lightweight (and often constant space, streaming)
algorithms for the prediction of specific properties such as data races~\cite{Elmas07,Mathur18}, 
atomicity violations~\cite{FM08CAV,Farzan2006,Mathur2020}, 
pattern languages~\cite{ang2023predictive} or even for detecting robustness violations
for weak memory consistency~\cite{RobustSan2025}.
Motivated by this trend, in this work we ask the question --- \emph{can we design a 
general purpose framework that can be instantiated for prediction against a larger
class of properties?}

Trace equivalence has, in fact, been studied extensively
with regard to this question, and the most prominent result in this space has been
Ochmanski's characterization of star connected regular expressions
and (star-connected) languages that coincide with them~\cite{Ochmanski85}
as the set of those regular languages whose closure under trace equivalence remains 
regular. 
As we point out, even simple languages cease to be in this fragment,
though can encode meaningful classes of bugs in certain cases, and moreover, the
task of determining if a specification falls in this class is undecidable
in general~\cite{undecTraceClosureRegularity1992}.
Boujjani et al~\cite{Bouajjani2007APC} propose the sub-fragment alphabetic pattern constraints
(APCs) as the maximum level in the  Straubing-Th{\'e}rien hierarchy~\cite{STRAUBING198553,STRAUBING1981137,THERIEN1981195} that belong to the class of star-connected languages.
Indeed, as we show in \secref{beyond-trace-equivalence}, any reordering relation (whether equivalence or not)
that coarsens trace equivalence suffers from the downside that it will map
some regular language to a non-regular one under pre and post image.
For arbitrary languages (i.e., those that are not characterized by star connected languages),
the predictive membership problem admits an algorithm that grows with the number
of threads~\cite{Bertoni1989}, and moreover this bound is tight~\cite{ang2023predictive}.

Besides the algorithmic penalty that trace equivalence induces for predictive membership against
arbitrary regular languages, it also significantly limits the expressive power owing
to inability to flip the order of neighboring conflicting memory accesses.
Grain and scattered grain equivalence~\cite{FarzanMathur2024} lift commutativity reasoning
a la trace equivalence to the case of sets of events, but remain less expressive
than reads-from equivalence.
In~\cite{AngMathurPrefixes2024}, prefixes were introduced to enhance trace equivalence,
though they do not allow for arbitrary reorderings of write events.
Our work is in fact inspired from the notion of prefixes, and more precisely
by the observation that prefixes implicitly reason, though in a very limited manner,
about reversals of conflicting events.
In our work, we show that sliced reorderings, and in particular $k$-sliced reorderings 
systematically generalize this
idea and can simulate reads-from equivalence in the limit.
In similar vein, trace equivalence with observers~\cite{AronisDPORWithObservers2018}
moderately weaken trace equivalence by allowing to reorder write events
if they are not observed by any reads, and as such are subsumed by grain based reasoning.

Equivalences also play a central role in dynamic partial order reduction
based model checking, where a concrete notion of equivalence
is often used to determine if the model checker only explores a few 
(or exactly once in case of \emph{optimal} DPOR) representative runs from each
class, and thus coarser equivalences imply fewer exploration~\cite{Kokologiannakis2022,Abdulla2019,SourceSetsDPORAbdulla2017}.
Reads-value equivalence further weakens reads-from equivalence,
in the case when events are annotated with values that variables observe
and has also been employed in a DPOR setting~\cite{ChatterjeePT19,Agarwal2021}.
Nevertheless, like with the case of reads-from equivalence~\cite{Mathur2020b}, 
the predictive monitoring problem remains intractable even for data race prediction
because the underlying consistency checking problem remains hard~\cite{Gibbons1997}.
Approaches based on SMT solving, while theoretically sound, often fail to scale
to real world settings~\cite{Huang14,Kalhauge2018}.

Sliced reorderings offer a complementary take to the various notions
of equivalences and coarsenings that have been proposed in the past,
and reconcile the theoretical hardness of reads-from equivalence
with the desirable goal of obtaining prediction algorithms whose space
complexity can be tamed with a parameter.
Of course, like previous works, we assume that parameters such as number of threads
and memory locations do not grow with the input size (length of the execution),
though the dependence on these parameters can be significant, even for otherwise
fast trace-based algorithms~\cite{Kulkarni2021}.
The notion of reversal-distance of~\cite{Mathur2020b} is another example of
parametrization, but was specifically designed to cater for data races and may not
admit an FPT algorithm for arbitrary regular specifications.
Further, unlike sliced reorderings, the maximum possible value of this parameter
can range upto quadratic in the length of the execution (similar to the parametrization of trace equivalence
we discuss in \secref{kmaz}).
Another form of parametrization that has appeared in the literature is the use of
\emph{windowing}, where one partitions the input run into disjoint windows whose
length is parameterized~\cite{Huang14,Kalhauge2018}, and one employs complete reads-from based
reasoning within each window. 
Unlike sliced reorderings, the notion of reorderings that such a windowing style 
parametrization induces cannot relate executions where far away events are flipped,
and this severely reduces the predictive power of algorithms resorting to such a parametrization~\cite{Kini17,Tunc2023}.
Based on the examples we discuss in our paper, it appears that sliced reorderings
may be able to accurately augment preemption bounding based concurrency
testing and model checking approaches~\cite{preemptionBoundingQadeer2005,preemptionDPORMarmanis2023}, 
where the idea is to only limit exploration
to runs with bounded pre-emptions under the hypothesis that small number of context switches
suffice to expose most bugs; sliced reorderings can allow for exploration
with a smaller bound, if one also additionally employ sliced-reordering based
predictive reasoning.

\section{Conclusion and Future Work}
\seclabel{conclusion}
We propose $\kslice{k}$ as a new parametric {\em predictor} that can be used in predictive monitoring of concurrent programs for regular specification. For any constant $k$ and any regular specification $S$, there exists a streaming-style constant space monitor that, while reading an input program run $\sigma$, {\em soundly} predicts if another program run $\rho$ such that $\sigma \kslice{k} \rho$ satisfies the specification $S$.

\begin{wrapfigure}[9]{r}{0.48\textwidth}\vspace{-10pt}
\includegraphics[scale=0.6]{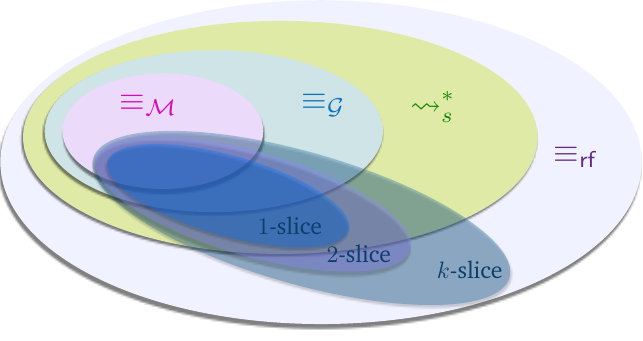}\vspace{-10pt}
\end{wrapfigure}
The Venn diagram on the right compares the expressive power of the existing sound predictors against the newly proposed ones in this paper. In particular, sliced reorderings ($\sliceq$) are strictly better event-based commutativity ($\mazeq$), grain-based commutativity ($\graineq$), but all three are strictly less expressive than {\sf rf}-equivalence ($\rfeq$). $k$-slice reorderings are incomparable against all notions, other than being a strict subset of ($\rfeq$). But, in the limit (not illustrated), they are equivalent to $\rfeq$. It is worth mentioning that since predictive monitors naturally compose (disjunctively), one always has the option of exploiting parallelism and monitoring the run through several monitors simultaneously. Hence, our newly proposed technique complements all other existing techniques in the literature.

We have presented theoretical results that guarantee the independence of the monitor memory from the length of the input run. However, the dependency on the number of threads and the number of shared variables can become a bottleneck in practice for programs with a large number of those entities. Since the design of our monitor is naturally nondeterministic, it would be interesting to explore generic techniques, for instance {\em antichain} methods, to optimize generic monitors rather than having to resort to hand-optimize specific monitors for specific properties. 


\bibliographystyle{ACM-Reference-Format}
\bibliography{references}

\clearpage

\appendix


\section{Proofs from~\secref{slice}}

\LinSpaceHardnessSliceStar*

\begin{proof}
At a high level, we show that the problem of interest has a one-pass constant space
reduction from the problem of membership in the following language, which is known to admit a
linear space lower bound in the streaming setting (here $n\in \nats$):
\[
  L_n
  = \setpred{a_1a_2\cdots a_n \# b_1 b_2 \cdots b_n}
           {\forall i \leq n,\ a_i, b_i \in \set{0,1},\ a_i = b_i}.
\]
The reduction is inspired by an analogous result in~\cite{FarzanMathur2024}
and constructs a run $\sigma$ (of length $O(n)$) starting from a word
$w = \bar{a}\#\bar{b} \in \set{0,1}^*\#\set{0,1}^*$ in a one-pass streaming
fashion using only constant memory such that $w \in L_n$ iff
no repeated sliced reordering of $\sigma$ inverts a certain pair of $u$-events.
Equivalently, $w \notin L_n$ iff there exists a run $\rho$ with $\sliceqto{\sigma}{\rho}$
in which these two events appear in inverted order.

\myparagraph{Construction}
We start from a word $w = a_1a_2\ldots a_n \# b_1 b_2 \ldots b_n$
and construct a run $\sigma$ as follows.
The run $\sigma$ uses two threads $\threads = \set{t_1, t_2}$ and six memory locations
$\mems = \set{x_0, x_1, y_0, y_1, c, u}$.
It has the form
\[
  \sigma \;=\; \pi_1 \cdot \pi_2 \cdots \pi_n \cdot \kappa
              \cdot \eta_1 \cdot \eta_2 \cdots \eta_n \cdot \delta.
\]

The fragments $\pi_i$ encode the prefix $\bar a$, and contain only events
of $t_1$:
\begin{align*}
  \pi_1
    &= \ev{t_1,\wt(x_{\neg a_1})}
       \cdot \ev{t_1,\wt(c)}
       \cdot \ev{t_1,\wt(x_{a_1})},\\[1ex]
  \pi_i
    &= \ev{t_1,\wt(y_{a_i})}
       \cdot \ev{t_1,\rd(c)}
       \cdot \ev{t_1,\wt(c)}
       \cdot \ev{t_1,\rd(y_{a_i})}
       \qquad (2 \le i \le n).
\end{align*}

The fragments $\eta_i$ encode the suffix $\bar b$, and contain only events
of $t_2$:
\begin{align*}
  \eta_1
    &= \ev{t_2,\rd(x_{b_1})}
       \cdot \ev{t_2,\wt(c)},\\[1ex]
  \eta_i
    &= \ev{t_2,\wt(y_{b_i})}
       \cdot \ev{t_2,\wt(c)}
       \qquad (2 \le i \le n).
\end{align*}

The $u$-blocks are
\[
  \kappa
    = \ev{t_1,\wt(u)} \cdot \ev{t_1,\rd(c)} \cdot \ev{t_1,\rd(u)}
  \qquad\text{and}\qquad
  \delta
    = \ev{t_2,\wt(u)} \cdot \ev{t_2,\rd(u)}.
\]
Let $e_1$ be the unique event $\ev{t_1,\rd(u)}$ in $\kappa$, and
let $e_2$ be the unique event $\ev{t_2,\wt(u)}$ in $\delta$.
The transducer that, on input $w$, outputs $\sigma$ simply streams $w$
from left to right and, for each symbol $a_i$ or $b_i$, appends the
corresponding fragment above.
It clearly runs in one pass and uses $O(1)$ working space.

The reads-from mapping $\rf{\sigma}$ is defined in the obvious way:
every read has a unique preceding write on the same location, and
no additional writes to that location appear between them.
In particular, there are no writes to $u$ other than those in
$\kappa$ and $\delta$.

\myparagraph{Case $\bar a = \bar b$}
Assume first that $w \in L_n$, i.e.\ $a_i = b_i$ for all $1 \le i \le n$.
By essentially the same reasoning as in the proof of the reads-from
lower bound of~\cite{FarzanMathur2024}, one shows that the chain
of $x$-, $y$- and $c$-events enforces a causal ordering between $e_1$
and $e_2$: more precisely, in the partial order induced by
$\po{\sigma}$ and $\rf{\sigma}$, we have $e_1$ before $e_2$.
Moreover, this causal order is invariant under reads-from equivalence:
for every run $\rho$ with $\po{\rho} = \po{\sigma}$ and
$\rf{\rho} = \rf{\sigma}$, the corresponding events
$e_1$ and $e_2$ in $\rho$ are still ordered the same way.

By definition of sliced and repeated sliced reordering, every run
$\rho$ with $\sliceqto{\sigma}{\rho}$ is obtained from $\sigma$
by a finite sequence of single sliced reordering steps, and each
step preserves $\rf{}$.
Hence every such $\rho$ is reads-from equivalent to $\sigma$,
and therefore $e_1$ still appears before $e_2$ in $\rho$.
In particular, there is no repeated sliced reordering $\rho$ of $\sigma$
in which $e_2$ precedes $e_1$.

\begin{figure*}[htbp!]
\includegraphics[width=\textwidth]{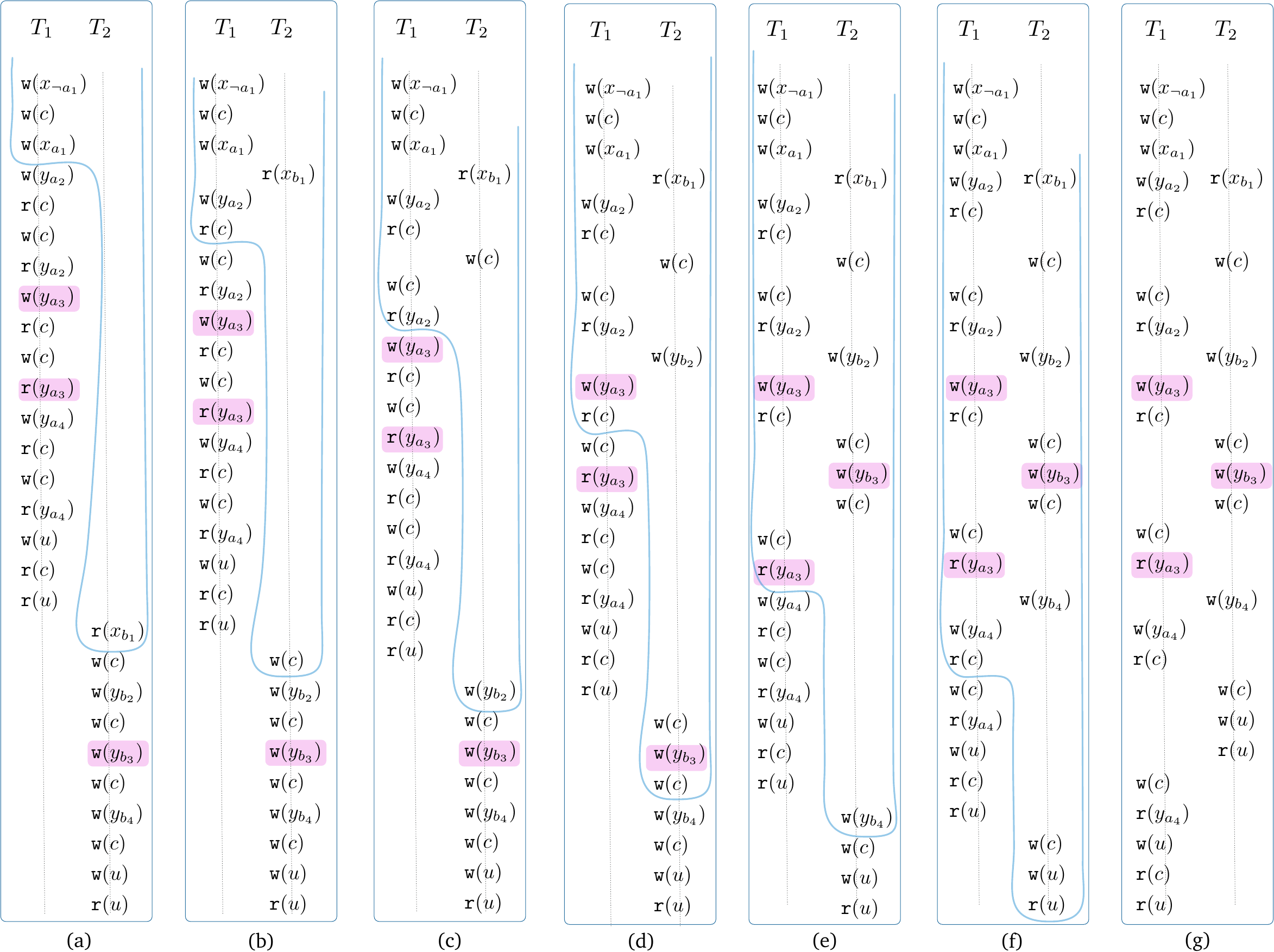}
\caption{Sequence of slice reorderings to transform $\sigma$ (leftmost) to $\rho$ (rightmost) in which the order of events $\ev{t_1, \wt(u)}$ and $\ev{t_2, \wt(u)}$ is flipped.}
\figlabel{repeated-slice-reordering-hardness}
\end{figure*}

\myparagraph{Case $\bar a \neq \bar b$}
Now assume $w \notin L_n$.
Let $i$ be the smallest index such that $a_i \neq b_i$.
We show that there exists a run $\rho$ with $\sliceqto{\sigma}{\rho}$
in which $e_2$ appears before $e_1$.
We sketch the construction; an example for $n=4$ and $i=3$
is depicted in \figref{repeated-slice-reordering-hardness}.

We construct a sequence of runs
\[
  \gamma_0, \gamma_1, \dots, \gamma_{2n}
\]
such that
\[
  \gamma_0 = \sigma,\quad
  \gamma_{2n} = \rho,\quad\text{and}\quad
  \gamma_j \sliceto{}{} \gamma_{j+1}
  \text{ for all } 0 \le j < 2n.
\]

\emph{For each $k < i$}, we use two sliced reordering steps to ``bubble''
the fragment $\eta_k$ upwards:
first slice-move the first event of $\eta_k$ to immediately after the
last event of $\pi_k$, and then slice-move the corresponding $\wt(c)$
of $\eta_k$ to the appropriate position after a $\rd(c)$ in $\pi_{k+1}$.
In each of these steps the moved event is pushed \emph{later} in the
total order, and never across a read on the same location, so the
last-write-before-read on that location is preserved.
Hence $\rf{}$ is unchanged, and both steps are valid sliced reorderings.
Repeating this for all $k < i$ yields $\gamma_{2(i-1)}$.

\emph{At index $i$}, we perform one slightly larger slice:
we take the entire fragment $\eta_i$ together with the $\wt(c)$
of $\eta_{i-1}$ and move this block so that it sits between the
events $\ev{t_1,\wt(y_{a_i})}$ and $\ev{t_1,\rd(y_{a_i})}$ of $\pi_i$.
Because $a_i \neq b_i$, we have $y_{a_i} \neq y_{b_i}$,
so inserting $\wt(y_{b_i})$ between $\wt(y_{a_i})$ and $\rd(y_{a_i})$
does not change the last-write-before-read on $y_{a_i}$.
Similarly, the $c$-events are only shifted within a region where they
do not cross any read that would change their source.
Thus $\rf{}$ is again preserved, and this is a valid sliced reordering,
yielding $\gamma_{2(i-1)+1}$.

\emph{For each $k > i$}, we resume the same two-step bubbling pattern
as for $k < i$, successively moving the events of $\eta_k$ upward
and interleaving them with the $\pi_k$’s.
The argument that each move preserves $\rf{}$ is identical to the
$k < i$ case.
After processing all $k$, we reach $\gamma_{2n-1}$, in which the
$x$-, $y$-, and $c$-events of $t_2$ have been interleaved with
those of $t_1$ in a controlled fashion, while $\rf{}$ remains the same
as in $\sigma$.

Finally, from $\gamma_{2n-1}$ we perform one more sliced reordering
step: we take the $u$-block $\delta = \ev{t_2,\wt(u)}\cdot\ev{t_2,\rd(u)}$
as a slice, and move it so that it appears immediately before the
$u$-block $\kappa$ of $t_1$.
Since there are no writes to $u$ other than in $\kappa$ and $\delta$,
and the two reads of $u$ still see exactly their original writes,
this step also preserves $\rf{}$.
We obtain a run $\rho = \gamma_{2n}$ in which $e_2$ precedes $e_1$
in the total order, and $\sliceqto{\sigma}{\rho}$ holds by construction.

\myparagraph{Conclusion}
We have described a one-pass constant-space transducer that maps
an input word $w$ to a run $\sigma$ such that:
\begin{itemize}
  \item if $w \in L_n$, then in every run $\rho$ with $\sliceqto{\sigma}{\rho}$
        the events $e_1$ and $e_2$ appear in the same order, and
  \item if $w \notin L_n$, then there exists a run $\rho$ with
        $\sliceqto{\sigma}{\rho}$ in which $e_2$ appears before $e_1$.
\end{itemize}
Thus any streaming algorithm that decides whether such a $\rho$ exists
must use $\Omega(n)$ space, by the linear space lower bound for recognizing $L_n$.
This completes the proof.

Next, since $L_n$ admits the space-time tradeoff bound (i.e., the produce of time and space usage of any algorithm for checking membership in $L_n$ be atleast $\Omega(n^2)$), we also get the same bound for our problem.
\end{proof}


\section{Proofs from \secref{stacking-slices}}

Let us now proceed towards the proof of \thmref{sheight}.
Before we do that, we focus on a simple observation:

\depthDrops*

\begin{proof}
Since $\tr \rfeq \rho$ and $|\tr|=|\rho|=n$, we can (and do) fix the permutation
$\pi:[n]\to[n]$ such that the $i^{\text{th}}$ event of $\rho$ is exactly the
$\pi(i)^{\text{th}}$ event of $\tr$.

Recall that $\sheight{\tr}{\rho}$ is the least $k$ for which $\rho$ can be written as
\[
  \rho = \rho_1 \cdot \rho_2 \cdots \rho_k
\]
where each $\rho_j$ is a subsequence of $\tr$, and the $\rho_j$'s are pairwise disjoint
(i.e., they form a partition of the events of $\tr$).

\medskip\noindent
\textbf{Claim 1 (drops force boundaries).}
If $\rho=\rho_1\cdots \rho_k$ is such a $k$-slice decomposition, then for every
drop position $i\in D$ we must have that $i$ is a boundary between two consecutive slices.
Consequently, $k \ge |D|+1$.

\smallskip\noindent
\emph{Proof of Claim 1.}
Fix $i\in D$, so $\pi(i)>\pi(i+1)$.
Suppose for contradiction that the two consecutive events $\rho[i]$ and $\rho[i+1]$
belong to the same slice, say $\rho_j$.
Because $\rho_j$ is a subsequence of $\tr$, the order of its events in $\rho_j$
(and hence in $\rho$) must agree with their order in $\tr$.
Thus the position in $\tr$ of $\rho[i]$ must be strictly smaller than the position in $\tr$
of $\rho[i+1]$, i.e., $\pi(i)<\pi(i+1)$, contradicting $\pi(i)>\pi(i+1)$.
Hence $i$ must be a boundary between slices.
Since distinct drops are distinct boundaries, the number of slices is at least the number of
required boundaries plus one, i.e., $k\ge |D|+1$. \qed

\medskip\noindent
\textbf{Claim 2 (cutting at drops gives a valid decomposition).}
Let $D=\{d_1<d_2<\cdots<d_m\}$ where $m=|D|$, and set $d_0:=0$, $d_{m+1}:=n$.
For each $j\in\{1,\ldots,m+1\}$ define $\rho_j$ to be the contiguous block of $\rho$
\[
  \rho_j := \rho[d_{j-1}+1\,..\,d_j].
\]
Then each $\rho_j$ is a subsequence of $\tr$, the blocks are pairwise disjoint, and
$\rho=\rho_1\cdots\rho_{m+1}$. Hence $\sheight{\tr}{\rho}\le |D|+1$.

\smallskip\noindent
\emph{Proof of Claim 2.}
By construction the blocks $\rho_1,\ldots,\rho_{m+1}$ partition $\rho$, so they are disjoint
and concatenate to $\rho$.

It remains to show each $\rho_j$ is a subsequence of $\tr$.
Fix $j$, and consider any consecutive indices $p,p+1$ within the interval
$\{d_{j-1}+1,\ldots,d_j\}$.
By definition of $D$, there is no drop inside this interval; hence for all such $p$,
$\pi(p) < \pi(p+1)$.
By transitivity this implies that along the entire block we have a strictly increasing chain
\[
  \pi(d_{j-1}+1) < \pi(d_{j-1}+2) < \cdots < \pi(d_j).
\]
Therefore, if we look at $\tr$ and pick exactly the events at positions
$\pi(d_{j-1}+1),\pi(d_{j-1}+2),\ldots,\pi(d_j)$, they appear in $\tr$ in that same order,
and they are precisely the events of $\rho_j$ in order.
Thus $\rho_j$ is a subsequence of $\tr$. \qed

\medskip
Combining Claim 1 and Claim 2, we get
\[
  |D|+1 \le \sheight{\tr}{\rho} \le |D|+1,
\]
and hence $\sheight{\tr}{\rho}=|D|+1$.
\end{proof}

The \thmref{sheight} now follows:

\kSliceDepthLinearTime*

\begin{proof}
Follows from \propref{depth-drops}, and the fact that the number of drops can be computed in linear time, as well as reads-from equivalence (i.e., whether $\sigma \rfeq \rho$?) can be checked in linear time.
\end{proof}


\section{Detailed construction and proofs from \secref{algo}}
\applabel{app-algo}

This section serves as a companion for \secref{algo}
where we present 
present proofs of its correctness, and finally present the proof of the hardness result \thmref{post-image-liear-space-hardness}.


\subsection{Proofs from \secref{pre-image-slices}}
\applabel{app-algo-proofs}

\consistencyOfSlices*

\begin{proof}
($\Rightarrow$) 
Consider a pair $(e_i, e_j) \in \po{\tr}$ such that $e_i \in \tr_i$ and $e_j \in \tr_j$.
Since $\rho \rfeq \tr$, we must have $\po{\rho} = \po{\tr}$ and thus $(e_i, e_j) \in \trord{\rho}$. 
This means either they belong to the same subsequence (i.e, $i = j$), and if not,
then $e_i$ appears in an earlier partition, i.e., $i < j$.
Now consider $(e_i, e_j) \in \rf{\tr}$ with $e_i \in \tr_i$ and $e_j \in \tr_j$.
Clearly $(e_i, e_j) \in \rf{\rho}$ and thus $i \leq j$.
Further, consider a conflicting write $e'$ (with $\OpOf{e'} = \wt$ and $\MemOf{e'} = \MemOf{e_i}$)
belonging to subsequence $\tr_\ell$.
First, it cannot be that $i < \ell < j$, otherwise $e'$ intervenes $e_i$ and $e_j$ in $\rho$.
So we must have either $\ell \leq i$ or $j \leq \ell$.
In the former case, if additionally $\ell = i$, then again to ensure that $e'$ does not intervene $e_i$ and $e_j$,
$e'$ must appear before $e_i$ (since within $\tr_i$, the relative order of events does not change), i.e.,
$e' \trord{\tr} e_i$.
In the latter case, a similar reasoning tells us that if $\ell = j$, then 
$e_j \trord{\tr} e'$.

($\Leftarrow$)
First, by construction, $\events{\rho} = \events{\tr}$, so we only have to establish that the order of events in $\rho$
is in accordance to $\po{\tr}$ and $\rf{\tr}$.
Consider two events $e_i, e_j$ such that $(e_i, e_j \in) \po{\tr}$ with $e_i \in \tr_i, e_j \in \tr_j$. 
By condition (1), we have $i \leq j$. If $i = j$, then $e_i$ 
appears earlier than $e_j$ in $\tr_i = \tr_j$ and since the relative order of events does not 
change within $\tr_i$, we have $e_i \trord{\rho} e_j$ and thus $(e_i, e_j) \in \po{\rho}$. 
If $i < j$, then by construction of $\rho = \tr_1 \cdot \tr_2 \cdots \tr_{k+1}$, 
all events in $\tr_i$ appear before all events in $\tr_j$ in $\rho$, so 
$e_i \trord{\rho} e_j$ and $(e_i, e_j) \in \po{\rho}$.
Conversely, consider $(e_i, e_j) \in \po{\rho}$ with $e_i \in \tr_i, e_j \in \tr_j$. 
This means $e_i \trord{\rho} e_j$ and thus by construction of $\rho$, we must have $i \leq j$.
Also $\ThreadOf{e_i} = \ThreadOf{e_j}$ and thus either $(e_i, e_j) \in \po{\tr}$ or $(e_j, e_i) \in \po{\tr}$.
If $i = j$, then since $\ThreadOf{e_i} = \ThreadOf{e_j}$ and the relative order within 
$\tr_i$ is preserved from $\tr$, we have $(e_i, e_j) \in \po{\tr}$. 
If $i < j$, then we cannot have $(e_j, e_i) \in \po{\tr}$ as otherwise condition (1) will be violated, so we must have $(e_i, e_j) \in \po{\tr}$.

Let us now establish $\rf{\tr} = \rf{\rho}$.
Consider a read event $e_\rd \in \events{\tr_i}$. 
We need to show that $\rf{\rho}(e_\rd) = \rf{\tr}(e_\rd)$.
First, consider the case when $\rf{\tr}(e_\rd)$ is not defined.
By condition (2a), for every write event $e'_\wt$ with $\OpOf{e'_\wt} = \wt$ and $\MemOf{e'_\wt} = \MemOf{e_\rd}$ and $e'_\wt \in \events{\tr_\ell}$, we have $\ell \geq i$. 
In $\rho$, since all events from $\tr_j$ with $j < i$ appear before all events from $\tr_i$, and condition (2a) ensures no such writes exist in $\tr_j$ for $j < i$, there is also no write to $\MemOf{e_\rd}$ before $e_\rd$ in $\rho$. Therefore, $\rf{\rho}(e_\rd)$ is also undefined.
Now consider the case when $\rf{\tr}(e_\rd) = e_\wt$ is defined with $e_\wt \in \events{\tr_j}$.
By condition (2b), we have $j \leq i$.
We need to show that $e_\wt$ is the last write to $\MemOf{e_\rd}$ before $e_\rd$ in $\rho$.
First, if $j < i$, then $e_\wt \trord{\rho} e_\rd$ by construction and if $j = i$ then since
order of events inside a given subsequence does not change, yet again we have $e_\wt \trord{\rho} e_\rd$.
Now consider any other write $e'_\wt$ such that
$e_\wt \neq e'_\wt, \OpOf{e'_\wt} = \wt, \MemOf{e'_\wt} = \MemOf{e_\rd}$
with $e'_\wt \in \events{\tr_\ell}$.
We have by condition (2b) that $\ell \leq i \lor \ell \geq j$.
If $\ell < j$ or $\ell > i$, then $e'_\wt$ cannot be in between $e_\wt$ and $e_\rd$ in $\rho$.
So we have two remaining cases:
\begin{itemize} 
    \item If $\ell = i$: By condition (2b)(ii), $e_\rd \trord{\tr} e'_\wt$. 
    Since the relative order within $\tr_i$ is preserved in $\rho$, we have
    $e_\rd \trord{\rho} e'_\wt$ and thus $e'_\wt$ cannot be in between $e_\wt$ and $e_\rd$ in $\rho$.

    \item If $\ell = j$: By condition (2b)(iii), $e'_\wt \trord{\tr} e_\wt$. 
    Since the relative order within $\tr_j$ is preserved in $\rho$, 
    we have  $e'_\wt \trord{\rho} e_\wt$ and thus $e'_\wt$ cannot be in between $e_\wt$ and $e_\rd$ in $\rho$.s
\end{itemize}
Therefore, $e_\wt$ is indeed the last write to $\MemOf{e_\rd}$ before $e_\rd$ in $\rho$, so 
$\rf{\rho}(e_\rd) = e_w = \rf{\tr}(e_\rd)$.
\end{proof}

\begin{proposition}[Prefix-closedness of consistency]
\proplabel{prefix-closed-consistency}
If $\hat{\sigma} \in \hat{L}_{\cnst}$, then for every prefix
$\hat{\gamma} \preceq \hat{\sigma}$ we also have
$\hat{\gamma} \in \hat{L}_{\cnst}$.
\end{proposition}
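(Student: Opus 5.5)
We will prove the statement through the characterization of \lemref{consistency}: an annotated run $\hat{\tr}$ lies in $\hat{L}_{\cnst}$ iff, writing $\tr = h(\hat{\tr})$ and $\tr_i = h(\proj{\hat{\tr}}{i})$, conditions (1), (2a) and (2b) of \lemref{consistency} hold. The plan is to show that each of these conditions is preserved when we pass from $\hat{\tr}$ to a prefix $\hat{\gamma}$. Note that the partition of $\hat{\gamma}$ induced by the annotation is exactly the restriction of the partition of $\hat{\tr}$ to the events of $\gamma = h(\hat{\gamma})$, and that each slice index of an event is unchanged.

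First, we relate the semantic relations of $\gamma$ to those of $\tr$. Program order is clear: $\po{\gamma} = \po{\tr} \cap (\events{\gamma} \times \events{\gamma})$, since $\gamma$ is a prefix. For reads-from, every read $e_\rd \in \events{\gamma}$ satisfies $\rf{\gamma}(e_\rd) = \rf{\tr}(e_\rd)$ (including the case where both are undefined). This holds because the last write before $e_\rd$ in $\tr$ lies before $e_\rd$, hence inside the prefix. Moreover, the set of writes on a given location in $\gamma$ is a subset of those in $\tr$, and the total order $\trord{\gamma}$ is the restriction of $\trord{\tr}$.

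Next, we check the conditions one by one.
\begin{itemize}
\item Condition (1) for $\gamma$ is a direct restriction of condition (1) for $\tr$.
\item For condition (2a), take an orphan read in $\gamma$. It is also an orphan in $\tr$, and the writes to be checked form a subset of those quantified over in $\tr$.
\item For condition (2b), take a read with $\rf{\gamma}(e_\rd) = e_\wt$. Then $\rf{\tr}(e_\rd) = e_\wt$ with the same slice indices $i, j$, so $j \le i$. Every competing write $e'_\wt$ in $\gamma$ is a competing write in $\tr$, so clauses (i)--(iii) transfer verbatim, because $\trord{\gamma}$ agrees with $\trord{\tr}$.
\end{itemize}
Since every condition is universally quantified over events, restricting the domain preserves it. Applying the ($\Leftarrow$) direction of \lemref{consistency} to $\gamma$ then gives $\hat{\gamma} \in \hat{L}_{\cnst}$.

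The only step needing care is the identity $\rf{\gamma} = \rf{\tr}$ on the reads of $\gamma$, in particular for orphan reads. This follows from the definition of $\rf{}$ as the last preceding write, which depends only on the prefix up to the read. A more semantic alternative, which we would not pursue, would be to project $\rho$ onto $\events{\gamma}$ and argue that the projection of an $\rfeq$-equivalent run stays equivalent. That approach fails in general, because writes outside $\gamma$ may sit between a write and its read in $\rho$. This is exactly why the plan routes the argument through the syntactic characterization of \lemref{consistency}, where that problem does not arise.
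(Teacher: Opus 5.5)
Your proof is correct and is essentially the paper's argument. Both route through \lemref{consistency}. Both use the fact that $\po{}$ and $\rf{}$ (including orphanhood) of events in the prefix $\gamma$ coincide with those in $\tr$, because they depend only on earlier events. Both then observe that conditions (1), (2a) and (2b) survive restriction to $\events{\gamma}$.

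One correction to your closing aside, which does not affect the proof: the projection route would in fact work. No same-location write can sit between a read and its source in $\rho$ at all. Every read of $\gamma$ has its $\rf{\rho}$-source (if any) equal to its $\rf{\tr}$-source, which lies in $\gamma$. Deleting events from $\rho$ can only remove writes, never insert one before a read. Hence the projection of $\rho$ onto $\events{\gamma}$, namely $h(\proj{\hat{\gamma}}{1})\cdots h(\proj{\hat{\gamma}}{k+1})$, has the same $\po{}$ and $\rf{}$ as $\gamma$.
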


\begin{proof}
Let $\hat{\sigma} \in \hat{L}_{\cnst}$ and let
$\hat{\gamma} \preceq \hat{\sigma}$ be a prefix.
Let $\sigma = h(\hat{\sigma})$ and $\gamma = h(\hat{\gamma})$.
By definition of $\hat{L}_{\cnst}$, we have
\[
\sigma \rfeq \rho := h(\proj{\hat{\sigma}}{1}) \cdots h(\proj{\hat{\sigma}}{k+1}).
\]

Define $\gamma_i := h(\proj{\hat{\gamma}}{i})$ and
$\rho_\gamma := \gamma_1 \cdots \gamma_{k+1}$.
We show that $\gamma \rfeq \rho_\gamma$ by verifying the two conditions
of \lemref{consistency}.

\emph{Program order.}
Let $(e,f) \in \po{\gamma}$. Then $(e,f) \in \po{\sigma}$ since $\gamma$
is a prefix of $\sigma$.
If $e \in \gamma_i$ and $f \in \gamma_j$, then also
$e \in \proj{\hat{\sigma}}{i}$ and $f \in \proj{\hat{\sigma}}{j}$.
Since $\sigma \rfeq \rho$, \lemref{consistency}(1) yields $i \le j$.

\emph{Reads-from.}
Let $e_{\rd} \in \gamma_i$ be a read event.

If $e_{\rd}$ is orphan in $\gamma$, then it is also orphan in $\sigma$,
since orphanhood depends only on preceding events.
By \lemref{consistency}(2a) for $\sigma$, every write to the same memory
location lies in a slice $\ell \ge i$, and hence the same holds for
writes occurring in $\gamma$.

If $e_{\rd}$ is non-orphan in $\gamma$, let $e_{\wt}$ be its rf-source
in $\gamma$, with $e_{\wt} \in \gamma_j$.
Then $e_{\wt}$ is also the rf-source of $e_{\rd}$ in $\sigma$.
Applying \lemref{consistency}(2b) to $\sigma$ and restricting attention
to events occurring in $\gamma$ yields the same inequalities and
order constraints for $\gamma$.

Thus both conditions of \lemref{consistency} hold for $\gamma$, and hence
$\gamma \rfeq \rho_\gamma$, i.e.\ $\hat{\gamma} \in \hat{L}_{\cnst}$.
\end{proof}

\consistencyRegular*

\begin{proof}
Fix an annotated word $\hat{\sigma}\in\hat{\labs}^*$.
Write $\sigma := h(\hat{\sigma})$.
For each $i\in\{1,\dots,k+1\}$, let $\sigma_i := h(\proj{\hat{\sigma}}{i})$.
By \lemref{consistency}, we have $\hat{\sigma}\in\hat{L}_{\cnst}$ iff the partition
$\{\sigma_i\}_{1\le i\le k+1}$ satisfies \lemref{consistency}(1) and \lemref{consistency}(2).
Therefore, it suffices to show that $\aut_{\cnst}$ accepts $\hat{\sigma}$ iff
$\{\sigma_i\}_{1\le i\le k+1}$ satisfies \lemref{consistency}(1) and \lemref{consistency}(2).

\medskip
\noindent
We prove both directions by induction over prefixes. Throughout, for a prefix $\hat{\pi}$
we denote by $q_{\hat{\pi}}$ the (unique) state reached by $\aut_{\cnst}$ after reading $\hat{\pi}$.
We also use (standard) interval notation: $[a,b) = \{\ell \mid a \le \ell < b\}$ and
$(a,b] = \{\ell \mid a < \ell \le b\}$.

\medskip
\noindent
\textbf{Inductive state invariant.}
For any prefix $\hat{\pi}$ such that $q_{\hat{\pi}}\neq\bot$, writing
$q_{\hat{\pi}} = (\mathsf{T2S}, \mathsf{LastW}, \mathsf{SeenW}, \mathsf{ForbiddenW})$,
we maintain the following invariants:

\begin{enumerate}
\item[(I1)] For each thread $t$, $\mathsf{T2S}(t)$ equals the maximum slice index among
events of thread $t$ occurring in $\hat{\pi}$ (or $0$ if none).

\item[(I2)] For each location $x$, $\mathsf{LastW}(x)$ equals the slice index of the
\emph{last write to $x$} occurring in $\hat{\pi}$ (or $0$ if none).

\item[(I3)] For each location $x$, $\mathsf{SeenW}(x)$ equals the set of slice indices
in which a write to $x$ occurs in $\hat{\pi}$.

\item[(I4)] For each location $x$,
$\mathsf{ForbiddenW}(x)$ equals the union of all slice-intervals contributed by reads of $x$
already seen in $\hat{\pi}$, as follows.
For every read event $r$ on $x$ in $\hat{\pi}$ that is annotated with slice $i_r$,
let $j_r$ be the slice index of the rf-source of $r$ \emph{within the prefix $\hat{\pi}$}
(where $j_r=0$ if $r$ is orphan within $\hat{\pi}$). Then $r$ contributes:
\[
\begin{cases}
[1,i_r) & \text{if } j_r = 0 \text{ (orphan read)}\\
[j_r,i_r) & \text{if } j_r > 0 \text{ (non-orphan read)}
\end{cases}
\]
and $\mathsf{ForbiddenW}(x)$ is the union of all such contributions.
\end{enumerate}

\medskip
\noindent
\textbf{Consistent $\Rightarrow$ Accepted.}
We prove by induction on $|\hat{\pi}|$ the statement:
\[
\hat{\pi}\in \hat{L}_{\cnst}
\implies
q_{\hat{\pi}}\neq\bot \text{ and } q_{\hat{\pi}} \text{ satisfies (I1)--(I4)}.
\]

\smallskip
\noindent
\emph{Base.}
For $\hat{\pi}=\epsilon$, we have $q_{\hat{\pi}} = q^0_{\cnst}\neq\bot$ and (I1)--(I4) hold trivially.

\smallskip
\noindent
\emph{Step.}
Let $\hat{\pi}' = \hat{\pi}\cdot (e,i)$ where $e=\ev{t,op(x)}$, and assume $\hat{\pi}'\in\hat{L}_{\cnst}$.
Since $\hat{L}_{\cnst}$ is prefix closed, we have $\hat{\pi}\in\hat{L}_{\cnst}$.
By IH, $q_{\hat{\pi}}=p\neq\bot$ and $p$ satisfies (I1)--(I4).
We show $\delta_{\cnst}(p,(e,i))\neq\bot$, and that the resulting state satisfies (I1)--(I4).

Write $p=(\mathsf{T2S}_p,\mathsf{LastW}_p,\mathsf{SeenW}_p,\mathsf{ForbiddenW}_p)$.

\emph{(a) Thread monotonicity.}
Suppose for contradiction that $\mathsf{T2S}_p(t) > i$, so the automaton would reject.
By (I1), $\mathsf{T2S}_p(t)$ is the maximum slice index of thread $t$ in the prefix $\hat{\pi}$.
Thus $\hat{\pi}$ contains an event of thread $t$ annotated with slice $>i$ that precedes $(e,i)$ in program order,
violating \lemref{consistency}(1) for the consistent prefix $\hat{\pi}'$. Contradiction.

\emph{(b) Write case.}
Assume $op=\wt$. Suppose for contradiction that $i\in \mathsf{ForbiddenW}_p(x)$, so the automaton would reject.
By (I4), there exists a read $r$ of $x$ already in $\hat{\pi}$ with slice $i_r$
such that $i$ lies in the interval contributed by $r$.

If $r$ is orphan in $\hat{\pi}$, then it contributed $[1,i_r)$ and hence $i<i_r$.
But then $\hat{\pi}'$ contains a write to $x$ in slice $i<i_r$, contradicting
\lemref{consistency}(2a) for the read $r$ in the consistent prefix $\hat{\pi}'$.

If $r$ is non-orphan in $\hat{\pi}$ with rf-source slice $j_r>0$, then it contributed $[j_r,i_r)$,
so $j_r\le i < i_r$.
The new write in slice $i$ is neither in a slice $\le j_r$ nor in a slice $\ge i_r$,
contradicting \lemref{consistency}(2b)(i) for the read $r$ in the consistent prefix $\hat{\pi}'$.

In both cases we contradict consistency of $\hat{\pi}'$, hence $i\notin\mathsf{ForbiddenW}_p(x)$ and no rejection occurs.

\emph{(c) Read case.}
Assume $op=\rd$.
Let $j := \mathsf{LastW}_p(x)$ (the slice of the last write to $x$ in $\hat{\pi}$, by (I2)).
Since $\hat{\pi}'$ is consistent, \lemref{consistency}(2) for this read implies:
(i) the rf-source slice satisfies $j\le i$ (with $j=0$ allowed for orphan),
and (ii) there is no write to $x$ in any slice strictly between $j$ and $i$,
and also no write to $x$ in slice $i$ preceding this read when $j<i$.
By (I3), this is exactly the condition
$\mathsf{SeenW}_p(x)\cap (j,i] = \emptyset$.
Thus neither disjunct in the automaton's read-rejection test can hold, and no rejection occurs.

\smallskip
Having shown $\delta_{\cnst}(p,(e,i))\neq\bot$, the update rules of the automaton immediately preserve (I1)--(I3).
For (I4), note that the only update to $\mathsf{ForbiddenW}$ occurs on reads, and it adds exactly
the interval $[j,i)$ where $j=\mathsf{LastW}_p(x)$ (with $j=0$ corresponding to adding $[1,i)$),
which is precisely the contribution required by the new read in $\hat{\pi}'$.
Hence (I4) is preserved as well.

This completes the induction. In particular, taking $\hat{\pi}=\hat{\sigma}$,
we conclude that if $\hat{\sigma}\in\hat{L}_{\cnst}$ then $\aut_{\cnst}$ accepts $\hat{\sigma}$.

\medskip
\noindent
\textbf{Accepted $\Rightarrow$ Consistent.}
We prove by induction on $|\hat{\pi}|$ the statement:
\[
q_{\hat{\pi}}\neq\bot
\implies
\hat{\pi}\in \hat{L}_{\cnst} \text{ and } q_{\hat{\pi}} \text{ satisfies (I1)--(I4)}.
\]

\smallskip
\noindent
\emph{Base.}
For $\hat{\pi}=\epsilon$, we have $q_{\hat{\pi}}=q^0_{\cnst}\neq\bot$ and $\epsilon\in\hat{L}_{\cnst}$, and (I1)--(I4) hold.

\smallskip
\noindent
\emph{Step.}
Let $\hat{\pi}'=\hat{\pi}\cdot(e,i)$ with $e=\ev{t,op(x)}$, and suppose $q_{\hat{\pi}'}\neq\bot$.
Then also $q_{\hat{\pi}}\neq\bot$ (since $\bot$ is a sink).
By IH, $\hat{\pi}\in\hat{L}_{\cnst}$ and $p:=q_{\hat{\pi}}$ satisfies (I1)--(I4).
We show $\hat{\pi}'\in\hat{L}_{\cnst}$ by checking \lemref{consistency}(1) and \lemref{consistency}(2)
for the partition induced by slice annotations on the prefix $\hat{\pi}'$.

\emph{(1) PO alignment.}
Since $q_{\hat{\pi}'}\neq\bot$, the transition did not reject on the thread check,
hence $\mathsf{T2S}_p(t)\le i$.
By (I1), this means the new event's slice index does not decrease along thread $t$,
so \lemref{consistency}(1) continues to hold in $\hat{\pi}'$.

\emph{(2) RF alignment for the new event.}
If $op=\wt$, we must show that appending this write does not break \lemref{consistency}(2)
for any earlier read in $\hat{\pi}$.
Since $q_{\hat{\pi}'}\neq\bot$, the transition did not reject on the write check, so
$i\notin \mathsf{ForbiddenW}_p(x)$.
By (I4), this means that for every earlier read $r$ of $x$ in $\hat{\pi}$,
the slice $i$ does not lie in the interval forbidden by $r$.
Equivalently, appending this write in slice $i$ cannot violate \lemref{consistency}(2a) (if $r$ is orphan)
or \lemref{consistency}(2b) (if $r$ is non-orphan), for any such $r$.
Thus \lemref{consistency}(2) remains true for all reads already in the prefix.

If $op=\rd$, let $j:=\mathsf{LastW}_p(x)$.
Since $q_{\hat{\pi}'}\neq\bot$, the transition did not reject on the read check, so
$j\le i$ and $\mathsf{SeenW}_p(x)\cap (j,i]=\emptyset$.
By (I2)--(I3), this exactly enforces the constraints required by \lemref{consistency}(2)
for this new read in the prefix $\hat{\pi}'$ (orphan when $j=0$, and non-orphan when $j>0$).
Hence \lemref{consistency}(2) holds for the new read as well.

Therefore both \lemref{consistency}(1) and \lemref{consistency}(2) hold for $\hat{\pi}'$,
so $\hat{\pi}'\in\hat{L}_{\cnst}$.

Finally, preservation of (I1)--(I4) follows exactly as in Direction~1, by inspecting the updates.

This completes the induction. Taking $\hat{\pi}=\hat{\sigma}$,
if $\aut_{\cnst}$ accepts $\hat{\sigma}$ then $\hat{\sigma}\in\hat{L}_{\cnst}$.

\medskip
Combining both directions yields $L(\aut_{\cnst})=\hat{L}_{\cnst}$.
\end{proof}

\membershipRegular*

\begin{proof}
The proof of correctness follows from an inductive argument that establishes the following
invariant after each prefix of the word seen so far:

\begin{claim}
\claimlabel{membership-automaton-invariant}
Let $\hat{\tr} \in \hat{\labs}^*$ be an annotated execution and let $\hat{\pi}$ be a prefix of $\hat{\tr}$. 
Let $q_{\hat{\pi}} = (\trnsmemb)^*(\initmemb, \hat{\pi})$ be the state reached after reading prefix $\hat{\pi}$.

For every slice index $i \in \{1, 2, \ldots, k+1\}$ and every state $p \in \states$:
\[q_{\hat{\pi}}(i, p) = \trns^*(p, h(\proj{\hat{\pi}}{i}))\]

That is, $q_{\hat{\pi}}(i, p)$ is precisely the state that the original automaton $\aut$ reaches when starting from state $p$ and reading the string $h(\proj{\hat{\pi}}{i})$ (the $i$-th projection of the prefix $\hat{\pi}$ with annotations removed).

Furthermore, $\hat{\tr} \in \hat{L}_{\memb}$ if and only if the final state $q_{\hat{\tr}}$ satisfies the acceptance condition: there exists a sequence of states $p_1, p_2, \ldots, p_{k+1} \in \states$ such that $p_1 = q_{\hat{\tr}}(1, \init)$, for every $1 \leq i \leq k$, $p_{i+1} = q_{\hat{\tr}}(i+1, p_i)$, and $p_{k+1} \in \accpt$.
\end{claim}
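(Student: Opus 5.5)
We prove the first part of \claimref{membership-automaton-invariant} by induction on the length of the prefix $\hat{\pi}$, and then derive the acceptance characterization from it.

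\emph{Base case.} For $\hat{\pi} = \epsilon$, the state is $\initmemb$, which satisfies $\initmemb(i,p) = p$. Since every projection $\proj{\epsilon}{i}$ is empty, we have $\trns^*(p, \epsilon) = p$, so the invariant holds.

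\emph{Inductive step.} Let $\hat{\pi}' = \hat{\pi} \cdot (a, i)$ and write $q = q_{\hat{\pi}}$ and $q' = \trnsmemb(q, (a,i))$. The projections change in a simple way: $\proj{\hat{\pi}'}{i} = \proj{\hat{\pi}}{i} \cdot (a,i)$, and $\proj{\hat{\pi}'}{j} = \proj{\hat{\pi}}{j}$ for every $j \neq i$. Applying $h$ to the first equation gives $h(\proj{\hat{\pi}'}{i}) = h(\proj{\hat{\pi}}{i}) \cdot a$.

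For $j = i$, the transition rule together with the induction hypothesis gives
\[q'(i,p) = \trns(q(i,p), a) = \trns(\trns^*(p, h(\proj{\hat{\pi}}{i})), a).\]
By the usual inductive definition of the extended transition function $\trns^*$, the right-hand side equals $\trns^*(p, h(\proj{\hat{\pi}'}{i}))$.

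For $j \neq i$, the transition rule leaves the component unchanged, so $q'(j,p) = q(j,p)$. This matches the unchanged projection, and the invariant follows directly from the induction hypothesis. The only point requiring care is the convention that input symbols are pairs of a label and a slice index; I read the paper's $(i,a)$ and $(a,i)$ as the same symbol.

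\emph{Acceptance characterization.} Apply the invariant to $\hat{\pi} = \hat{\tr}$ and write $w_i = h(\proj{\hat{\tr}}{i})$. The sequence required by the acceptance condition is uniquely determined: $p_1 = q_{\hat{\tr}}(1, \init)$ and $p_{i+1} = q_{\hat{\tr}}(i+1, p_i)$. By the invariant, $p_1 = \trns^*(\init, w_1)$ and $p_{i+1} = \trns^*(p_i, w_{i+1})$. Chaining these equalities with the identity $\trns^*(p, uv) = \trns^*(\trns^*(p,u), v)$ and inducting on $i$ yields
\[p_{i} = \trns^*(\init, w_1 \cdots w_i).\]
In particular, $p_{k+1}$ is the state that $\aut$ reaches on $w_1 \cdots w_{k+1}$.

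Since $\aut$ is a DFA, that word is in $L$ exactly when $p_{k+1} \in \accpt$. By the definition of $\hat{L}_{\memb}$, this is exactly the condition $\hat{\tr} \in \hat{L}_{\memb}$, which is the claimed equivalence.

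\emph{Conclusion.} By the claim, $q_{\hat{\tr}} \in \accptmemb$ if and only if $\hat{\tr} \in \hat{L}_{\memb}$, so $L(\autmemb) = \hat{L}_{\memb}$. Since $\autmemb$ has finitely many states, $\hat{L}_{\memb}$ is regular, which proves \lemref{membership-regular}.
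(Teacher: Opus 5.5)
Your proof is correct and follows the route the paper indicates. The paper only says the invariant follows by a straightforward induction on prefixes and omits it; your induction supplies those details, namely that reading $(a,i)$ appends $a$ to exactly the $i$-th projection, and that the uniquely determined chain $p_1,\ldots,p_{k+1}$ unrolls to $p_{k+1} = \trns^*(\init, w_1\cdots w_{k+1})$ using determinism of $\aut$, which matches the paper's DFA-based construction.
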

The above invariant can be established through a straightforward induction proof and is skipped.

\end{proof}

\begin{claim}
\claimlabel{forward-closure-intersection-homomorphism}
Let $L$ be a regular language and $k \in \natsp$.
Let $\hat{L}_{\cnst\land \memb}$ be as defined in \secref{pre-image-slices}.
We have $\pre{L}{\kslice{k}} = h(\hat{L}_{\cnst\land \memb})$.
\end{claim}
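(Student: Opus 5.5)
We prove the two inclusions separately, unfolding the definitions of $\pre{L}{\kslice{k}}$, $\hat{L}_{\cnst}$ and $\hat{L}_{\memb}$. The only tool needed is a correspondence between annotations $\hat{\tr}$ with $h(\hat{\tr}) = \tr$ and ordered partitions of $\tr$ into $k+1$ disjoint subsequences. Given a partition $\tr_1,\ldots,\tr_{k+1}$ of $\events{\tr}$, we annotate each event $e$ of $\tr$ with the unique index $i$ such that $e \in \events{\tr_i}$. This yields a word $\hat{\tr}$ with $h(\hat{\tr}) = \tr$ and $h(\proj{\hat{\tr}}{i}) = \tr_i$ for every $i$, because $\proj{\hat{\tr}}{i}$ lists the events with annotation $i$ in their $\tr$-order, which is exactly the subsequence $\tr_i$. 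Conversely, any $\hat{\tr}$ with $h(\hat{\tr}) = \tr$ induces the partition $\tr_i := h(\proj{\hat{\tr}}{i})$. (Strictly speaking, $h$ acts on letters, and events are identified by their position in $\tr$. The correspondence is stated at the level of positions, so repeated labels cause no ambiguity.)

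($\subseteq$) Let $\tr \in \pre{L}{\kslice{k}}$. Then there is $\rho \in L$ with $\ksliceto{k}{\tr}{\rho}$. By \defref{k-slice-reordering}, there are disjoint subsequences $\tr_1,\ldots,\tr_{k+1}$ of $\tr$ with $\rho = \tr_1\cdots\tr_{k+1}$ and $\tr \rfeq \rho$. Since $\rho$ is a permutation of $\events{\tr}$, these subsequences cover $\tr$, so they form a partition. Let $\hat{\tr}$ be the annotation induced by this partition. Then $h(\proj{\hat{\tr}}{1})\cdots h(\proj{\hat{\tr}}{k+1}) = \rho$.
\begin{itemize}
\item From $h(\hat{\tr}) = \tr \rfeq \rho$ we get $\hat{\tr} \in \hat{L}_{\cnst}$.
\item From $\rho \in L$ we get $\hat{\tr} \in \hat{L}_{\memb}$.
\end{itemize}
Hence $\tr = h(\hat{\tr}) \in h(\hat{L}_{\cnst\land\memb})$.

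($\supseteq$) Let $\hat{\tr} \in \hat{L}_{\cnst}\cap\hat{L}_{\memb}$, and set $\tr := h(\hat{\tr})$, $\tr_i := h(\proj{\hat{\tr}}{i})$ and $\rho := \tr_1\cdots\tr_{k+1}$. Each $\tr_i$ is a subsequence of $\tr$, and the $\tr_i$ are pairwise disjoint because every position of $\hat{\tr}$ carries exactly one annotation.
\begin{itemize}
\item Membership in $\hat{L}_{\cnst}$ gives $\tr \rfeq \rho$, so $\ksliceto{k}{\tr}{\rho}$.
\item Membership in $\hat{L}_{\memb}$ gives $\rho \in L$.
\end{itemize}
Therefore $\tr \in \pre{L}{\kslice{k}}$.

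Neither direction uses \lemref{consistency} or the automata; their role is only to show that the two languages on the right-hand side are regular. The one point requiring care is the bookkeeping above: the concatenation of the projections, read on events rather than bare letters, must be exactly the $\rho$ from the definition. With that identification made explicit, both inclusions are immediate.
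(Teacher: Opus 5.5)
Your proof is correct and follows the same route as the paper's: in both directions you build the annotation from the partition, or read the partition off the annotation, and then check membership in $\hat{L}_{\cnst}$ and $\hat{L}_{\memb}$. The only difference is that the paper justifies the $\hat{L}_{\cnst}$ memberships through the alignment conditions of \lemref{consistency}, whereas you read them directly off the definition of $\hat{L}_{\cnst}$; your version is cleaner and entirely sufficient.
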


\begin{proof}
We show both directions of the equality.

\noindent ($\subseteq$) Let $\tr \in \pre{L}{\kslice{k}}$. By definition, there exists $\rho$ such that $\tr \kslice{k} \rho$ and $\rho \in L$. By Definition~\ref{def:k-slice-reordering}, there exist disjoint subsequences $\tr_1, \tr_2, \ldots, \tr_{k+1}$ of $\tr$ such that $\rho = \tr_1 \cdot \tr_2 \cdots \tr_{k+1}$ and $\tr \rfeq \rho$.

Define an annotation $\hat{\tr}$ by assigning each event $e \in \events{\tr_i}$ the annotation $i$. Then:
\begin{itemize}
\item $h(\hat{\tr}) = \tr$ (removing annotations gives back the original execution)
\item $\hat{\tr} \in \hat{L}_{\cnst}$ because the subsequences $\tr_1, \ldots, \tr_{k+1}$ satisfy the alignment conditions of Lemma~\ref{lem:consistency} (since $\tr \rfeq \rho$)
\item $\hat{\tr} \in \hat{L}_{\memb}$ because $h(\proj{\hat{\tr}}{1}) h(\proj{\hat{\tr}}{2}) \cdots h(\proj{\hat{\tr}}{k+1}) = \tr_1 \cdot \tr_2 \cdots \tr_{k+1} = \rho \in L$
\end{itemize}

Therefore $\hat{\tr} \in \hat{L}_{\cnst} \cap \hat{L}_{\memb}$ and $\tr = h(\hat{\tr}) \in h(\hat{L}_{\cnst} \cap \hat{L}_{\memb})$.

\noindent ($\supseteq$) Let $\tr \in h(\hat{L}_{\cnst} \cap \hat{L}_{\memb})$. Then there exists $\hat{\tr} \in \hat{L}_{\cnst} \cap \hat{L}_{\memb}$ such that $h(\hat{\tr}) = \tr$.

Since $\hat{\tr} \in \hat{L}_{\cnst}$, by Lemma~\ref{lem:consistency}, the subsequences $\tr_1 = h(\proj{\hat{\tr}}{1}), \ldots, \tr_{k+1} = h(\proj{\hat{\tr}}{k+1})$ satisfy the alignment conditions, which means $\tr \rfeq \tr_1 \cdot \tr_2 \cdots \tr_{k+1}$.

Since $\hat{\tr} \in \hat{L}_{\memb}$, we have $h(\proj{\hat{\tr}}{1}) h(\proj{\hat{\tr}}{2}) \cdots h(\proj{\hat{\tr}}{k+1}) = \tr_1 \cdot \tr_2 \cdots \tr_{k+1} \in L$.

Let $\rho = \tr_1 \cdot \tr_2 \cdots \tr_{k+1}$. Then $\tr \kslice{k} \rho$ and $\rho \in L$, so $\tr \in \pre{L}{\kslice{k}}$.
\end{proof}

\forwardClosureRegular*

\begin{proof}
Follows from \lemref{consistency-regular}, \lemref{membership-regular}, \claimref{forward-closure-intersection-homomorphism}, and the fact that regular languages are closed under homomorphism.
\end{proof}


\subsection{Proofs from \secref{post-image-slices}}

\postImageLinearSpace*

\begin{proof}
The proof of \thmref{post-image-liear-space-hardness}
can be derived from the proof of linear space hardness result in context of the causal
concurrency question~\cite[Theorem 3.1]{FarzanMathur2024} under reads-from equivalence.
The input to this causal concurrency question is an execution $\tr$ together
with two distinctly marked events $\alpha$ and $\beta$ in it with $\alpha \trord{\tr} \beta$, 
and the output is YES iff there is a run $\rho \rfeq \tr$ such that, in $\rho$, 
the relative order of $\alpha$ and $\beta$ is flipped, i.e., $\beta \trord{\rho} \alpha$.
We leverage the same proof for our purposes.
More precisely, the reduction in~\cite[Theorem 3.1]{FarzanMathur2024} is from the linear-space hard
language:
\begin{align*}
L_n = \setpred{\vect{a}\#\vect{b}}{\vect{a}, \vect{b} \in \set{0,1}^n \text{ and } \vect{a} = \vect{b}}
\end{align*} 
Given a word of the form $w = a_1a_2 \ldots a_n \# b_1 b_2 \ldots b_n \in (0+1)^n\#(0+1)^n$, the reduction,
in constant space, constructs a run $\tr$ that contains exactly two threads $T_1$ and $T_2$ and contains
two events $\alpha = \ev{T_1, \rd(u)}$ and $\beta = \ev{T_2, \wt(u)}$ satisfying $\alpha \trord{\tr} \beta$.
The reduction is such that $w \in L_n$ iff there is a $\rho \rfeq \tr$ such that $\beta \trord{\tr} \alpha$,
or in other words $\rho \in L_{\sf OV}^{\alpha,\beta}$.
We observe that in fact, any such $\rho$ (if one exists) is such that $\sliceto{\rho}{\tr}$, i.e.,
$\tr$ is a slice reordering of $\rho$ because in $\tr$, all events of $T_1$ are ordered before all events of $T_2$, making $\tr$ a sliced reordering of $\rho$.
In other words, we also have $w \in L_n$ iff $\tr \in \post{L}{\slice}$.
Since $\slice \subseteq \kslice{1} \subseteq \kslice{k}$ for every $k$, we also have the more general result:
$w \in L_n$ iff $\tr \in \post{L}{\kslice{k}}$.
But since membership checking in $L_n$ cannot be done by algorithm that uses sub-linear space and
works in a streaming fashion, we have the desired lower bound.
\end{proof}

\section{Proofs from \secref{frontier}}

\postImageETHHardness*

\begin{proof}
To show the parametrized hardness result, we show a reduction
from INDEPENDENT-SET(c), which is known to be W[1]-hard 
for the parameter $c$ (size of the independent set).

\myparagraph{INDEPENDENT-SET(c)} Given an undirected graph $G = (V, E)$,
check if $G$ has an independent set of size  $c$, i.e., whether
there is a subset $S \subseteq V$ s.t. $|S| = c$ and for every $(u, v) \in E$,
$\set{u, v} \nsubseteq S$.

\myparagraph{Reduction}
We start with a graph $G = (V, E)$ and parameter $c$
and construct a run $\tr$ with $O(|V|+|E|)$ events
over $O(c)$ threads $\threads = \set{t_1, t_2, \ldots, t_{2c+2}}$ 
and $O(c)$ memory locations.
Our alphabet $\labs$ is thus of size $O(c)$ as well.
The language whose post-image we are interested in is simple:
$$L = \labs^* \ev{t_{2c+1}, \wt(x)} \ev{t_{2c+2}, \rd(x)} \labs^*.$$
As such, the construction of the run $\tr$ is similar to that in the proof of the W[1]-hardness
result for~\cite[Theorem 2.3]{Mathur2020b}, with a few 
differences.
First, in our setting, we do not have locks as part of the alphabet.
Instead, we rely on memory locations to simulate thread-local critical sections.
For the purpose of this reduction (where reorderings preserve all the events),
it suffices to replace an acquire event ${\tt acq}(\lk)$
(resp. release event ${\tt rel}(\lk)$) over lock $\lk$
with $\wt(x_{\lk})$ (resp. $\rd(x_{\lk})$), where 
$x_{\lk}$ is a distinguished memory location corresponding to the lock $\lk$.
Further, we require that the corresponding read/write pairs thus introduced
are related by reads-from; this ensures mutual exclusion on critical sections
induced by the same lock.
Since this translation is straightforward, in the rest of the description,
we will avoid emphasizing this technicality and instead work with acquire and release events anyway.
The second, and more crucial difference is in the additional 
reasoning we will need to do to ensure that the hardness is in the parameter $k$
(slice height) and not just the number of threads (as in the original reduction).
We cater to this by showing that if $(G, c)$ is a YES instance, then
there is a $\rho \in L$ for which $\ksliceto{k}{\rho}{\tr}$ for $k = 3c + 2 \in \Theta(c)$.
Next, if $(G, c)$ is a NO instance, then there is no $\rho \rfeq \tr$ for which $\rho \in L$;
this is already established in~\cite[Theorem 2.3]{Mathur2020b}, so we will skip
repeating this part of the proof.

While the run $\tr$ we construct is identical to that in~\cite[Theorem 2.3]{Mathur2020b},
for the sake of completeness we give some details here.
Overall $\tr$ has the following form:
\[
	\tr = \tau_{2c+1} \cdot \gamma_{1} \cdot \gamma_{2} \cdots \gamma_{c} \cdot \tau_{2c+2}
\]
Here, $\tau_{2c+1}$ and $\tau_{2c+2}$ respectively contain events of threads
$t_{2c+1}$ and $t_{2c+2}$.
For each $i \in \set{1, \ldots, c}$, the sequence $\gamma_{i}$ is an interleaving of
events of threads $t_{i}$ and $t_{c + i}$.
Informally, the per-thread sequence is identical for the threads $t_1, t_2, \ldots, t_{c}$;
we will denote these by $\tau_{1}, \tau_{2} \ldots \tau_{c}$.
Likewise, the per-thread sequence is identical for the threads 
$t_{c+1}, t_{c+2}, \ldots, t_{2c}$; 
we will denote these by $\tau_{c+1}, \ldots, \tau_{2c}$.
The thread $t_{c+i}$ serves as an auxiliary thread for the main thread $t_i$ 
(for every $i \in \set{1, \ldots, c}$).
Let us now describe each of these components in detail, and omit thread identifiers
when clear from context:
\begin{itemize}
	\item The sequence $\tau_{2c+1}$ is a singleton sequence that writes to $x$:
	\[
		\tau_{2c+1} = \wt(x)
	\]
	\item The sequence $\tau_{2c+2}$ contains a sequence of reads, followed by a nested critical section that contains a read of $x$:
	\[
		\tau_{2c+2} = \rd(s_1) \cdot \rd(s_2) \cdots \rd(s_c) \cdot \acq(\lk_1) \cdots \acq(\lk_c) \cdot \rd(x) \cdot \rel(\lk_c) \cdots \rel(\lk_1)
	\]
	\item The sequence $\gamma_i$ (comprising of events of $t_i$ and $t_{c+i}$)
	together encode the graph.
	Informally, the sequence $\tau_i$ of events of thread $t_i$ is obtained by
	concatenating $n = |V|$ smaller sequences, one for each vertex:
	\[
		\tau_i = \tau_i^1 \cdot \tau_i^2 \cdots \tau_i^n
	\]
	where $\tau_i^j$ encodes the $j^\text{th}$ vertex
	as a critical section on locks $\set{\lk_{\set{j, v}}}_{v \in E_j}$, where 
	$E_j$ denotes the set of neighbors of $j$.
	Inside the critical section of $\tau_i^j$, there are two events:
	a $\wt(y_i)$ and a $\rd(z_i)$, except for $j=1$ and $j=n$.
	That is, for $1 < j < n$, we have:
	\[
		\tau_i^j = \acq_i(\lk_{j, v_1}) \cdots \acq_i(\lk_{j, v_d}) \cdot \wt^j(y_i) \cdot \rd^j(z_i) \cdot \rel_i(\lk_{j, v_d}) \cdots \rel_i(\lk_{j, v_1})
	\] 
	where $v_1, \ldots, v_d$ is an enumeration of $E_j$
	and the subscript $i$ in $\acq_i/\rel_i$ and the superscript $j$ in $\rd^j/\wt^j$
	are just for ease of refering.
	For $j=1$, we have:
	\[
		\tau_i^1 = \acq_i(\lk_{1, v_1}) \cdots \acq_i(\lk_{1, v_d}) \cdot \wt(s_i) \cdot \rd^1(z_i) \cdot \rel_i(\lk_{1, v_d}) \cdots \rel_i(\lk_{1, v_1})
	\] 
	For $j=n$, we have:
	\[
		\tau_i^n = \acq_i(\lk_{n, v_1}) \cdots \acq_i(\lk_{n, v_d}) \cdot \wt^n(y_i) \cdot \rd_i(x) \cdot \rel_i(\lk_{1, v_d}) \cdots \rel_i(\lk_{1, v_1})
	\]
	In thread $t_{c+i}$, we have $n-1$ partitions, where its $j^\text{th}$ partition interleaves in $\gamma_i$ with both $\tau_i^j$ and $\tau_i^{j+1}$.
	The sequence of its events is $\tau_{c+i} = \tau_{c+i}^1 \cdot \tau_{c+i}^2 \cdots \tau_{c+i}^{n-1}$,
	where for every $1 \leq j \leq n-1$, we have:
	\[
		\tau_{c+i}^j = \acq^j(\lk_i) \cdot \wt^j(z_i) \cdot \rd^{j+1}(y_i) \cdot \rel^j(\lk_i) 
	\] 
	where the superscript $j$ in $\acq^j/\rel^j/\rd^j/\wt^j$ are
	for ease of referring. 
	The interleaving $\gamma_i$ is obtained so that,
	for each $i \leq j < n$, the following reads-from constraints
	are met with fewest possible context switching:
	\begin{align*}
	\begin{array}{rcl}
	(\ev{t_{c+i}, \wt^j(z_i)}, \ev{t_{i}, \rd^j(z_i)}) &\in& \rf{\tr} \\
	(\ev{t_{i}, \wt^{j+1}(y_i)}, \ev{t_{c+i}, \rd^{j+1}(y_i)}) &\in& \rf{\tr}
	\end{array}
	\end{align*}
	\end{itemize}
Observe that the number of events is $O(c \cdot (|V| + |E|))$.
The number of threads is $O(c)$, the number of memory locations (which includes locks) is 
$O(|V| + |E|)$.

\myparagraph{Correctness of reduction}
Since membership in the language $L$ essentially reduces to the existence
of a predictive data race on $x$, we omit the proof of the direction
`if $G$ does not have a $c$-sized independent set, then there is no
$\rho \in L$ such that $\rho \rfeq \tr$'.
This stronger statement was already established in~\cite[Theorem~2.3]{Mathur2020b}.

We therefore focus on the more interesting direction.
Assume that $G$ has an independent set
\[
  S = \set{v_1, v_2, \ldots, v_c} \subseteq V
\]
of size $c$, where $v_i$ denotes the vertex selected in the $i$-th copy
of the gadget.
We show that there exists a run $\rho \in L$ such that
$\ksliceto{k}{\rho}{\tr}$ for $k = 4c+2$.

\myparagraph{Decomposition of the blocks $\gamma_i$}
Recall that
\[
  \tr \;=\; \tau_{2c+1} \cdot \gamma_1 \cdot \gamma_2 \cdots \gamma_c \cdot \tau_{2c+2},
\]
where for each $i \in \set{1,\ldots,c}$, the block $\gamma_i$ is a fixed
interleaving of the events of threads $t_i$ and $t_{c+i}$.
Fix $i \in \set{1,\ldots,c}$ and let $v_i$ be the $i$-th vertex in the
chosen independent set $S$.
Write $n = |V|$ and identify vertices of $G$ with $\set{1,\ldots,n}$.

We decompose $\gamma_i$ into three (possibly empty) contiguous substrings
\[
  \gamma_i \;=\; \gamma_i^{\mathsf{start}} \cdot \gamma_i^{\mathsf{mid}} \cdot \gamma_i^{\mathsf{end}},
\]
using cut points defined by distinguished events that always exist.

\smallskip
\noindent\emph{Distinguished events.}
Let $e_i^{\mathsf{main}}$ denote the unique ``selection write'' event in
thread $t_i$ corresponding to vertex $v_i$:
\[
  e_i^{\mathsf{main}} \;=\;
  \begin{cases}
    \ev{t_i,\wt(s_i)} & \text{if } v_i = 1,\\[1mm]
    \ev{t_i,\wt^{v_i}(y_i)} & \text{if } 2 \le v_i \le n.
  \end{cases}
\]
Let $e_i^{\mathsf{aux}}$ denote the auxiliary write in thread $t_{c+i}$
associated with the gadget for $v_i$:
\[
  e_i^{\mathsf{aux}} \;=\;
  \begin{cases}
    \ev{t_{c+i},\wt^{v_i}(z_i)} & \text{if } 1 \le v_i \le n-1,\\[1mm]
    \ev{t_{c+i},\wt^{n-1}(z_i)} & \text{if } v_i = n.
  \end{cases}
\]
Finally, let $e_i^{x}$ denote the unique read of $x$ in thread $t_i$:
\[
  e_i^{x} \;=\; \ev{t_i,\rd_i(x)},
\]
which occurs in the last vertex gadget of $\tau_i$.

\smallskip
\noindent\emph{The three substrings.}
\begin{itemize}
  \item $\gamma_i^{\mathsf{start}}$ is the \emph{unique shortest prefix} of
  $\gamma_i$ that contains both $e_i^{\mathsf{main}}$ and $e_i^{\mathsf{aux}}$.
  Equivalently, it ends at the later of these two events in the total order
  of $\gamma_i$.

  \item $\gamma_i^{\mathsf{mid}}$ is the (possibly empty) substring of
  $\gamma_i$ that begins immediately after $\gamma_i^{\mathsf{start}}$
  and ends at $e_i^{x}$ (inclusive). Equivalently,
  $\gamma_i^{\mathsf{start}} \cdot \gamma_i^{\mathsf{mid}}$ is the shortest
  prefix of $\gamma_i$ containing $e_i^{x}$.

  \item $\gamma_i^{\mathsf{end}}$ is the (possibly empty) suffix of
  $\gamma_i$ consisting of all events occurring strictly after $e_i^{x}$.
\end{itemize}

By construction, the three substrings are uniquely defined, preserve
program order within each thread, and satisfy
$\gamma_i = \gamma_i^{\mathsf{start}} \cdot \gamma_i^{\mathsf{mid}} \cdot \gamma_i^{\mathsf{end}}$.

\myparagraph{Construction of the witness run $\rho$}
We now define the run $\rho$ as the following concatenation:
\[
\begin{aligned}
  \rho \;=\;&
  \gamma_1^{\mathsf{start}} \cdot \gamma_2^{\mathsf{start}} \cdots \gamma_c^{\mathsf{start}}
  \cdot
  \gamma_1^{\mathsf{mid}} \cdot \gamma_2^{\mathsf{mid}} \cdots \gamma_c^{\mathsf{mid}}
  \cdot
  \tau_{2c+1} \cdot \tau_{2c+2}^{\mathsf{init}} \\
  &\cdot
  \gamma_1^{\mathsf{end}} \cdot \gamma_2^{\mathsf{end}} \cdots \gamma_c^{\mathsf{end}}
  \cdot
  \tau_{2c+2}^{\mathsf{end}},
\end{aligned}
\]
where $\tau_{2c+2}^{\mathsf{init}}$ is the prefix of $\tau_{2c+2}$ up to
(and including) the event $\rd(x)$, and $\tau_{2c+2}^{\mathsf{end}}$ is the
remaining suffix.

The ordering above preserves program order within each thread and all
reads-from constraints induced by the construction.
In particular, the write $\wt(x)$ in $\tau_{2c+1}$ precedes the read
$\rd(x)$ in $\tau_{2c+2}$, and no other write to $x$ exists.
Hence $\rho \rfeq \tr$ and $\rho \in L$.

\myparagraph{Bounding the slice height}
Each block $\gamma_i$ contributes at most three subsequences
$\gamma_i^{\mathsf{start}}, \gamma_i^{\mathsf{mid}}, \gamma_i^{\mathsf{end}}$.
Thus the events of $\gamma_1,\ldots,\gamma_c$ contribute at most $3c$
subsequences.
In addition, $\tau_{2c+1}$ contributes one subsequence, and $\tau_{2c+2}$
is split into two subsequences $\tau_{2c+2}^{\mathsf{init}}$ and
$\tau_{2c+2}^{\mathsf{end}}$.
Altogether, $\tr$ is partitioned into at most $3c+3$ subsequences whose
concatenation yields $\rho$.
Therefore,
\[
  \ksliceto{k}{\rho}{\tr}
  \quad\text{holds for}\quad
  k = (3c+3)-1 = 3c+2.
\]

\myparagraph{Time complexity of reduction}
The construction of the run $\tr$ takes time $O(|G|\cdot c)$ since each $\gamma_i$
has size $O(|V|+|E|)$, because for each vertex, we has as many critical
sections as its neighbors.

\end{proof}

\section{Proofs from \secref{discussion}}

\begin{lemma}[Trace and Reads-From Closure Coincide]
\lemlabel{trace-rf-coincide}
Let $L = (ab + \bar{a}\bar{b})^*$
where $a,b,\bar a,\bar b$ are as in \thmref{beyond-trace}.
Then $[L]_{\mazeq} = [L]_{\rfeq}.$
Moreover, for any reordering relation $R$ such that
$\mazeq \subseteq R \subseteq \rfeq$, we have $\pre{L}{R} = \post{L}{R} = [L]_{\mazeq}.$
\end{lemma}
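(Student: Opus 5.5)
The plan is to show that, on words over the four-letter alphabet $\set{a,b,\bar a,\bar b}$, both $\mazeq$ and $\rfeq$ collapse to the same relation: ``same projection onto each thread''. The equality $[L]_{\mazeq}=[L]_{\rfeq}$ then follows. The statement about $R$ comes from a sandwich argument. Throughout I read the hypotheses of \thmref{beyond-trace} as saying that the locations $x,y,\bar x,\bar y$ are pairwise distinct. If, say, $\bar x = x$, a $\bar T$-write to $x$ would conflict with $a$ and the two relations could differ, so this assumption must be made explicit.

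\emph{Step 1: characterizing $\rfeq$ on the relevant words.} First, both closures stay inside $\set{a,b,\bar a,\bar b}^*$. Each relation is length-preserving and preserves the multiset of events, because $\po{}$ is preserved. Take $\tr \in L$ and any $\rho$ with $\rho \rfeq \tr$. Thread $T$ only writes $x$ and $y$, and nobody reads these locations, so $T$ contributes no reads-from edges at all. Every access to $\bar x,\bar y$ is performed by $\bar T$ alone. Hence for every read $e_\rd$ by $\bar T$, both $\rf{\tr}(e_\rd)$ and whether it is an orphan read are determined by the order of $\bar T$'s own events, i.e.\ by $\po{}$ restricted to $\bar T$. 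Consequently, for words over this alphabet, $\po{\tr}=\po{\rho}$ already implies $\rf{\tr}=\rf{\rho}$. So $\tr \rfeq \rho$ iff $\proj{\tr}{T}=\proj{\rho}{T}$ and $\proj{\tr}{\bar T}=\proj{\rho}{\bar T}$, where these are the thread projections. The rest of the alphabet is irrelevant, since $\rfeq$ preserves the set of events.

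\emph{Step 2: characterizing $\mazeq$ on the same words.} Any event of $T$ and any event of $\bar T$ lie on different threads and different locations, so they belong to $\indrel$. By the standard projection lemma of trace theory, two words are $\mazeq$-equivalent iff their projections onto each pair of dependent letters agree. Here the dependent letters are exactly those of the same thread, so this reduces to the same per-thread-projection condition. Combining with Step 1 gives $\tr' \mazeq \rho \iff \tr' \rfeq \rho$ for all $\tr'\in L$, and hence $[L]_{\mazeq}=[L]_{\rfeq}$.

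\emph{Step 3: the sandwich.} Let $\mazeq \subseteq R \subseteq \rfeq$.
\begin{itemize}
\item For $\pre{L}{R}$: if $\tr\in\pre{L}{R}$, there is $\rho\in L$ with $(\tr,\rho)\in R\subseteq \rfeq$. By symmetry of $\rfeq$, $\tr\in[L]_{\rfeq}=[L]_{\mazeq}$. Conversely, if $\tr\mazeq\rho\in L$, then $(\tr,\rho)\in R$, so $\tr \in \pre{L}{R}$.
\item For $\post{L}{R}$: the same argument works with the pair reversed, now using the symmetry of $\mazeq$ and of $\rfeq$.
\end{itemize}
This gives $\pre{L}{R}=\post{L}{R}=[L]_{\mazeq}$.

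The only delicate point is Step 1, specifically making rigorous that reads-from is fully determined by per-thread order. This requires the location-distinctness assumption and a careful treatment of orphan reads of $\bar x,\bar y$. I would handle it by noting that the last write before a read to $\bar x$ in any word is the last $\bar T$-write to $\bar x$ preceding it in $\bar T$'s projection.
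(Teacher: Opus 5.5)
Your proof is correct and follows essentially the same route as the paper. Both restrict to the four-letter alphabet and show that $\rfeq$ and $\mazeq$ each reduce there to equality of the per-thread projections; the paper argues this from all four letters being writes, so $\rf{}$ is empty, and from the absence of cross-thread dependences. Both then sandwich $R$ between $\mazeq$ and $\rfeq$, using symmetry for the post-image.

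Your Step 1 is slightly more general than the paper's, because you do not assume $\bar a,\bar b$ are writes (the statement of \thmref{beyond-trace} leaves their operation unspecified). Your explicit appeal to the projection lemma also makes rigorous the Mazurkiewicz characterization that the paper simply asserts.
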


\begin{proof}
Let $\Gamma = \set{a,b,\bar a,\bar b}$ and note that $L \subseteq \Gamma^*$. 
Observe that if $\rho \in \Gamma^*$ and if $\sigma \sim \rho$ (for any sound reordering relation $\sim$),
then $\sigma \in \Gamma^*$. So it suffices to focus on just the sub-alphabet $\Gamma$.
Indeed, $[L]_{\mazeq} \subseteq \Gamma^*$ and $[L]_{\rfeq} \subseteq \Gamma^*$.

Fix $\sigma,\rho \in \Gamma^*$.
Since all events in $\Gamma$ are writes, both $\sigma$ and $\rho$ contain no reads,
so $\rf{\sigma}=\rf{\rho}=\emptyset$.
Therefore, $\sigma \rfeq \rho$ reduces to equality of the program-order projections:
\[
  \sigma \rfeq \rho
  \quad\Longleftrightarrow\quad
  \sigma\!\downharpoonright_T = \rho\!\downharpoonright_T
  \ \text{ and }\
  \sigma\!\downharpoonright_{\bar T} = \rho\!\downharpoonright_{\bar T}.
\]
On the other hand, under the Mazurkiewicz trace model with the standard
dependence relation, two actions from distinct threads are dependent only if
they conflict on a shared location.
Here, the four writes target pairwise distinct locations $x,y,\bar x,\bar y$,
so there are \emph{no} cross-thread dependences between $T$ and $\bar T$.
Hence trace equivalence on $\Gamma^*$ is also exactly equality of per-thread
projections:
\[
  \sigma \mazeq \rho
  \quad\Longleftrightarrow\quad
  \sigma\!\downharpoonright_T = \rho\!\downharpoonright_T
  \ \text{ and }\
  \sigma\!\downharpoonright_{\bar T} = \rho\!\downharpoonright_{\bar T}.
\]
Combining the two characterizations yields
\[
  \sigma \mazeq \rho \quad\Longleftrightarrow\quad \sigma \rfeq \rho
  \qquad\text{for all } \sigma,\rho \in \Gamma^*.
\]
This immediately gives $[L]_{\mazeq} = [L]_{\rfeq}$.

Now consider a sound equivalence $R$ that subsumes $\mazeq$,
i.e., it satisfies $\mazeq \subseteq R \subseteq \rfeq$.
It is clear that $\pre{L}{R}, \post{L}{R} \subseteq [L]_{\rfeq}$.
First, consider  a $\sigma \in [L]_{\mazeq}$.
Then $\exists \rho \in L$ with
$\sigma \mazeq \rho$, and since $\mazeq \subseteq R$ we get $(\sigma,\rho)\in R$,
hence $\sigma \in \pre{L}{R}$.
Therefore $[L]_{\mazeq} \subseteq \pre{L}{R} \subseteq [L]_{\rfeq}$. 
Symmetrically, consider a $\sigma \in [L]_{\mazeq}$.
Then $\exists \rho \in L$ with
$\sigma \mazeq \rho$, and since $\mazeq \subseteq R$ we get $(\rho,\sigma)\in R$,
hence $\sigma \in \post{L}{R}$.
Therefore $[L]_{\mazeq} \subseteq \post{L}{R} \subseteq [L]_{\rfeq}$.
Together, we have $\pre{L}{R} = \post{L}{R} = [L]_{\mazeq} = [L]_{\rfeq}$.
\end{proof}

Let us now move to the proof of \thmref{beyond-trace}:

\BeyondTraceHardness*

\begin{proof}
At a high level, we show that the membership problem for
$\pre{L}{R}$ (equivalently $\post{L}{R}$) has a one-pass constant-space
reduction from the problem of membership in the following language,
which is known to admit a linear space lower bound in the streaming
setting (here $n \in \nats$):
\[
  L_n
  = \setpred{a_1a_2\cdots a_n \# b_1 b_2 \cdots b_n}
           {\forall i \leq n,\ a_i, b_i \in \set{0,1},\ a_i = b_i}.
\]

By \lemref{trace-rf-coincide}, for any sound reordering relation
$R$ with $\mazeq \subseteq R$, we have
\[
  \pre{L}{R} = \post{L}{R} = [L]_{\mazeq}.
\]
Thus it suffices to prove a streaming lower bound for membership in
$[L]_{\mazeq}$.

\myparagraph{Reduction}
We describe a one-pass constant-space transducer that maps a word
\[
  w = a_1a_2\cdots a_n \# b_1b_2\cdots b_n \in \set{0,1}^n \# \set{0,1}^n
\]
to a word $\sigma \in \labs^*$ such that
\[
  w \in L_n \iff \sigma \in [L]{\mazeq}.
\]

Before the symbol $\#$, the transducer outputs $a$ for each $0$ and $b$
for each $1$; after $\#$, it outputs $\bar a$ for each $0$ and $\bar b$
for each $1$.
Thus
\[
  \sigma \in \{a,b\}^n \{\bar a,\bar b\}^n,
\]
and the transducer clearly operates in one pass using $O(1)$ working
space.

\myparagraph{Case $\bar a = \bar b$}
Assume $w \in L_n$, i.e.\ $a_i = b_i$ for all $i$.
Then $\sigma$ is a concatenation of $n$ blocks, each equal to either
$ab$ or $\bar a \bar b$, and hence $\sigma \in L \subseteq [L]_{\mazeq}$.

\myparagraph{Case $\bar a \neq \bar b$}
Assume $w \notin L_n$, and let $i$ be the smallest index such that
$a_i \neq b_i$.
Then the projection of $\sigma$ to thread $T$ differs, at position $i$,
from the projection to thread $\bar T$ under the correspondence induced
by $L$.
Since trace equivalence preserves per-thread projections, it follows
that $\sigma \notin [L]_{\mazeq}$.

\myparagraph{One-pass space lower bound}
We have shown that the transducer maps $w$ to $\sigma$ such that
\[
  w \in L_n \iff \sigma \in \pre{L}{R}.
\]
Since deterministic one-pass streaming membership for $L_n$ requires
$\Omega(n)$ space, the same lower bound applies to membership in
$\pre{L}{R}$ and $\post{L}{R}$.

\myparagraph{Time--space tradeoff}
The reduction described above is streaming, length-preserving up to
constant factors, and uses constant additional space.
Therefore, any multi-pass streaming algorithm for membership in
$\pre{L}{R}$ (or $\post{L}{R}$) with time $T(n)$ and space $S(n)$
can be composed with this reduction to obtain a multi-pass streaming
algorithm for $L_n$ with asymptotically the same resource bounds.

Since membership in $L_n$ admits the time--space tradeoff lower bound
\[
  S(n) \cdot T(n) \in \Omega(n^2),
\]
the same bound holds for membership in $\pre{L}{R}$ and $\post{L}{R}$.
This completes the proof.
\end{proof}

\end{document}